\newcommand{\oneplusone}{(1+1)~EA\xspace}
\newcommand{\RLS}{RLS\xspace}
\newcommand{\EBEAE}{(1+1)~EA$_{e}$\xspace}
\newcommand{\ab}{\hspace{0.125em}}                        % 1/8 em space
\newcommand{\ie}{\hbox{i.\ab e.}\xspace}                  % i.e.
\newcommand{\eg}{\hbox{e.\ab g.}\xspace}                  % i.e.
\newcommand{\etal}{et al.\ }
\newtheorem{theorem}             {Theorem}
\newtheorem{lemma}      [theorem]{Lemma}
\newtheorem{corollary}  [theorem]{Corollary}
\title{Runtime Analysis of RLS and (1+1) EA for the Dynamic Weighted Vertex Cover Problem}
\author{Mojgan Pourhassan\thanks{Corresponding author}}
\author{Vahid Roostapour}
\author{Frank Neumann}
\affil{\small Optimisation and Logistics, School of Computer Science \\The University of Adelaide, SA, Australia\\ \{mojgan.pourhassan,vahid.roostapour,frank.neumann\}@adelaide.edu.au}
\providecommand{\keywords}[1]{\textbf{\textit{Keywords---}} #1}
\begin{document}

\maketitle

\begin{abstract}

	In this paper, we perform theoretical analyses on the behaviour of an evolutionary algorithm and a randomised search algorithm for the dynamic vertex cover problem based on its dual formulation. The dynamic vertex cover problem has already been theoretically investigated to some extent and it has been shown that using its dual formulation to represent possible solutions can lead to a better approximation behaviour. We improve some of the existing results, \ie we find a linear expected re-optimization time for a \oneplusone to re\nobreakdash-discover a $2$\nobreakdash-approximation when edges are dynamically deleted from the graph. 
	Furthermore, we investigate a different setting for applying the dynamism to the problem, in which a dynamic change happens at each step with a probability $P_D$.
	We also expand these analyses to the weighted vertex cover problem, in which weights are assigned to vertices and the goal is to find a cover set with minimum total weight. Similar to the classical case, the dynamic changes that we consider on the weighted vertex cover problem are adding and removing edges to and from the graph. %For both dynamic settings, we prove that two randomised search heuristics, a randomised local search and a \oneplusone, starting with a 2\nobreakdash-approximate solution, can find a solution that is 2\nobreakdash-approximate for the new situation in expected pseudo-polynomial time. For the second setting, we also consider starting from a solution in which all the weights are set to 0.
	We aim at finding a maximal solution for the dual problem, which gives a 2-approximate solution for the vertex cover problem. This is equivalent to the maximal matching problem for the classical vertex cover problem.
\end{abstract}
\keywords{Dynamic Vertex Cover Problem, Weighted Vertex Cover Problem, Local Search, \oneplusone, Combinatorial Optimisation}

\section{Introduction}
Evolutionary algorithms~\cite{eiben2003introduction}  and other bio-inspired algorithms have been widely applied to combinatorial optimization problems. They are easy to implement and have the ability to adapt to changing environments. Because of this, evolutionary algorithms have been widely applied to dynamic optimization problems~\cite{branke2012evolutionary,nguyen2012continuous}. Most studies in this area consider dynamically changing fitness functions~\cite{nguyen2012evolutionary}. However, often resources such as the number of trucks available in vehicle routing problems may change over time while the overall goal function, e.g. maximize profit or minimize cost, stays the same.

Evolutionary algorithms for solving dynamic combinatorial optimization pro-blems have previously been theoretically analysed in a number of articles~\cite{DrosteDynamic, WittDynamic,NeumannWitt2015DynamicMakespan,UsDVCGecco2015, FengLinearFunctionDynamicConstraintsGecco2017, StanhopeDynamic}. Different analyses may consider the impact of different parameters such as diversity, frequency or magnitude of the changes on the performance of evolutionary algorithms \cite{oliveto2015analysis,rohlfshagen2009dynamic}. Some of the classical problems that have been investigated in the dynamic context are the OneMax problem, the makespan scheduling problem and the vertex cover problem~\cite{DrosteDynamic, WittDynamic,NeumannWitt2015DynamicMakespan,UsDVCGecco2015}. In a recent work~\cite{FengLinearFunctionDynamicConstraintsGecco2017}, the behaviour of evolutionary algorithms on linear functions under dynamically changing constraints is investigated. 

We contribute to this area of research by investigating the (weighted) vertex cover problem in terms of its dual formulation which becomes a maximal matching problem. The vertex cover problem has the constraint that all edges have to be covered by a feasible solution. We investigate the behaviour of evolutionary algorithms when this constraint changes through the addition and removal of edges.
In~\cite{UsDVCGecco2015} the vertex cover problem is considered with a simple dynamic setting where the rate of dynamic changes is small enough, so that the studied algorithms can re-optimize the problem after a dynamic change, before the following change happens. We call this dynamic setting \emph{One-time Dynamic Setting}.
The article by Droste~\cite{DrosteDynamic} on the OneMax problem presents another setting for dynamically changing problems, where a dynamic change happens at each step with probability $p'$. We call this dynamic setting \emph{Probabilistic Dynamic Setting}. In that article, the maximum rate of dynamic changes is found such that the expected optimization time of \oneplusone remains polynomial for the studied problem. In his analyses the goal is to find a solution which has the minimum Hamming distance to an objective bit-string and one bit of the objective bit-string changes at each time step with a probability $p'$; which results in the dynamic changes of the fitness function over time. The author of that article has proved that the \oneplusone has a polynomial expected runtime if $p'=O(\log(n)/n)$, while for every substantially larger probability the runtime becomes superpolynomial. The results of that article hold even if the expected re-optimization time of the problem is larger than the expected time until the next dynamic change happens. K\"{o}tzing \etal\cite{WittDynamic} have reproved some of the results of~\cite{DrosteDynamic} using the technique of drift analysis, and have extended the work to search spaces with more than two values for each dimension. Furthermore, they analyse how closely their investigated algorithm can track the dynamically moving target over time.

In this paper, we consider both dynamic settings and analyse two simple randomised algorithms on the vertex cover problem. This paper is an extension to a conference paper~\cite{DynamicVertexCoverFOCI2017}, in which the classical vertex cover problem had been investigated. Here, we expand those analyses to the weighted vertex cover problem, where integer weights are assigned to vertices, and the goal is to find a set of vertices with minimum weight that covers all the edges. 

Different variants of the classical randomised local search algorithm (\RLS) and \oneplusone have previously been investigated for the static vertex cover problem in the context of approximations. This includes a node-based representation examined in \cite{DBLP:journals/ec/FriedrichHHNW10,KratschFrank2013VertexCover,DBLP:journals/tec/OlivetoHY09, PPSN2016WeightedVCP} as well as a different edge-based representation analysed in~\cite{DBLP:conf/foga/JansenOZ13} and a generalization of that for the weighted vertex cover problem analysed in~\cite{Us2017FogaDualVC}. 

For the dynamic version of the problem, 
three variants of those randomised search heuristics have been investigated in~\cite{UsDVCGecco2015}. The investigated variants include an approach with the classical node-based representation in addition to two approaches with edge-based representation introduced in~\cite{DBLP:conf/foga/JansenOZ13}. One of the edge-based approaches uses a standard fitness function, while the other one uses a fitness function that gives a large penalty for adjacent edges. The latter approach finds a $2$\nobreakdash-approximation from scratch in expected time $O(m\log m)$~\cite{DBLP:conf/foga/JansenOZ13}, where $m$ is the number of edges. Having the large penalty for adjacent edges in that approach results in finding a maximal matching, which induces a $2$\nobreakdash-approxi\-mate vertex cover. 
Considering the dynamic version of the problem where a solution that is a maximal matching is given before the dynamic change, Pourhassan \etal\cite{UsDVCGecco2015} have proved that the RLS re\nobreakdash-optimises the solution in expected time $O(m)$ \ie after a dynamic change, it takes expected time $O(m)$ to recompute a 2\nobreakdash-approximate solution. They also proved that \oneplusone manages to maintain the quality of $2$\nobreakdash-approximation in expected time $O(m)$ when the dynamic change is adding an edge. But for edge deletion, the expected time $O(m \log m)$ was obtained, which is the same as the expected time of finding a $2$\nobreakdash-approximate solution from scratch.

In this paper we improve the upper bound on the expected time that \oneplusone with the third approach requires to re\nobreakdash-optimise the $2$\nobreakdash-approximation when edges are dynamically deleted from the graph. We improve this bound to $O(m)$, $m$ being the number of edges, which can be shown to be tight for this problem. Moreover, we investigate the probabilistic dynamic changes for applying dynamism on the problem, in which a dynamic change happens with a certain probability, $P_D$, at each step. For the classical vertex cover problem, we prove that when $P_D$ is small enough, \oneplusone with the third approach finds a $2$\nobreakdash-approximate solution from an arbitrary initial solution in expected polynomial time, and rediscovers a solution with the same quality in expected linear time after a dynamic change happens. 

Using similar arguments, we also find pseudo-polynomial upper bounds on the expected time that \RLS and \oneplusone require to re\nobreakdash-optimise the 2\nobreakdash-approxi-mation for the dynamic weighted vertex cover problem in both dynamic settings. In the setting with probabilistic dynamic changes, we also obtain a pseudo-polynomial upper bound for the expected time that these two algorithms need to find a 2\nobreakdash-approximate solution by starting from a solution that assigns a weight of 0 to all edges. 

Similar to the classical vertex cover problem, in the weighted vertex cover problem, when we start with a 2 approximate solution, and the input graph faces a dynamic change, the solution may become infeasible, but using the edge-based representation, it is not far from a new 2 approximate solution. Similar situations may happen in other dynamic optimization problems, where the re-optimization time is usually less than the time that is required to optimize the problem from an arbitrary initial solution. Nevertheless, in the weighted version of the problem, the presence of weights has made our final results to be pseudo polynomial, rather than polynomial. 

A number of strategies have been proposed and studied for selecting the mutation strength or step size adaptation for multi-value decision variables~\cite{DoerrStepSizeRandom2016, WegenerStepSize2010}. Step size adaptation is a promising approach for finding a polynomial bound in such situations. This technique is studied in~\cite{Us2017FogaDualVC} on the static version of the weighted vertex cover  problem with a simple fitness function that aims at finding a maximal solution for the dual problem. Step size size adaptation has been proved to improve the efficiency of a randomised local search algorithm in that paper. Using this technique for the dynamic version of the problem and a fitness function that prioritises minimizing the number of uncovered edges, is left for future work on this topic. The focus of this paper is dealing with the damage that can be caused by a dynamic change, and also dealing with the situation where the distance to a maximal dual solution is increased, but the number of uncovered edges is decreased.
Moreover, since in the dual setting we are looking for a maximal solution rather than a maximum solution, the goal that the algorithm is moving towards can change when multiple mutations happen at the same step. This makes the analysis much harder for the \oneplusone, for which our resulting upper bound is presented with respect to the weight of the optimal solution of the vertex cover problem in addition to the number of edges.

The rest of the paper is structured as follows. The problem definition and the investigated algorithm are given in Section~\ref{sec:prel}. All analyses for the classical vertex cover problem are presented in Section~\ref{sec:classicalVCP}, where Section~\ref{sec:improveDelete} includes the analysis for improving the expected re-optimization time of \oneplusone with the third approach for one-time dynamic deletion of an edge, and Section~\ref{sec:newDynamicSetting} includes the investigations on the probabilistic dynamic setting for the problem. The dynamic weighted vertex cover problem is analysed in Section~\ref{sec:weightedVCP}, with the one-time and the probabilistic dynamic settings being investigated in Section~\ref{sec:WVCDyn1} and Section~\ref{sec:WVCDyn2}, respectively.
Finally, we conclude in Section~\ref{sec::conclusion}. The analyses of Section~\ref{sec:classicalVCP} are based on the conference version of this work~\cite{DynamicVertexCoverFOCI2017}.

\section{Preliminaries}
\label{sec:prel}
In this section we present the definition of the problems and the algorithms that are investigated in this paper. We divide the section into two parts. In the first part (Section~\ref{sec:prel:VCP}), we give the formal definition of the vertex cover problem and the dynamic version of that problem. Moreover, we explain the edge based approach for solving this problem and present the  algorithm that we investigate in this paper for that problem: \oneplusone. In the second part (Section~\ref{sec:prel:WVCP}), we introduce the weighted vertex cover problem, its dynamic version, the investigated approach and the  algorithms that we are analysing in this paper: \RLS and \oneplusone.

\subsection{The Dynamic Vertex Cover Problem and the Investigated Algorithms}
\label{sec:prel:VCP}
For a given graph $G=(V,E)$ with set of vertices $V=\{v_1,\dots,v_n\}$ and set of edges $E=\{e_1,\dots,e_m\}$, the vertex cover problem is to find a subset of nodes $V_C\subset V$ with minimum cardinality, that covers all edges in $E$, i.e. $\forall e \in E , e\cap V_C \ne \emptyset$.

In the dynamic version of the problem, an arbitrary edge can be added to or deleted from the graph. We investigate two different settings for applying the dynamism on the problem. In the one-time dynamic setting, which has previously been analysed in~\cite{UsDVCGecco2015}, the changes on the instance of the problem take place every $\tau = poly(n)$ iterations where $poly(n)$ is a polynomial function in $n$. We improve some results that were obtained in~\cite{UsDVCGecco2015} for this setting. In the probabilistic dynamic setting, a probabilistic dynamic change happens at each step with a probability $P_D$; therefore, in expectation, a dynamic change happens on the graph each $\frac{1}{P_D}$ steps.

For solving the vertex cover problem by means of evolutionary algorithms, two kinds of representation have been suggested: the node-based representation and the edge-based representation. While the node-based representation is the natural one for this problem, and is used in most of the relevant works~\cite{DBLP:journals/ec/FriedrichHHNW10,DBLP:journals/tec/OlivetoHY09,KratschFrank2013VertexCover}, the edge-based representation, introduced by Jansen \etal\cite{DBLP:conf/foga/JansenOZ13}, has been suggested to speed up the approximation process. In their work~\cite{DBLP:conf/foga/JansenOZ13}, they have proved that an evolutionary algorithm using the edge-based representation and a specific fitness function, can find a $2$\nobreakdash-approximate solution in expected time $O(m\log m)$ where $m$ is the number of edges in the graph.

In this representation, each solution is a bit string $s\in\{0,1\}^m$, describing a selection of edges $E(s) = \{e_i \in E \mid  s_i = 1\}$. Then the cover set of $s$, denoted by $V_C(s)$, is the set of nodes on both sides of each edge in $E(s)$. It should be noticed that the size of the solution may change according to the dynamic changes of the graph. In our analysis $m$ is the maximum number of edges in the graph.

The specific fitness function that Jansen \etal\cite{DBLP:conf/foga/JansenOZ13} have suggested for this representation is:
\begin{align}\nonumber
f(s) =& |V_C(s)| + (|V|+1)\cdot |\{e\in E\mid  e\cap V_C(s) = \emptyset\}|\\ \nonumber
&+(|V|+1)\cdot(m+1)\cdot\\
&|\{(e,e^\prime)\in E(s)\times E(s) \mid  e\ne e^\prime, e\cap e^\prime \neq \emptyset\}|\label{eq:fit_func}\text{.}
\end{align}

The goal of the studied evolutionary algorithm is to minimize $f(s)$ which consists of three parts. The first part is the size of the cover set that we want to minimize. The second part is a penalty for edges that solution $s$ does not cover, and the third part is an extra penalty inspired from the fact that a maximal matching induces a $2$\nobreakdash-approximate solution for the vertex cover problem. 

Pourhassan \etal\cite{UsDVCGecco2015} proved that an RLS with the edge-based representation and the fitness function given in Equation \ref{eq:fit_func} re\nobreakdash-discovers the $2$\nobreakdash-approximate solution if the initial solution is a maximal matching
%, which is also a 2\nobreakdash-approximation solution,
in expected time $\mathcal{O}(m)$ and such result holds for \oneplusone  if changes are limited to adding edges. For \oneplusone and dynamic deletion of an edge, the expected time $\mathcal{O}(m \log m)$ was obtained there, which is not tight. 
This bound is improved in this paper to the tight bound of $\mathcal{O}(m)$. The \oneplusone of~\cite{UsDVCGecco2015} for the edge-based representation is presented in Algorithm~\ref{alg:EdgeEA}. In the dynamic setting that was studied in that paper, a large gap of $\tau = poly(n)$ iterations was assumed in which no dynamic changes happened. %We use this setting for our analysis in Section~\ref{sec:improveDelete}. In Section~\ref{sec:newDynamicSetting}, 
In addition to analysing this dynamic setting, in this paper we consider a second setting for applying the dynamism on the problem where a dynamic change happens at each step with a certain probability.  

\begin{algorithm}[t]
	\caption{Edge-Based \oneplusone (\EBEAE)~\cite{UsDVCGecco2015}}
	\begin{algorithmic}[1]
		\STATE The initial solution, $s$, is given: a bit-string of size $m$ which used to be a $2$\nobreakdash-approximate solution before changing the graph;
		\WHILE {Stopping criteria not met}
		\STATE {Set $s':=s$;
			\STATE Flip each bit of $s'$ independently with probability $\frac{1}{m}$;
			\IF {$f(s')\leq f(s)$} \STATE {$s:=s'$;} \ENDIF
		}
		\ENDWHILE
	\end{algorithmic}
	\label{alg:EdgeEA}
\end{algorithm}

\subsection{The Dynamic Weighted Vertex Cover Problem and the Investigated Algorithms}
\label{sec:prel:WVCP}

In the weighted vertex cover problem, 
the input is a graph $G=(V,E)$ with vertex set $V=\{v_1,\ldots, v_n\}$ and edge set $E=\{e_1,\ldots, e_m\}$, in addition to a positive weight function $w\colon V \rightarrow \mathds{N}^+$ on the vertices. In this version of the problem, the goal is to find a subset of nodes, $V_C \subseteq V$, that covers all edges and has minimum weight, \ie the problem is to minimize $\sum_{v\in V_C} w(v)$, s.t. $\forall e \in E, e \cap V_C \neq \emptyset$. For the dynamic weighted vertex cover problem, similar to the classical case, we consider dynamic changes of adding and removing edges to and from the graph.

A generalization of the edge-based approach of the classical vertex cover problem for the weighted vertex cover problem has been studied in~\cite{Us2017FogaDualVC}, where the relaxed Linear Programming (LP) formulation of the problem is considered as the primal LP problem, and the dual form (which is also an LP problem) is solved by an evolutionary algorithm.
Using the standard node-based representation, in which a solution is denoted by a bit-string $x=(x_1, \cdots,x_n)$ and each node $v_i$, $i \in \{1,\cdots, n\}$ is chosen iff $x_i=1$, the Integer Linear Programming formulation of the weighted vertex cover is:
\begin{eqnarray*}
	&& \min \sum_{i=1}^n w(v_i)\cdot x_i \\
	s.t. && x_i+x_j\geq 1 \ \ \ \forall (i,j)\in E \\
	&& x_i\in \{0,1\} \ \ \ \forall i \in \{1, \cdots , n \}\text{.}
\end{eqnarray*}

By relaxing the constraint on $x$ to $x\in [0,1]$, an LP problem is obtained, and the dual form of that problem is formulated as the following, where $s_j \in \textbf{N}^+$ denotes a weight on edge $e_j$
\begin{eqnarray*}
	&& \ \ \ \ \ \max \sum_{j=1}^m s_j \\
	s.t. && \sum_{j\in \{1, \cdots, m\} \mid e_j \cap \{v\} \neq \emptyset} s_j \leq w(v) \ \ \ \ \forall v \in V\text{.}
\end{eqnarray*}
The dual problem is to maximize the weights on the edges, and the constraint is that the weight of each node must be more than or equal to the sum of weights on edges connected to that node. We say that a node is tight, when the weight of the node is equal to the sum of weights on edges connected to that node. Observe that in a maximal solution for the dual problem, at least one node of each edge is tight. Therefore, the set of tight nodes in a maximal dual solution, \ie 
$$V_C= \left\{v\in V \mid w(v)=\sum_{j\in \{1, \cdots, m\} \ \mid \ e_j \cap \{v\} \neq \emptyset} s_j\right\}$$
is a vertex cover for the primal problem, and the total weight of this solution is at most twice the sum of weights of the edges.

It is already known that when the primal problem is a minimization problem, any feasible solution of the dual problem gives a lower bound of the optimal solution of the primal problem (See~\cite{ApproximationAlgorithms} for the Weak Duality Theorem). Therefore, $\max \sum_{j=1}^m s_j$ of a maximal solution of the dual problem, is less than or equal to the weight of the optimal solution of the weighted vertex cover problem. Therefore, the vertex cover that is induced by a maximal dual solution, $V_C$, which has a weight of at most $2\cdot \max \sum_{j=1}^m s_j$, has an approximation ratio of at most 2.

In this paper, we investigate the behaviour of \oneplusone and \RLS with the edge-based approach in achieving a 2\nobreakdash-approximate solution for the weighted vertex cover by finding a maximal solution for the dual problem. The solution's representation and the mutation operator in these algorithms, which are presented in Algorithms \ref{alg:WeightedEA} and \ref{alg:WeightedRLS}, are different from what we have for the classical vertex cover problem. A solution is an integer array and represents the weights on the edges and a mutation on an edge increases or decreases this weight. The \RLS algorithm chooses a uniformly random position of the solution in each iteration, and decreases the value of it (which is the weight of the corresponding edge) with probability $\frac{1}{2}$, and increases it otherwise. The \oneplusone algorithm uses the same mutation operator but mutates each edge with probability $1/m$. Finally both algorithms accept the mutated solution $s^\prime$ if it has a strictly greater fitness value than $s$.
\begin{algorithm}[t]
	\caption{\oneplusone for Weighted Vertex Cover}
	\begin{algorithmic}[1]
		\STATE The initial solution, $s$, is given: an integer array of size $m$ which used to be a $2$\nobreakdash-approximate solution before changing the weighted graph;
		\WHILE {Stopping criteria not met}
		\STATE{ Set $s':=s$;
			\FORALL {edge $i$ with probability $1/m$}
			\STATE {choose $b\in\{0,1\}$ uniformly at random;
				\IF {$b=0$}\STATE { $s^\prime_i :=s^\prime_i+1$;}\ELSE \STATE {$s^\prime_i := \max\{s^\prime_i-1,0\}$;}\ENDIF}
			\ENDFOR
			\IF {$f(s^\prime)>f(s)$}\STATE {$s:=s^\prime$;}\ENDIF
		}
		\ENDWHILE
	\end{algorithmic}
	\label{alg:WeightedEA}
	
\end{algorithm}

\begin{algorithm}[t]
	\caption{\RLS for Weighted Vertex Cover}
	\begin{algorithmic}[1]
		\STATE The initial solution, $s$, is given: an integer array of size $m$ which used to be a $2$\nobreakdash-approximate solution before changing the weighted graph;
		\WHILE {Stopping criteria not met}
		\STATE{ Set $s':=s$;
			\STATE Choose $i\in\{1,\dots,m\}$ uniformly at random;
			\STATE Choose $b\in\{0,1\}$ uniformly at random;
			\IF {$b=0$}\STATE { $s^\prime_i :=s^\prime_i+1$;}\ELSE \STATE {$s^\prime_i := \max\{s^\prime_i-1,0\}$;}\ENDIF
			
			\IF {$f(s^\prime)>f(s)$}\STATE {$s:=s^\prime$;}\ENDIF
		}
		\ENDWHILE
	\end{algorithmic}
	\label{alg:WeightedRLS}
	
\end{algorithm}

Similar to what we had for the classical vertex cover problem, the fitness function $f$ for the weighted version of vertex cover consists of 3 parts as follows:
\begin{align}\nonumber\
f(s) =& \sum_{i=1}^{m} s_i -\left(\sum_{i=1}^{n}w(v_i)+1\right)\cdot  |\{e\in E\mid  e\cap V_C(s) = \emptyset\}|\\\nonumber
&-(m+1)\cdot\left(\sum_{i=1}^{n}w(v_i)+1\right)\\
&\cdot|\{v\mid\sum_{j\in \{1, \cdots, m\} \mid e_j \cap \{v\} \neq \emptyset}s_j>w(v)\}|\text{.}\label{equ:wfitfunc}
\end{align}
The first part is the sum of weights of edges, which should be maximized. Next there is a penalty for each of the uncovered edges. This part gives the priority to decreasing the number of uncovered edges and lets the algorithm accept a move that decreases the number of uncovered edges, even if the total weight is decreased at the same step. Finally we have a huge penalty for each vertex that violates its constraint. With this amount of the penalty, a solution which has a smaller number of violations is always better that the one with more violations.

In this paper, both dynamic settings that we are considering for the classical vertex cover problem, are investigated for the weighted vertex cover problem as well. Section~\ref{sec:classicalVCP} and Section~\ref{sec:weightedVCP} present the analysis for the classical vertex cover problem and the weighted vertex cover problem, respectively. We perform the runtime analysis with respect to the number of fitness evaluations of the algorithms.

\section{Analysis of the Classical Vertex Cover Problem}
\label{sec:classicalVCP}

In this section, the performance of \EBEAE is studied on the dynamic version of the classical vertex cover problem. In Section~\ref{sec:improveDelete}, we improve the existing results on the re-optimisation time of this algorithm for the situation where a 2\nobreakdash-approximate solution is given and an edge is dynamically removed from the graph. In the second part of this section, Section~\ref{sec:newDynamicSetting}, we analyse the probabilistic dynamic setting for this problem, in which an edge is added to or deleted from the graph at each step of the algorithm with the probability $P_D\leq \frac{1}{55em}$. In the dynamic setting of Section~\ref{sec:improveDelete}, we assume that only one change happens and then we are given a large gap to re-discover a 2-approximate solution. This assumption is relaxed in the dynamic setting of Section~\ref{sec:newDynamicSetting}, where multiple changes may happen before the algorithm finds a new 2-approximate solution.

\subsection{Improving Re-optimisation Time of the \oneplusone for Dynamic Vertex Cover Problem}
\label{sec:improveDelete}
In~\cite{UsDVCGecco2015}, using \oneplusone with the edge-based representation (Algorithm~\ref{alg:EdgeEA}) and the fitness function given in Equation~\eqref{eq:fit_func}, it was shown that if a $2$\nobreakdash-appro-ximate solution is given as the initial solution, after a dynamic deletion happens on the graph, a large number of edges can be uncovered and the re-optimization process takes expected time $O(m \log m)$ to find a $2$\nobreakdash-approximate solution. However, this upper bound is not tight, and is the same as the expected time of finding a $2$\nobreakdash-approximation from an arbitrary solution. In this section, we improve the upper bound on the expected time of re\nobreakdash-optimising $2$\nobreakdash-approximation with this algorithm.

\begin{figure}[t]
	\centering
	\begin{subfigure}[b]{.47\linewidth}
		\centering
		\includegraphics[width=\textwidth]{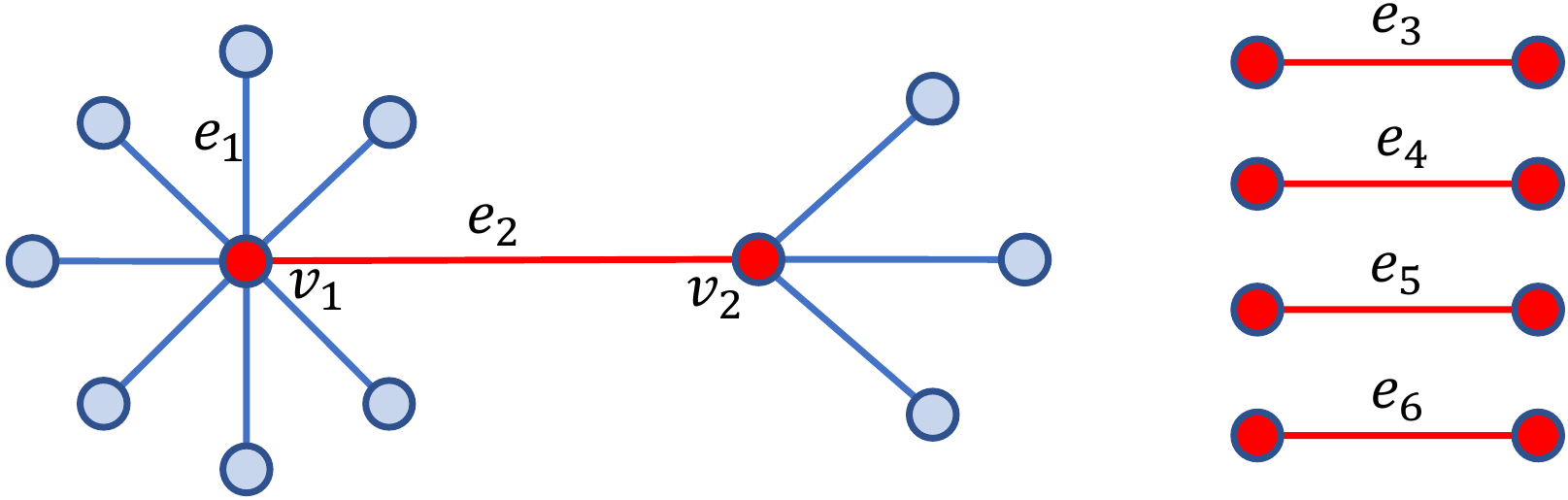}
		\caption{A maximum matching  initial solution.}
		\label{fig:aaa}
	\end{subfigure}
	\hspace{0.5 cm}
	\begin{subfigure}[b]{.47\linewidth}
		\centering
		\includegraphics[width=\textwidth]{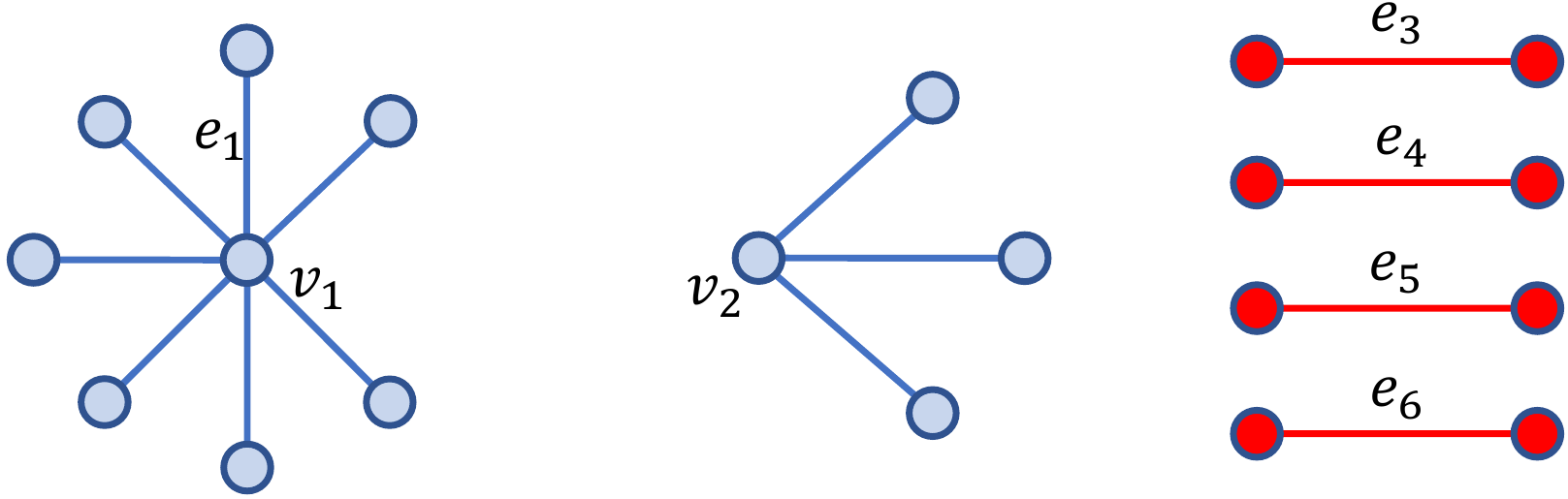}
		\caption{The initial solution after the a dynamic change.}
		\label{fig:bbb}
	\end{subfigure}
	\begin{subfigure}[b]{.47\linewidth}
		\centering
		\includegraphics[width=\textwidth]{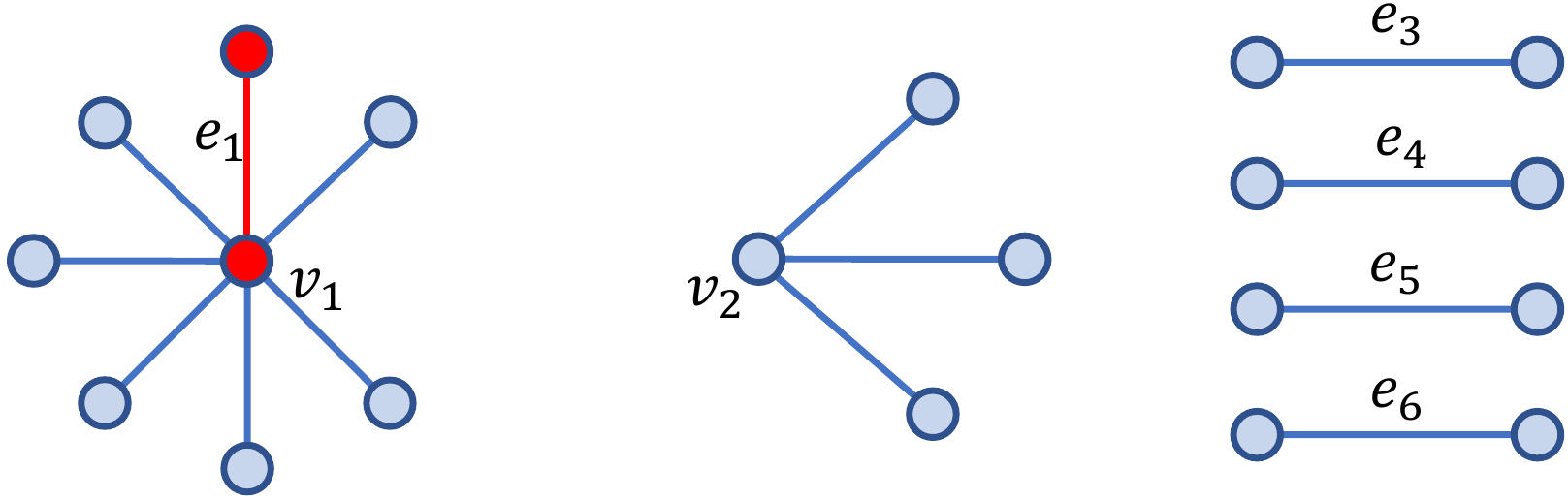}
		\caption{Multiple bit-flips on solution \ref{fig:bbb} done by \EBEAE which is accepted according to the fitness function.}
		\label{fig:ccc}
	\end{subfigure}
	\caption{A sample initial solution for \EBEAE and the damage caused by the multiple bit-flips. Red edges demonstrate the chosen edges in the current solution and red vertices are the corresponding cover set.}
	\label{fig:multi-bit-flip}
\end{figure}

Figure \ref{fig:multi-bit-flip} depicts the main challenge in the analysis of \EBEAE when an edge is dynamically deleted from the graph. The set of all nodes of edges $\{e_2,\cdots,e_6\}$ in Figure \ref{fig:aaa} is a 2-approximate solution for the minimum vertex cover problem. Let the dynamic change delete edge $e_2$. This move uncovers all edges that are connected to $v_1$ and $v_2$ (Figure \ref{fig:bbb}). In order to cover all these uncovered edges, the algorithm needs to pick only two new edges that add $v_1$ and $v_2$ to the cover set. However,  \EBEAE can perform a multiple bit-flip that makes the situation more complicated. For example, at the same time that $e_1$ is added to the solution, edges $\{e_3,\cdots,e_6\}$ can be removed from the solution (Figure \ref{fig:ccc}). Although this solution has fewer uncovered edges and will be accepted by the algorithm, it is more difficult to achieve a maximum matching from this solution and the algorithm needs to do at least 4 bit-flips to re-optimize the problem.

We divide the analysis into two phases, where both take expected $O(m)$ steps. In the first phase we prove that the number of uncovered edges is decreased to a constant, and in the second phase, we show that all the edges become covered.

Consider a solution $s$ that is a matching but not a maximal matching. The cover set, $V_{C}(s)$, derived from this solution is not a complete cover. Let $C$ be the minimal set of vertices that are required to be added to $V_{C}(s)$ to make it a complete cover ($C=\{v_1,v_2\}$ in Figure \ref{fig:bbb}). Initially, this set consist of at most two nodes (both nodes of the deleted edge), but during the run of the \oneplusone, when some nodes are removed from this set, new nodes can be added to it, since more than one mutation can happen at the same step.

We define the set of nodes $C_1$ as the following. Initially, let $C_1\subseteq C$ consist of all nodes of $C$ that are connected to more than 5 uncovered edges (\eg $C_1 = \{v_1\}$ in Figure \ref{fig:bbb}). Since the number of nodes in $C$ is at most 2 at the beginning of the process, the initial number of nodes in $C_1$ is also bounded by 2. During the process of the algorithm, more nodes with this property are added to $C$, but we only add them to $C_1$ if at the same step, at least one other node from $C_1$ is included in the new solution and removed from $C_1$. 

In the analysis of the first phase, using the drift on the number of nodes in $C_1$, we show that this set becomes empty in $O(m)$. After this point, no nodes can be added to $C_1$, due to definition of $C_1$. Let $E_u$ be the subset of uncovered edges that do not have a node in $C_1$. We prove that at the end of the first phase, $E_u$ consists of a constant number of edges, and using the drift analysis on $|E_u|$, we show that all edges are covered in $O(m)$. 

Let ${C_1}^t$ and $E(\Delta_{C_1}^t)=E\left(|C_1^t|- |C_1^{t+1}|\mid |C_1^t|\right) $ denote $C_1$ at step $t$ of the run of the algorithm, and the drift on the size of this set, respectively. In order to find $E(\Delta_{C_1}^t)$, we first introduce a partitioning on the selected edges and prove a property (Lemma~\ref{lem:selectedSetsUpperBound}) about this partitioning and the number of uncovered edges.

Let $E_i(s), 1\leq i\leq m,$ be the set of selected edges in solution $s$, that deselecting each of them uncovers $i$ covered edges. Moreover, each covered edge of the graph is either covered by one node or two nodes of the induced node set of $s$. Let the set of edges that are covered from both ends be $D(s)$, \ie $D(s)= \left\{e=\{v, u\}\mid v \in V(s) \wedge u \in V(s) \right\}$. 
According to the definitions of $D(s)$ and $E_i(s)$ and the total number of covered edges, the following lemma gives us an equation that helps us in the proof of Lemma \ref{lem:driftOnC1}.
\begin{lemma}
	For any solution $s$, $|D(s)| +\sum_{i=1}^{m} i\cdot |E_i(s)| \leq m-k$, where $k$ is the number of uncovered edges of solution $s$ and $m$ is the total number of edges. 
	\label{lem:selectedSetsUpperBound}
\end{lemma}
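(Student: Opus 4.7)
The plan is to decompose the $m-k$ covered edges into the two classes appearing on the right and charge each term of the left-hand side to at most one of them. Write $N_1(s)$ for the set of covered edges with exactly one endpoint in $V_C(s)$; then $N_1(s)$ and $D(s)$ partition the covered edges, giving $|D(s)|+|N_1(s)|=m-k$. With this in hand, the lemma reduces to showing $\sum_{i=1}^m i\cdot|E_i(s)|\leq|N_1(s)|$.

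For this reduction I would analyse, for a single selected edge $e^{*}=\{u,v\}\in s$, which previously-covered edges other than $e^{*}$ itself become uncovered when $e^{*}$ is deselected. A vertex of $e^{*}$ leaves $V_C(s)$ only if it is ``lonely'', i.e.\ contained in no other selected edge. Consequently, an edge $f\neq e^{*}$ becomes uncovered only if its unique covering vertex in $V_C(s)$ is a lonely endpoint of $e^{*}$; in particular $f$ must have a unique covering vertex, so $f\in N_1(s)$. Crucially, no edge of $D(s)\setminus\{e^{*}\}$ can become uncovered by a single deselection, since its two endpoints cannot both leave $V_C(s)$ at once.

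To finish, I would rewrite $\sum i\cdot|E_i(s)|$ as a double sum. Let $V_1\subseteq V_C(s)$ consist of vertices lying in exactly one selected edge, and for each $v\in V_C(s)$ let $A_v\subseteq N_1(s)$ be the edges covered solely by $v$. The contribution of $e^{*}=\{u,v\}$ to $\sum i\cdot|E_i(s)|$ equals $|A_u|\mathbf{1}[u\in V_1]+|A_v|\mathbf{1}[v\in V_1]$; summing over $e^{*}\in s$ yields $\sum_{v\in V_1}|A_v|$, since each $v\in V_1$ appears in exactly one selected edge. Because the family $\{A_v\}_{v\in V_C(s)}$ partitions $N_1(s)$, this sum is at most $|N_1(s)|$, and combining with the decomposition above gives $|D(s)|+\sum i\cdot|E_i(s)|\leq m-k$.

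The main obstacle I anticipate is bookkeeping: one must ensure that vertices appearing in two or more selected edges contribute nothing (they lie outside $V_1$), and that edges in $D(s)$ contribute zero to $\sum i\cdot|E_i(s)|$ by the single-deselection argument above. Once this is pinned down, the proof is a straightforward double-counting argument, and the inequality becomes tight precisely when $s$ is a matching, consistent with the paper's goal of driving the algorithm toward a maximal matching.
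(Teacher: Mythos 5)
Your proof is correct and takes essentially the same route as the paper's: both split the $m-k$ covered edges into $D(s)$ and the singly-covered edges, observe that deselecting a single edge can only uncover edges whose unique covering vertex is a lonely endpoint of that edge (hence no edge of $D(s)\setminus\{e^*\}$, and each singly-covered edge at most once), and your $V_1$/$A_v$ double count merely makes explicit the ``counted at most once'' claim that the paper states tersely. One peripheral quibble: your closing remark that equality holds \emph{precisely} when $s$ is a matching is too strong---a triangle with all three edges selected gives $|D(s)|=m-k=3$ and $\sum_i i\cdot|E_i(s)|=0$, so the bound is tight there too---but this aside does not affect the validity of the proof.
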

\begin{proof}
	Let us first consider all covered edges except those that are in $D(s)$. By definition of $E_i(s)$, $1\leq i\leq m$, deselecting each edge of $E_i(s)$ uncovers $i$ edges. This implies that all of these $i$ edges are only covered by the deselected edge and none of them is uncovered by deselecting another edge. Therefore, each covered edge that is not in $D(s)$, is counted at most once in $\sum_{i=1}^{m} i\cdot |E_i(s)|$. 
	
	On the other hand, by definition of $D(s)$, none of the edges of $D(s)$ are uncovered when one of the edges of $E_i(s)$, $1\leq i\leq m$ is deselected. Therefore, edges of $D(s)$ are not counted in $\sum_{i=1}^{m} i\cdot |E_i(s)|$. Moreover, the number of covered edges is $m-k$, which completes the proof.
\end{proof}

Using Lemma~\ref{lem:selectedSetsUpperBound} in the following lemma we find a lower bound on the value of $E(\Delta_{C_1}^t)$.

\begin{lemma}
	\label{lem:driftOnC1}
	At each step of \EBEAE, $E(\Delta_{C_1}^t)\geq\frac{6-2e}{em}\cdot |C_1| $.
\end{lemma}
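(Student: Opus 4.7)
The plan is to decompose the one-step change as $|C_1^{t+1}| = |C_1^t| - |R| + |A|$, where $R \subseteq C_1^t$ is the set of nodes removed from $C_1$ at step $t$ (i.e.\ those newly placed into $V_C(s)$) and $A$ is the set of new nodes admitted to $C_1$ under the rule in the definition, so that $E(\Delta_{C_1}^t) = E(|R|) - E(|A|)$. Noting $\tfrac{6-2e}{em} = \tfrac{6}{em} - \tfrac{2}{m}$, the natural targets are $E(|R|) \geq \tfrac{6|C_1^t|}{em}$ and $E(|A|) \leq \tfrac{2|C_1^t|}{m}$.

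For the lower bound on $E(|R|)$, I would focus on a clean family of single-bit mutations. For each $v \in C_1^t$ and each of the at least $6$ uncovered edges $f = \{v,u\}$ incident to $v$, neither endpoint lies in $V_C(s)$ (an uncovered edge forces both endpoints out of $V_C(s)$), so the mutation that flips only bit $f$ from $0$ to $1$ creates no adjacent pair, covers all $d_v \geq 6$ uncovered edges incident to $v$, and increases $|V_C|$ by at most $2$. Because the penalty coefficient $(|V|+1)$ on uncovered edges dominates the unit coefficient on $|V_C|$, the move strictly decreases $f(s)$ and is accepted by \EBEAE, placing $v$ into $V_C$ and therefore into $R$. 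The probability of this specific single-bit flip is $\tfrac{1}{m}(1-\tfrac{1}{m})^{m-1} \geq \tfrac{1}{em}$, and the events over different flipped bits are disjoint. Summing over the (node,\,uncovered-edge)-pairs and noting that edges with both endpoints in $C_1^t$ contribute $2$ to $|R|$ gives $E(|R|) \geq \tfrac{1}{em}\sum_{v \in C_1^t} d_v \geq \tfrac{6|C_1^t|}{em}$.

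For the upper bound on $E(|A|)$, the definition of $C_1$ forces any new $u \in A$ to be accompanied by a simultaneous removal of some $v \in C_1^t$. So any mutation contributing to $|A|$ must (i) flip at least one bit incident to a node of $C_1^t$, contributing a factor that scales with $|C_1^t|/m$, and (ii) deselect enough selected edges to uncover more than $5$ edges at the new node $u$. Lemma~\ref{lem:selectedSetsUpperBound} globally caps the total uncovering capacity $\sum_i i \cdot |E_i(s)|$ by $m-k \leq m$, so the expected number of edges freshly uncovered by a single step is at most $1$, and when combined with the concurrency requirement and independence of bit flips I would obtain $E(|A|) \leq \tfrac{2|C_1^t|}{m}$. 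Subtracting yields the claimed bound $E(\Delta_{C_1}^t) \geq \tfrac{6-2e}{em}|C_1^t|$.

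The main obstacle is the rigorous bound on $E(|A|)$. The tight coupling in the definition of $C_1$ between additions and removals means the two contributions cannot be estimated independently, and the concurrency constraint is precisely what Lemma~\ref{lem:selectedSetsUpperBound} is tailored to exploit so that the loss only scales as $\tfrac{2|C_1^t|}{m}$ rather than some larger function of the uncovered-edge structure; getting this scaling, rather than the coarser bound produced by ignoring the concurrency requirement, is what makes the drift positive and thus what drives the $O(m)$ phase-one runtime.
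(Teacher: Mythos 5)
Your lower bound $E(|R|)\geq \frac{6|C_1|}{em}$ is sound and matches the positive term in the paper's argument (disjoint single-bit-flip events, each accepted because the uncovered-edge penalty dominates the cover-size term, with the correct factor-2 accounting for uncovered edges having both endpoints in $C_1$). The genuine gap is the other half: the target $E(|A|)\leq \frac{2|C_1|}{m}$ is not merely unproven in your sketch --- it is false in exactly the regime this lemma must handle. After a dynamic deletion we can have $|C_1|\leq 2$ while each node of $C_1$ has $\Theta(m)$ uncovered incident edges; then the gating event ``some node of $C_1$ is included this step'' has \emph{constant} probability (it scales with $\sum_{v\in C_1} d_v$, not with $|C_1|/m$), and if additionally $\Theta(m)$ selected edges lie in $\bigcup_{i\geq 6}E_i(s)$, the probability that such an edge is deselected in the same accepted step is also constant, giving $E(|A|)=\Theta(1)\gg 2|C_1|/m$. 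So the two absolute, term-by-term targets you set cannot both be established: additions and removals scale up \emph{together} with $\sum_{v\in C_1}d_v$, and any proof must couple them rather than bound $E(|A|)$ by a function of $|C_1|/m$ alone.

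The paper's proof implements precisely this coupling by filtering: it restricts attention to steps in which at least one node of $C_1$ is included (no change to $C_1$ is possible otherwise), so that conditionally $E_f^+(\Delta_{C_1}^t)\geq 1$, and then bounds the conditional additions using the acceptance-conditioned bit-flip probability $P(\text{bit}\mid Acc)\leq \frac{P(\text{bit})}{P_{Acc}}\leq \frac{e}{m}$ together with Lemma~\ref{lem:selectedSetsUpperBound}:
\begin{equation*}
E_f^-(\Delta_{C_1}^t)\;\leq\; 2\sum_{i\geq 6}|E_i(s)|\cdot\frac{e}{m}\;\leq\;\frac{2e}{6m}\sum_{i\geq 6} i\cdot|E_i(s)|\;\leq\;\frac{2e}{6}\text{,}
\end{equation*}
a \emph{constant} strictly less than $1$, i.e.\ additions are at most a fixed fraction of removals per filtered step. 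Multiplying the conditional drift $1-\frac{2e}{6}$ by the lower bound $\frac{6|C_1|}{em}$ on the probability of a filtered step yields $\frac{6-2e}{em}|C_1|$. Your instinct that Lemma~\ref{lem:selectedSetsUpperBound} caps the uncovering capacity is the right ingredient, but it must be deployed conditionally per filtered step (where it produces the constant $\frac{2e}{6}$), not unconditionally (where it cannot produce a bound scaling as $|C_1|/m$). As written, your proposal's concluding subtraction rests on an inequality that fails, so the argument does not go through without restructuring it along these conditional lines.
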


\begin{proof}
	By definition of $C_1$, changes to this set can only happen at the steps where at least one node of this set is included in the solution. Moreover, a node of $C_1$ will be included in the solution if exactly one of its adjacent uncovered edges is selected, which happens with a probability of at least $P_{C_1}\geq \frac{6}{em} |C_1|$ at each step. In the proof of this lemma, we filter the steps and only consider the steps in which at least one node of $C_1$ is included, and show that the expected change on $|C_1|$ in those steps is at least $(1-\frac{2e}{6})$; therefore, $E(\Delta_{C_1}^t)\geq P_{C_1}\left(1-\frac{2e}{6}\right) \geq \frac{6-2e}{em}\cdot |C_1| $.
	
	From this point of the proof, we filter the steps and only consider the steps in which at least one node of $C_1$ is included. Let the drift on $|C_1|$ in these steps be denoted by $E_f(\Delta_{C_1}^t)$. We aim to find a lower bound on $E_f(\Delta_{C_1}^t)$.
	
	Let $P_{Acc}$ denote the probability that  the whole move of a step is accepted.
	When one node of $C_1$ is included and no other mutations happen at the same step, the move is accepted by the algorithm. Therefore, 
	
	\begin{eqnarray}
	\label{eq:P_Accept}
	P_{Acc}\geq  \left(1-\frac{1}{m}\right)^{m-1}\geq \frac{1}{e}\text{.}
	\end{eqnarray}
	
	Moreover, let $Acc$ denote the event that the whole move in a step is accepted. Also let $P(\text{bit}\mid Acc)$ denote the probability of mutating an edge $e$, under the condition that event $Acc$ has occurred. By the definition of conditional probability, we know that 
	$$P(\text{bit}\mid Acc)= \frac{P(\text{bit} \cap P_{Acc})}{P_{Acc}}\leq \frac{P(\text{bit})}{P_{Acc}}\text{.}$$
	Using Equation~\eqref{eq:P_Accept}, we get:
	$$P(\text{bit}\mid Acc)\leq eP(\text{bit} )\text{,}$$
	where $P(\text{bit} )$ is the unconditional probability of flipping $e$, which is $\frac{1}{m}$. This implies that
	\begin{eqnarray}
	\label{eq:Pbit}
	P(\text{bit}\mid Acc)\leq \frac{e}{m}\text{.}
	\end{eqnarray}
	
	The drift on $|C_1|$ can be presented as $$E_f(\Delta_{C_1}^t)= E_f^+(\Delta_{C_1}^t) -E_f^-(\Delta_{C_1}^t)\text{,}$$ 
	where $E_f^+(\Delta_{C_1}^t)$ is the expected number of nodes that are removed from $C_1$ at each step, and $E_f^-(\Delta_{C_1}^t)$ is the expected number of nodes that are added to $C_1$ at each step.  Since we are only considering the steps in which at least one node of $C_1$ is included, we have
	$$E_f^+(\Delta_{C_1}^t)\geq1.$$
	
	Here we find an upper bound on $E_f^-(\Delta_{C_1}^t)$. Nodes can only be added to $C_1$, when a selected edge that covers more than 6 edges is deselected. 
	We need to find the expected number of mutating edges of type $E_i(s)$, $i\geq 6$. Since deselecting each of these edges can add at most 2 nodes to $C_1$, $E_f^-(\Delta_{C_1}^t)$ is  upper bounded by:
	$$E_f^-(\Delta_{C_1}^t)\leq 2\sum_{i=6}^{\infty} |E_i(s)|\cdot P(\text{bit}\mid Acc) \text{.}$$
	From Equation~\eqref{eq:Pbit}, we get:
	$$E_f^-(\Delta_{C_1}^t)\leq 2\sum_{i=6}^{\infty} |E_i(s)|\cdot \frac{e}{m} \leq\frac{1}{m}\sum_{i=6}^{\infty} \frac{2e i}{i}\cdot |E_i(s)| \leq\frac{2e}{6m}\sum_{i=6}^{\infty} i\cdot |E_i(s)|\text{.} $$
	
	On the other hand, Lemma~\ref{lem:selectedSetsUpperBound} implies that 
	$\sum_{i=6}^{m} i\cdot |E_i(s)| \leq m-k$, which gives us:
	\begin{eqnarray}
	\label{eq:negativeDriftOnFilteredSteps}
	E_f^-(\Delta_{C_1}^t) \leq \frac{2e(m-k)}{6m}\leq \frac{2e}{6}\text{.}
	\end{eqnarray}
	Therefore, the drift on $|C_1|$ is 
	\begin{eqnarray}
	\label{eq:driftOnFilteredSteps}
	E_f(\Delta_{C_1}^t)= E_f^+(\Delta_{C_1}^t) -E_f^-(\Delta_{C_1}^t)\geq 1-\frac{2e}{6}\text{,}
	\end{eqnarray}
	which completes the proof.
\end{proof}

In the following lemma, we prove that the set $C_1$ becomes empty in expected time $O(m)$. Moreover, in Lemmata~\ref{lem:TotalDecreaseThanCanHappenOnC1} to~\ref{lem:sizeOfEuAfterPhase1}, we prove that the total number of uncovered edges at the beginning of the second phase is a constant. Then in Lemma~\ref{lem:driftOnEu} we find the drift on $|E_u|$ during the second phase, which helps us with the proof of Theorem~\ref{thm:improvedDelete}.

\begin{lemma}
	\label{lem:C1Empty}
	Starting with a situation where $|C_1|= c$, $c\geq 0$, the expected time until the algorithm reaches a situation where $|C_1|=0$ is at most $\frac{em(1+\ln (c))}{6-2e}$.
\end{lemma}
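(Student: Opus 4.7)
The plan is to deduce the result directly from the multiplicative drift theorem, using Lemma~\ref{lem:driftOnC1} as the key input. Since $|C_1^t|$ is a non-negative integer-valued random variable with $|C_1^0| = c$ and target value $0$, and since Lemma~\ref{lem:driftOnC1} guarantees
\[
\expect{|C_1^t| - |C_1^{t+1}| \;\bigm|\; |C_1^t|} \;\geq\; \frac{6-2e}{em}\cdot |C_1^t|,
\]
this is exactly the multiplicative drift condition with rate $\delta = \tfrac{6-2e}{em}$ (which is strictly positive, since $2e \approx 5.436 < 6$).

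Applying the multiplicative drift theorem (Doerr, Johannsen, Winzen) with $s_{\min} = 1$ and $X_0 = c$ then yields
\[
\expect{T_0} \;\leq\; \frac{1 + \ln c}{\delta} \;=\; \frac{em(1+\ln c)}{6-2e},
\]
where $T_0$ is the first hitting time of $|C_1^t| = 0$. For the corner case $c = 0$ the claim is trivial, and for $c \geq 1$ the logarithm is well-defined and non-negative, matching the stated bound.

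The only subtlety to check is that the multiplicative drift framework is genuinely applicable here. Two points warrant a brief comment: first, $|C_1^t|$ is bounded above (by the number of vertices), so all relevant expectations are finite; second, although nodes may be \emph{added} to $C_1$ during a step, the definition of $C_1$ ensures this only happens when at least one node is simultaneously removed, so $|C_1^t|$ never increases without a corresponding decrease, and the drift bound from Lemma~\ref{lem:driftOnC1} already accounts for this. Given these, no additional work beyond quoting the theorem is needed. The main conceptual obstacle was establishing the multiplicative drift itself, which has already been handled in Lemma~\ref{lem:driftOnC1}; the present lemma is essentially a one-line corollary of that drift estimate.
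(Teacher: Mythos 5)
Your proposal is correct and matches the paper's proof: both apply the multiplicative drift theorem with the drift rate $\frac{6-2e}{em}$ from Lemma~\ref{lem:driftOnC1}, minimum state value $1$, and initial value $c$, yielding the bound $\frac{em(1+\ln c)}{6-2e}$. Your additional remarks (positivity of $6-2e$, the trivial $c=0$ case, and applicability of the drift framework) are sound but not needed beyond what the paper states.
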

\begin{proof}
	According to Lemma~\ref{lem:driftOnC1}, the drift on $C_1$ is at least $\frac{6-2e}{em}\cdot |C_1|  $. Therefore, since the algorithm starts with $|C_1|\leq c$ and the minimum value of $|C_1|$ before reaching $|C_1|=0$ is 1, by multiplicative drift analysis, we find the expected time of at most
	$$\frac{1+\ln (c)}{\frac{6-2e}{em}}= \frac{em\left(1+\ln (c)\right)}{6-2e}$$ 
	to reach a solution $s$ where $|C_1|=0$.
\end{proof}

\begin{lemma}
	\label{lem:TotalDecreaseThanCanHappenOnC1}
	Starting with $|C_1|=c$, $c\geq 0$ a constant integer, the expected total number of steps at which a node can be removed from $C_1$ is upper bounded by $\frac{c}{1-2e/6}$.
	%$c+\frac{2ec}{6-2e}$.
\end{lemma}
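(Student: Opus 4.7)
My plan is to identify the steps counted by the lemma with the "filtered" steps already analysed in Lemma~\ref{lem:driftOnC1}, and then turn the one-step drift on $|C_1|$ into a bound on the expected number of such steps via an additive-drift / Wald-type argument. Recall from the definition of $C_1$ that it can only be altered at a step where at least one of its nodes is included in the new solution; that node is then removed from $C_1$. Hence every step on which a node is removed from $C_1$ is precisely a filtered step, and conversely every filtered step removes at least one node from $C_1$. So if $T$ denotes the (random) number of filtered steps until $|C_1|$ first reaches $0$, the quantity to be bounded is exactly $\expect{T}$.

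Next I would note that once $|C_1|=0$ the set stays empty, since by definition new nodes join $C_1$ only in tandem with a simultaneous removal, which cannot happen when the set is empty. So the process is well defined and terminates almost surely (Lemma~\ref{lem:C1Empty} gives a finite expected termination time, which in particular ensures $\expect{T}<\infty$ and justifies optional stopping). Let $\tau$ be the first time $|C_1|=0$, and decompose the change at every step $t\leq\tau$ as $|C_1^{t-1}|-|C_1^t|=R_t-A_t$, with $R_t$ the number of removals and $A_t$ the number of additions at step $t$. On a non-filtered step $R_t=A_t=0$; on a filtered step Lemma~\ref{lem:driftOnC1} (more precisely, the intermediate bound \eqref{eq:driftOnFilteredSteps}) gives $\expect{R_t-A_t\mid\mathcal{F}_{t-1},\,t\text{ filtered}}\ge 1-\tfrac{2e}{6}$.

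Telescoping, $\sum_{t=1}^{\tau}(R_t-A_t)=|C_1^0|-|C_1^{\tau}|=c$ deterministically. Taking expectations and applying Wald's identity restricted to filtered steps (or equivalently the additive drift theorem applied to $|C_1^t|$) yields
\begin{equation*}
c \;=\; \expect{\sum_{t=1}^{\tau}(R_t-A_t)} \;\ge\; \left(1-\tfrac{2e}{6}\right)\expect{T},
\end{equation*}
and rearranging gives $\expect{T}\le c/(1-2e/6)$, which is the claimed bound.

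The only delicate point is the legality of the Wald/optional-stopping step: $|C_1^t|$ is not a martingale, and filtered steps form a random subset of indices, so I would either appeal directly to the standard additive drift theorem (using that $\expect{\tau}<\infty$ by Lemma~\ref{lem:C1Empty}) applied to the process seen only at filtered steps, or verify the integrability conditions of Wald by hand (the per-step increments $R_t-A_t$ are bounded by $m$ in absolute value and $\expect{T}\le\expect{\tau}<\infty$). Either route is routine once the identification "removal step $=$ filtered step" and the drift estimate from Lemma~\ref{lem:driftOnC1} are in hand; no further combinatorial input is needed.
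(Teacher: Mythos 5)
Your proposal is correct and follows essentially the same route as the paper: both identify the steps in question with the filtered steps of Lemma~\ref{lem:driftOnC1}, invoke the constant drift bound $E_f(\Delta_{C_1}^t)\geq 1-\frac{2e}{6}$ from Equation~\eqref{eq:driftOnFilteredSteps}, and apply additive drift starting from $|C_1|=c$ to obtain the bound $\frac{c}{1-2e/6}$. Your telescoping/Wald justification merely makes explicit the optional-stopping details that the paper leaves implicit in its appeal to ``additive drift analysis.''
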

\begin{proof}
	Similar to the proof of Lemma~\ref{lem:driftOnC1}, for the proof of this lemma we filter the steps and only consider the steps at which at least one node of $C_1$ is included, because no change on $C_1$ can happen in all other steps.
	
	In the proof of Lemma~\ref{lem:driftOnC1}, we proved in Equation~\eqref{eq:driftOnFilteredSteps} that the drift on $|C_1|$ on the filtered steps is
	$E_f(\Delta_{C_1}^t)\geq 1-\frac{2e}{6}$.
	Using additive drift analysis and the assumption that $|C_1|=c$ at the start of the process, we can conclude that we reach $|C_1|=0$ in expected $\frac{c}{1-2e/6}$ filtered steps.
	% \textbf{Maybe this is enough?!}
	%
	%Let $X$ denote the total number of steps at which a node can be removed from $C_1$ and . The expected value of $X$
	%Moreover, in the proof of Lemma~\ref{lem:driftOnC1}, we proved in Equation~\eqref{eq:negativeDriftOnFilteredSteps} that the expected number of nodes that are added to $|C_1|$ in an accepted filtered step is upper bounded by
	%$E_f^-(\Delta_{C_1}^t) \leq  \frac{2e}{6}$.
	%This implies that in expectation at most $\left(\frac{c}{1-2e/6}\right) \left(\frac{2e}{6}\right)= \frac{2ec}{6-2e}$ nodes can be added to $C_1$ during the process of the algorithm.
	%
	%Together with the initial number of nodes in $C_1$, the expected total number of steps at which a node can be removed from $C_1$ is upper bounded by
	%$c+\frac{2ec}{6-2e}$.
\end{proof}

\begin{lemma}
	\label{lem:incrementsOnEuDuringPhase1}
	Starting with $|C|=c$, $c\geq 0$ a constant integer, the expected  number of edges that can be added to $|E_u|$ by the end of the first phase is upper bounded by $\frac{ce}{1-2e/6}$.
\end{lemma}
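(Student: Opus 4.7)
The plan is to bound the expected total addition to $|E_u|$ during the first phase by arguing that, up to lower-order terms, such additions only occur at filtered steps (those at which at least one node is removed from $C_1$), and then to multiply a per-filtered-step bound by the expected number of filtered steps supplied by Lemma~\ref{lem:TotalDecreaseThanCanHappenOnC1}.

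First I would observe that every edge that joins $E_u$ must be newly uncovered at the step it joins: the only alternative would be an already-uncovered edge $e$ losing its $C_1$-endpoint $w$, but a node leaves $C_1$ only by being included in the cover, in which case $e$ is in fact covered by $w$ after the step and does not enter $E_u$. So I may restrict attention to newly uncovered edges, and reuse the reasoning of Lemma~\ref{lem:driftOnC1}: since $P_{Acc}\geq 1/e$, the conditional flip probability satisfies $P(\text{bit}\mid Acc)\leq e/m$, whence by Lemma~\ref{lem:selectedSetsUpperBound} the expected number of newly uncovered edges at any accepted step is at most
\[
\sum_{i=1}^{m} i\cdot|E_i(s)|\cdot\frac{e}{m} \;\leq\; \frac{e(m-k)}{m} \;\leq\; e.
\]

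Applying this bound per filtered step and multiplying by the expected number of filtered steps, $c/(1-2e/6)$ from Lemma~\ref{lem:TotalDecreaseThanCanHappenOnC1} (using $|C_1|\leq|C|=c$ initially), gives the claimed $ce/(1-2e/6)$. The delicate point I expect will be to justify that non-filtered accepted steps contribute negligibly: at such a step $C_1$ is unchanged, but the move may still create a new uncovered edge disjoint from $C_1$. Because of the large penalty on uncovered edges in the fitness function, any such accepted move must be compensated by a strict decrease in the adjacency term, which forces several specific bits to flip together; the resulting $O(1/m^2)$ per-step probability, summed over the $O(m)$ expected steps of phase~1, yields only an $O(1/m)$ extra contribution that is easily absorbed into the stated constant bound.
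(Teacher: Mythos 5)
Your treatment of the filtered steps is exactly the paper's argument: the conditional flip probability $P(\text{bit}\mid Acc)\leq e/m$ from Equation~\eqref{eq:Pbit}, combined with Lemma~\ref{lem:selectedSetsUpperBound}, gives at most $e$ newly uncovered edges in expectation per filtered step, and multiplying by the bound $c/(1-2e/6)$ on the expected number of filtered steps from Lemma~\ref{lem:TotalDecreaseThanCanHappenOnC1} yields the claim. Your opening observation that edges can only enter $E_u$ by being newly uncovered (not by losing a $C_1$-endpoint) is also sound and is a point the paper leaves implicit.

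The gap is in your handling of the non-filtered steps, and it is twofold. First, the mechanism you invoke cannot occur: throughout this phase the current solution is a matching (by Lemma~22 of the GECCO paper, cited in the proof of Theorem~\ref{thm:improvedDelete}, non-matching solutions are never accepted), so the adjacency penalty term of the fitness function in Equation~\eqref{eq:fit_func} is already zero and admits no ``strict decrease'' to compensate a newly uncovered edge. The compensation that actually makes such moves acceptable comes from the \emph{uncovered-edge} penalty term: a move may select an edge of $E_u$ (covering it) while simultaneously deselecting some selected edge elsewhere, uncovering new edges that enter $E_u$; the total number of uncovered edges is unchanged, and since \EBEAE accepts when $f(s')\leq f(s)$, the swap is accepted. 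Second, your probability estimate fails: such a swap requires one of $|E_u|$ bits and one of the $|E(s)|$ selected bits to flip together, an event of probability on the order of $|E_u|\cdot|E(s)|/m^2$ per step, which is in general far larger than $O(1/m^2)$ and does not vanish when summed over the phase. The paper avoids this entirely by bounding the \emph{net} change of $|E_u|$ rather than the gross additions: at a step where $C_1$ is untouched, every change in the total uncovered count flows through $E_u$, and since the fitness function never accepts an increase in the total number of uncovered edges, the net change of $|E_u|$ at non-filtered steps is nonpositive — gross additions there are offset by removals, which is all that the downstream use in Lemma~\ref{lem:sizeOfEuAfterPhase1} (a bound on the size of $E_u$) requires. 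Reformulating your accounting in terms of net growth closes the gap; as written, the non-filtered-step estimate is wrong.
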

\begin{proof}
	During the process of the algorithm, at the steps where $C_1$ does not face a change, a change on the total number of uncovered edges can only happen through $E_u$. Therefore, at these steps, to have an accepted move, the number of edges of $E_u$ can not increase. The reason is that the fitness function is defined in such a way that increasing the total number of uncovered edges is not accepted. Hence, in order to find the increments on the number of edges of $E_u$, we only need to consider the steps in which at least one node from $C_1$ is included. We apply the same filtering on the steps that we had for the proof of Lemma~\ref{lem:driftOnC1} and find the expected number of edges that are added to $E_u$ in those steps.
	
	In Equation~\eqref{eq:Pbit} of the proof of Lemma~\ref{lem:driftOnC1}, we found the minimum probability of flipping an edge, under the condition that the move is accepted. Based on this probability, for each $0<i<m$ the expected number of edges that are deselected from $E_i(s)$ at each filtered step is $\frac{e}{m} |E_i(s)|$. Since each of them uncover $i$ edges, the expected number of uncovered edges will be $\frac{e}{m}\sum_{i=1}^m i|E_i(s)| $. On the other hand, Lemma~\ref{lem:selectedSetsUpperBound} implies that $\sum_{i=1}^m i|E_i(s)| \leq m-k$. Therefore, we conclude that  the expected number of uncovered edges that are added to $E_u$ at each filtered step is at most $\frac{e}{m} (m-k) \leq e$.
	
	Moreover, according to Lemma~\ref{lem:TotalDecreaseThanCanHappenOnC1}, the expected number of steps at which a node from $C_1$ can be included in the solution is upper bounded by $\frac{c}{1-2e/6}$. This implies that the expected increase on $|E_u|$ by the end of the first phase is upper bounded by 
	$$\left(\frac{c}{1-2e/6}\right)\cdot e = \frac{ce}{1-2e/6}\text{.}$$
\end{proof}

\begin{lemma}
	\label{lem:sizeOfEuAfterPhase1}
	Starting with $|C|=c$, $c\geq 0$ a constant integer, the expected  number of edges in $|E_u|$ by the end of the first phase is upper bounded by $5c +\frac{(4e-e^2)c}{3-e}$.
\end{lemma}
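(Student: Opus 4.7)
My plan is to split the bound into two contributions, matching the two terms of the target: the value of $|E_u|$ at the start of the phase and the expected additions during the phase.

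\textbf{Initial bound $5c$.} Immediately after the dynamic deletion, $C$ is a minimal set of vertices whose inclusion would complete the cover, so every uncovered edge has at least one endpoint in $C$. If the edge is in $E_u$, neither endpoint is in $C_1$ by the definition of $E_u$, hence at least one endpoint lies in $C\setminus C_1$. By the defining property of $C_1$, every node in $C\setminus C_1$ is adjacent to at most $5$ uncovered edges; summing over the at most $|C|=c$ such nodes yields the initial bound $|E_u|\le 5c$.

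\textbf{Additions bound $\frac{(4e-e^2)c}{3-e}$.} I would reuse the filtering scheme of Lemma~\ref{lem:incrementsOnEuDuringPhase1}, noting that only steps in which some node of $C_1$ is included can cause a net increase of $E_u$, and combine it with the expected $\frac{c}{1-2e/6}=\frac{3c}{3-e}$ filtered steps supplied by Lemma~\ref{lem:TotalDecreaseThanCanHappenOnC1}. The crude per-filtered-step bound of $e$ additions established inside the proof of Lemma~\ref{lem:incrementsOnEuDuringPhase1} gives only $\frac{3ce}{3-e}$ for the total additions, which, when added to $5c$, is strictly weaker than the target since $15-2e>15-e-e^2$. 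To reach $\frac{(4e-e^2)c}{3-e}$ I would refine the per-step count in two ways: (i) among the expected $\le e$ newly uncovered edges per filtered step, discard those that remain incident to $C_1$ and therefore never enter $E_u$; and (ii) credit the simultaneous coverage of current $E_u$-edges caused by other selections at the same step. Carrying this out under the acceptance conditioning $P(\text{bit}\mid\mathrm{Acc})\le e\,P(\text{bit})$ used in Lemma~\ref{lem:driftOnC1} should bring the per-filtered-step net addition down to $e(4-e)/3$, which multiplied by $\frac{3c}{3-e}$ yields $\frac{(4e-e^2)c}{3-e}$.

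Summing the two contributions gives the statement. The main obstacle is item~(ii): quantifying the simultaneous coverage of $E_u$-edges by other mutations at a filtered step requires tracking which selected edges at that step hit endpoints of edges currently in $E_u$, which in turn depends on the instantaneous adjacency between $C\setminus C_1$ and the selected-edge set, all under the conditional measure induced by acceptance. A direct application of Lemmas~\ref{lem:incrementsOnEuDuringPhase1} and~\ref{lem:TotalDecreaseThanCanHappenOnC1} alone is not enough; the factor $(4-e)/3$ in the refined estimate is precisely where the tightening must come from.
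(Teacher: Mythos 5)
Your two-part decomposition is exactly the paper's proof, and the first part you complete correctly: the paper derives $|E_u|\le 5c$ at the start of the phase by the same argument (every $E_u$-edge has an endpoint in $C\setminus C_1$, each such node carries at most $5$ uncovered edges). For the additions, the paper does precisely what you dismiss as the ``direct application'': it combines Lemma~\ref{lem:incrementsOnEuDuringPhase1} (at most $e$ expected additions per filtered step, via $P(\text{bit}\mid Acc)\le e/m$ and Lemma~\ref{lem:selectedSetsUpperBound}) with the expected $\frac{c}{1-2e/6}$ filtered steps from Lemma~\ref{lem:TotalDecreaseThanCanHappenOnC1}, and concludes with $5c+\frac{ce}{1-2e/6}=5c+\frac{3ce}{3-e}$ --- and stops there. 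No refined per-step estimate appears anywhere; the factor $e(4-e)/3$ you reverse-engineered has no counterpart in the paper. What your algebra has actually uncovered is an internal inconsistency of the paper itself: its proof establishes $\frac{(15-2e)c}{3-e}$, while the lemma statement claims the strictly smaller $\frac{(15-e-e^2)c}{3-e}$. The statement's constant is evidently a slip, since Theorem~\ref{thm:improvedDelete} invokes this lemma in the form $E[|E_u|]\le 10+\frac{2e}{1-2e/6}$, i.e.\ the proof's constant with $c=2$, and only needs $E[|E_u|]=O(1)$.

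That said, as a standalone proof of the literal statement your sketch does contain a genuine gap, and it is not one you could close along the lines proposed. Refinement~(ii) fails structurally: at a filtered step, the node of $C_1$ being included covers only edges incident to $C_1$, which by definition were never in $E_u$; any covering of current $E_u$-edges comes from independent extra mutations, each occurring with probability $O(1/m)$, so with $|E_u|=O(c)$ the expected ``credit'' per filtered step is $O(c/m)$, nowhere near the constant-factor reduction from $e$ to $e(4-e)/3$ that you need. Refinement~(i) has some content --- when deselecting an edge uncovers more than $5$ edges at one endpoint, that endpoint joins $C_1$ (a $C_1$-node is removed at that very step, so the membership rule fires) and those edges are excluded from $E_u$ --- but this caps each endpoint's contribution at $5$ without improving the bound $\frac{e}{m}\sum_i i\,|E_i(s)|\le e$ in the worst case, since the batches may all have size at most $5$ per endpoint anyway. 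The correct resolution is not to tighten the analysis but to repair the statement: prove and state the bound $5c+\frac{ce}{1-2e/6}$, which is what the paper's own proof delivers and all that Theorem~\ref{thm:improvedDelete} uses.
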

\begin{proof}
	By definition of $C$ and $C_1$, at the beginning of the process, all uncovered edges that do not have a node in $C_1$, must have a node in $C \setminus C_1$. The number of these nodes is upper bounded by $c$, because $|C|= c$. Also, the number of uncovered edges that are adjacent to each of them is at most $5$. Therefore,  at the start of the process, $|E_u|\leq 5c$.
	
	Moreover, according to Lemma~\ref{lem:incrementsOnEuDuringPhase1}, starting with $|C|=c$, $c\geq 0$ a constant integer, the expected  number of edges that can be added to $|E_u|$ by the end of the first phase is upper bounded by $\frac{ce}{1-2e/6}$. 
	Together with the initial number of uncovered edges in $E_u$, the total number of edges in $E_u$ is in expectation upper bounded by $5c +\frac{ce}{1-2e/6}$.
\end{proof}

Denoting  $E_u$ at step $t$ of the algorithm by $E_u^t$ and the drift on $|E_u|$ at that step by $E(\Delta^t_{E_u})=E(|E_u^t|-|E_u^{t+1}|\mid |E_u^t|) $, the following lemma proves a lower bound on  $E(\Delta^t_{E_u})$, during the second phase of the analysis.

\begin{lemma}
	\label{lem:driftOnEu}
	At a step $t$ of \EBEAE after reaching $|C_1|=0$, the drift on $|E_u|$ is $E(\Delta^t_{E_u})\geq \frac{|E_u^t|}{em}$.
\end{lemma}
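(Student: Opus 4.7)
The plan is to decompose the drift as $E(\Delta^t_{E_u}) = E^+(\Delta^t_{E_u}) - E^-(\Delta^t_{E_u})$, where $E^+$ collects the expected decrease of $|E_u|$ at step $t$ and $E^-$ the expected increase, and then to show $E^+(\Delta^t_{E_u}) \geq |E_u^t|/(em)$ and $E^-(\Delta^t_{E_u}) = 0$.

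First I would establish an invariant: once $|C_1| = 0$ is reached, $C_1$ stays empty for the remainder of the run. This follows directly from the definition of $C_1$ in Section~\ref{sec:improveDelete}, since a node is added to $C_1$ only if at the same step another element of $C_1$ is simultaneously removed, which is impossible when $|C_1|=0$. Consequently $E_u^t$ coincides with the full set of uncovered edges throughout phase~$2$, and any change of $|E_u^t|$ reflects a change in the total number of uncovered edges of $s$.

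For the positive part, I would exhibit, for every $e \in E_u^t$, the elementary event $A_e$ in which bit $e$ is flipped and no other bit is flipped. The events $\{A_e\}_{e \in E_u^t}$ are pairwise disjoint and each has probability $\frac{1}{m}\bigl(1-\frac{1}{m}\bigr)^{m-1} \geq \frac{1}{em}$. The key observation is that $A_e$ is always accepted: because $e$ is uncovered, neither endpoint of $e$ lies in $V_C(s)$, so neither endpoint is already incident to a selected edge; thus selecting $e$ creates no new adjacent pair of selected edges, and the third term of Equation~\eqref{eq:fit_func} remains unchanged. At the same time, $e$ itself (and possibly other edges sharing an endpoint with $e$) becomes covered, so the second term of $f$ drops by at least $|V|+1$, while the first term $|V_C(s)|$ grows by at most $2$. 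Hence $f$ strictly decreases, $A_e$ is accepted, and $|E_u|$ drops by at least $1$. Summing over $E_u^t$ gives $E^+(\Delta^t_{E_u}) \geq |E_u^t|/(em)$.

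For the negative part, I would use the hierarchy of coefficients in Equation~\eqref{eq:fit_func}: the adjacency coefficient $(|V|+1)(m+1)$ strictly dominates the combined maximum of the other two terms, so every accepted $s$ remains a matching, and among matchings the uncovered-edge coefficient $|V|+1$ strictly exceeds the largest possible drop $|V|$ in the $|V_C(s)|$ term. Therefore no accepted mutation can increase the total number of uncovered edges; combined with $|C_1|=0$ being preserved (so $E_u^t$ equals that total), $|E_u^t|$ is non-increasing along accepted steps and $E^-(\Delta^t_{E_u}) = 0$. The main obstacle I expect lies in the acceptance argument for $A_e$: it is essential to exploit the geometric fact that an uncovered edge has both endpoints outside $V_C(s)$, ruling out any new adjacency, and the invariance of $|C_1|=0$ so that $E_u^t$ really does capture all uncovered edges. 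Combining the two bounds yields the claimed inequality $E(\Delta^t_{E_u}) \geq |E_u^t|/(em)$.
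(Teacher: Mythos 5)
Your proposal is correct and follows essentially the same route as the paper's proof: the non-increase of $|E_u|$ once $|C_1|=0$ (which the paper attributes directly to the fitness function, and you justify via the coefficient hierarchy in Equation~\eqref{eq:fit_func}), plus the $|E_u^t|$ mutually exclusive single-bit-flip events, each accepted and occurring with probability at least $\frac{1}{m}\left(1-\frac{1}{m}\right)^{m-1}\geq\frac{1}{em}$. Your added details---the invariant that $C_1$ stays empty and the explicit acceptance argument that an uncovered edge has both endpoints outside $V_C(s)$, so selecting it preserves the matching property---are correct elaborations of steps the paper states without proof.
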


\begin{proof}
	After reaching $|C_1|=0$, the edges of $E_u$ are the only uncovered edges of the solution. Therefore, due to the definition of the fitness function, $|E_u|$ never increases during the run of the algorithm. 
	
	Moreover, selecting one edge of $E_u$ reduces the number of uncovered edges by at least one, and the move is accepted if no other mutations happen at the same step, which happens with probability $\frac{1}{m}\left(1-\frac{1}{m}\right)^{m-1}\geq \frac{1}{em}$.  There are $|E_u|$ edges in this set, resulting in $|E_u^t|$  mutually exclusive events of improving single mutation moves at each step. Therefore we can conclude that $E(\Delta_{E_u})\geq \frac{|E_u^t|}{em}$.
\end{proof}

We now prove the main theorem of this section. A dynamic change affects the graph by either deleting an edge or adding it. However, it is already shown that \EBEAE restores the quality of $2$\nobreakdash-approximation when a new edge is added dynamically in expected time $O(m)$~\cite{UsDVCGecco2015}. In Theorem \ref{thm:improvedDelete} we prove that the expected re-optimisation time of \EBEAE after a dynamic deletion is also $O(m)$. 

\begin{theorem}
	Starting with a $2$\nobreakdash-approximate solution $s$, which is a maximal matching, \EBEAE rediscovers a $2$\nobreakdash-approximation when one edge is dynamically deleted from the graph in expected time $O(m)$.
	\label{thm:improvedDelete}
\end{theorem}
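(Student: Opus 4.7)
My plan is to combine Lemmata~\ref{lem:C1Empty}--\ref{lem:driftOnEu} into a two-phase runtime bound and then add the two expected phase lengths. Because only one edge is deleted, the initial ``problem set'' $C$ contains at most the two endpoints of that edge, so $|C^{0}|\le 2$ and hence $|C_{1}^{0}|\le 2$. Applying Lemma~\ref{lem:C1Empty} with $c\le 2$ then gives an expected phase-1 duration of at most $\frac{em(1+\ln 2)}{6-2e}=O(m)$ until $|C_{1}|$ first reaches $0$.

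For phase~2 I observe that, by construction of $C_{1}$, once $|C_{1}|=0$ no node can ever enter $C_{1}$ again, so from that moment on the only uncovered edges are those in $E_{u}$. Lemma~\ref{lem:sizeOfEuAfterPhase1}, again with $c\le 2$, gives $\expect{|E_{u}^{0}|}\le 10+\frac{2(4e-e^{2})}{3-e}=O(1)$, where $|E_{u}^{0}|$ denotes the size of $E_{u}$ at the start of phase~2. The multiplicative drift bound $E(\Delta_{E_{u}}^{t})\ge |E_{u}^{t}|/(em)$ from Lemma~\ref{lem:driftOnEu} then yields, conditionally on $|E_{u}^{0}|$, an expected phase-2 duration of at most $em(1+\ln|E_{u}^{0}|)$ until all edges are covered.

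To turn the conditional phase-2 bound into an unconditional one, I would use the tower property together with concavity of $\ln$: since $\ln$ is concave and $|E_{u}^{0}|\ge 1$ whenever phase~2 is nontrivial, Jensen's inequality gives $\expect{\ln|E_{u}^{0}|}\le \ln\expect{|E_{u}^{0}|}=O(1)$. Hence the unconditional expected length of phase~2 is also $O(m)$, and summing the two phases yields the claimed $O(m)$ bound for re-optimisation.

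The step I expect to be the main obstacle is precisely this bridging step: the multiplicative drift carries a logarithmic dependence on the random starting size $|E_{u}^{0}|$, so a naive worst-case substitution $|E_{u}^{0}|\le m$ would cost a spurious factor $\log m$ and only reproduce the earlier $O(m\log m)$ bound. What makes the improvement go through is the combination of (i)~the $O(1)$ bound on $\expect{|E_{u}^{0}|}$ from Lemma~\ref{lem:sizeOfEuAfterPhase1}, which must be plugged into Jensen rather than into a worst-case estimate, and (ii)~the clean decoupling of the two phases once $|C_{1}|=0$, which relies on the earlier observation that $|E_{u}|$ can only grow at steps where $C_{1}$ changes.
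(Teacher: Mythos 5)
Your proposal is correct and follows essentially the same route as the paper's own proof: the identical two-phase decomposition, Lemma~\ref{lem:C1Empty} with $c\le 2$ for phase~1, and for phase~2 the combination of Lemma~\ref{lem:sizeOfEuAfterPhase1}, the multiplicative drift bound of Lemma~\ref{lem:driftOnEu}, and the tower property with Jensen's inequality applied to $\ln|E_u|$ --- which is exactly the paper's device for avoiding the spurious $\log m$ factor you identify. The only differences are cosmetic: you quote the constant from the statement of Lemma~\ref{lem:sizeOfEuAfterPhase1} rather than the (slightly different but still constant) expression used in the paper's proof, and you omit the trivial opening case distinction on whether the deleted edge belonged to $E(s)$, which the paper handles by citing that non-matching solutions are never accepted.
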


\begin{proof}
	Let $e=\{v_1, v_2\}$ be the edge that is deleted from the graph. If $e \notin E(s)$ then $s$ is still a maximal matching and corresponds to a $2$\nobreakdash-approximate vertex cover. If $e \in E(s)$, then it is deleted from the solution as well. The new $s$ is still a matching but may not be a maximal matching. By Lemma~22 of~\cite{UsDVCGecco2015}, we know that a non-matching solution is never accepted by the algorithm; therefore, we only need to find the expected time to reach a solution with no uncovered edges and know that it is a maximal matching, which induces a $2$\nobreakdash-approximate solution. 
	
	The number of uncovered edges of $s$ after the dynamic deletion can be in $\Omega(m)$, but all of them can be covered by including the two nodes of the deleted edge; therefore, $C\leq 2$ holds just after the dynamic change. 
	Moreover, according to Lemma~\ref{lem:C1Empty}, in expected time $\frac{em(1+\ln (2))}{6-2e}= O(m)$ the first phase ends, as the algorithm reaches a situation where $|C_1|=0$. 
	
	At the beginning of the second phase, the set $E_u$, which includes all the uncovered edges of the current solution, can have a size between 0 and $m$,  but due to Lemma~\ref{lem:sizeOfEuAfterPhase1}, we know that the expected size of this set is $E[|E_u|]\leq 10 +\frac{2e}{1-2e/6} $. If we denote by $T$, the required time until reaching $|E_u|=0$, then by the law of total expectation we have
	$$E[T]=E\left[E[T\mid |E_u|]\right]\text{.}$$
	
	Furthermore, according to Lemma~\ref{lem:driftOnEu}, during the second phase of the analysis, we have $E(\Delta^t_{E_u})\geq \frac{|E_u^t|}{em}$. Moreover, the minimum value of $|E_u|$ before reaching $|E_u|=0$ is 1. Therefore, using multiplicative drift analysis, we have 
	$$E[T\mid |E_u|]\leq \frac{1+\ln (|E_u|)}{\frac{1}{em}}= em + em\ln(|E_u|)\text{,}$$
	which implies 
	$$E[T] \leq E\left[em + em\ln(|E_u|)\right] = em + em\cdot E\left[\ln(|E_u|)\right]\text{.}$$
	Now, by applying Jensen's Inequality, we find that $E\left[\ln(|E_u|)\right]\leq \ln\left(E[|E_u|]\right)$, which together with the above inequality implies
	$$E[T]\leq em + em\cdot \ln(E[|E_u|]) \leq em + em\cdot \ln\left(10 +\frac{2e}{1-2e/6}\right)= O(m)\text{.}$$
	The last inequality holds due to Lemma~\ref{lem:sizeOfEuAfterPhase1}.
	
	Altogether,  since both phases of our analysis until finding a 2-approximate solution take $O(m)$, the theorem is proved.
\end{proof}

\subsection{Complexity Analysis for the Dynamic Vertex Cover Problem With Probabilistic Dynamic Changes }\label{sec:newDynamicSetting}

In this section, we consider the probabilistic setting for the dynamic vertex cover problem, in which a dynamic change happens on the graph at each step of the algorithm with probability $P_D$. Similar to the previous section, we assume that the maximum number of edges in the graph is $m$. The analysis of this section shows that when $P_D$ is sufficiently small, \oneplusone can find a 2-approximate solution in expected polynomial time. Moreover, we show that if a maximal matching solution is provided before the first dynamic change, \oneplusone can re-discover a 2-approximate solution in expected linear time. Assuming that $P_D\leq\frac{1}{2000em}$, in the first theorem of this section we show that our upper bound holds. 
% We assume $P_D\leq\frac{1}{2000em}$ in this analysis and, then in the first theorem of this section; we find that our upper bound holds.

We use the same definition of $C_1$ and $E_u$ that we had in section~\ref{sec:improveDelete}, except that when a dynamic change happens new nodes that can cover more than 5 edges are added to $C_1$, if they are adjacent to the  edge that has been dynamically deleted. If we did not have dynamic changes, the expected changes of $C_1$ and $E_u$ were the same as the previous section. Observe that with this definition, each dynamic change can add at most 2 nodes to $C_1$, and 10 edges to $E_u$. We start with a couple of lemmata that show the drift on $C_1$ and an upper bound on the expected time until $|C_1 |=0$. Then in Lemmata~\ref{lem:NumberOfStepsWhereC1Reduces} and~\ref{lem:IncreaseOnEu} we find the expected increase that happens on $E_u$ during the process of the algorithm until reaching $|C_1|=0$. Moreover, in Lemmata~\ref{lem:driftOnEuDuringSmallPhases} and~\ref{lem:driftConstantOnEuDuringSmallPhases}, we find the expected change that happens on $|E_u|$ during a phase in which $|C_1|=0$ holds. Finally, using all these lemmata, we prove the main results of this section in Theorems~\ref{thm:dyn_arbitrary} and~\ref{thm:dyn_2approx}.

\begin{lemma}
	\label{lem:Setting2DriftOnC1}
	If $P_D\leq\frac{1}{2000em}$, unless we reach a situation where $| C_1 |=0$, at each step of \oneplusone, $E(\Delta_{C_1}^t)\geq\frac{1996|C_1|}{4000em}$.
\end{lemma}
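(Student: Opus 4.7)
The plan is to extend Lemma~\ref{lem:driftOnC1} to the probabilistic dynamic setting by decomposing each step into the algorithm's mutation, which always happens, and the (at most one) dynamic edge change, which happens with probability $P_D$. Since the two events can be applied in sequence, by linearity of expectation the drift $E(\Delta_{C_1}^t)$ equals the sum of the contributions from each.

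For the algorithmic contribution, the argument of Lemma~\ref{lem:driftOnC1} applies verbatim (filtering by the steps in which a $C_1$-node is included, the acceptance bound $P_{Acc}\geq 1/e$, the conditional bit-flip bound~\eqref{eq:Pbit}, and the partitioning inequality of Lemma~\ref{lem:selectedSetsUpperBound}), and it yields an expected decrease in $|C_1|$ of at least $\frac{6-2e}{em}\,|C_1|$ from the algorithm's move alone. For the dynamic contribution, a single added or deleted edge can alter the adjacent-uncovered-edge count at only its two endpoints, so at most two new vertices are promoted into $C_1$ as a consequence of the change (and no vertex of $C_1$ is forced to remain in $C_1$ by it). Hence the expected dynamic contribution is at least $-2P_D$.

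Combining the two and using $P_D\leq 1/(2000em)$,
\[
E(\Delta_{C_1}^t)\ \geq\ \frac{6-2e}{em}\,|C_1|\ -\ 2P_D\ \geq\ \frac{6-2e}{em}\,|C_1|\ -\ \frac{1}{1000\,em}.
\]
Since we condition on $|C_1|\geq 1$, the subtracted constant is bounded by $|C_1|/(1000\,em)$, so
\[
E(\Delta_{C_1}^t)\ \geq\ \frac{1000(6-2e)-1}{1000\,em}\,|C_1|\ =\ \frac{5999-2000e}{1000\,em}\,|C_1|\ \geq\ \frac{1996}{4000\,em}\,|C_1|,
\]
where the last inequality uses $e<2.75$. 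The main obstacle is ensuring the algorithmic drift bound still holds when a dynamic change can occur in the same iteration; this is fine because the dynamic change is an additive perturbation, independent of whether the algorithm's move was accepted, and $m$ throughout denotes the fixed upper bound on the number of edges of the graph so the per-bit mutation probability $1/m$ remains valid across changes.
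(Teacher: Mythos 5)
Your proposal is correct and matches the paper's own proof essentially step for step: both decompose the drift into the algorithmic contribution $\frac{6-2e}{em}|C_1|$ from Lemma~\ref{lem:driftOnC1} plus a dynamic-change perturbation of at least $-2P_D \geq -\frac{2}{2000em}$, then use $|C_1|\geq 1$ to absorb the additive constant into a multiplicative bound, with the final inequality reducing in both cases to $e \leq 2.75$. Your explicit remark that the perturbation is additive and that $m$ is the fixed maximum number of edges is a slightly more careful articulation of a point the paper leaves implicit, but it is the same argument.
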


\begin{proof}
	The drift on the size of $C_1$ in the dynamic setting that we are analysing in this section consists of the expected changes that the \oneplusone makes on $C_1$ in addition to the expected changes that are caused by the dynamic changes of the graph. We denote the latter by $E(\Delta_{C_1}^D)$. Lemma \ref{lem:driftOnC1} gives us the drift on $|C_1^t|$ obtained by the \oneplusone with the value of at least $\frac{6-2e}{em}\cdot|C_1|$. Therefore, for the total drift on $|C_1^t|$ we have $$E(\Delta_{C_1}^t)\geq\frac{6-2e}{em}\cdot|C_1| +E(\Delta_{C_1}^D)\text{.}$$
	Each dynamic change adds at most two new nodes to $C_1$. Moreover, a dynamic change takes place in each step with the probability $P_D\leq\frac{1}{2000em}$. Thus, the expected increase on $|C_1|$ caused by a dynamic change in each step is  at most $\frac{-2}{2000em}$ and we have 
	$$E(\Delta_{C_1}^t)\geq\frac{6-2e}{em}\cdot|C_1| +\frac{-2}{2000em}\text{.}$$
	Knowing that $C_1\ne\emptyset$ we find 
	$$E(\Delta_{C_1}^t)\geq\frac{2000(6-2e)|C_1|-2}{2000em}\geq\frac{1996|C_1|}{4000em} \text{.}$$
\end{proof}

\begin{lemma}
	\label{lem:setting2C1Empty}
	Starting with a situation where $| C_1 |= c$, $c\geq 0$, the expected time until the algorithm reaches a situation where $| C_1 |=0$ is upper bounded by $\frac{4000em(1+\ln (c))}{1996} $. Furthermore, with probability at least $e^{-r}$, for $r>0$, the required time until reaching $| C_1 |=0$ is upper bounded by $\frac{4000em(r+\ln (c))}{1996} $.
\end{lemma}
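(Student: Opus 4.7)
The plan is a direct application of the multiplicative drift theorem of Doerr, Johannsen and Winzen, once I observe that Lemma~\ref{lem:Setting2DriftOnC1} already supplies exactly the multiplicative drift condition needed. First I would note that $|C_1^t|$ takes values in the non-negative integers, so the smallest value it attains before hitting $0$ is $x_{\min}=1$. By Lemma~\ref{lem:Setting2DriftOnC1}, as long as $|C_1^t|>0$ we have
$$E\!\left[\,|C_1^t|-|C_1^{t+1}|\,\bigm|\,|C_1^t|\,\right] \;\geq\; \frac{1996}{4000em}\,|C_1^t|,$$
i.e.\ the process $(|C_1^t|)_{t\geq 0}$ satisfies multiplicative drift with rate $\delta = 1996/(4000em)$. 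The case $c=0$ is trivial ($T=0$), so I may assume $c\geq 1$.

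For the expectation bound I would invoke the standard multiplicative drift theorem with initial value $X_0=c$ and $x_{\min}=1$, giving
$$E[T] \;\leq\; \frac{1+\ln(c)}{\delta} \;=\; \frac{4000em(1+\ln c)}{1996},$$
which is precisely the first claim. For the tail bound I would apply the multiplicative drift tail inequality: for every $r>0$,
$$\Pr\!\left[\,T > \tfrac{r+\ln c}{\delta}\,\right] \;\leq\; e^{-r},$$
equivalently with probability at least $1-e^{-r}$ the hitting time is at most $\frac{4000em(r+\ln c)}{1996}$, which matches the second claim (as it would naturally be stated).

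The main obstacle, such as it is, is purely bookkeeping: one must verify that the drift estimate of Lemma~\ref{lem:Setting2DriftOnC1} is valid \emph{at every} step $t$ with $|C_1^t|>0$ (not only at the initial step), and in particular that it continues to hold after interleaved dynamic changes. This is exactly the quantifier under which Lemma~\ref{lem:Setting2DriftOnC1} was proved, since the additive contribution $-2/(2000em)$ coming from a possible dynamic change at step $t$ was already absorbed into the stated lower bound on $E(\Delta_{C_1}^t)$. Hence the multiplicative drift theorem applies as a black box and yields both claimed bounds without further work.
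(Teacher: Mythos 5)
Your proposal is correct and takes essentially the same route as the paper's own proof, which likewise invokes Lemma~\ref{lem:Setting2DriftOnC1} and applies the multiplicative drift theorem (for the expectation) and its tail bound (for the high-probability claim) with initial value $c$ and minimum value $1$. Your form of the tail bound, ``with probability at least $1-e^{-r}$,'' is the standard and correct one; the lemma's phrasing ``with probability at least $e^{-r}$'' appears to be a typo for exactly this.
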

\begin{proof}
	With a similar argument as we had in Lemma~\ref{lem:C1Empty}, by means of Lemma~\ref{lem:Setting2DriftOnC1} and multiplicative drift analysis, we find the expected time of at most
	$$\frac{1+\ln (c)}{\frac{1996}{4000em}}= \frac{4000em(1+\ln (c))}{1996} = O(m \ln(c))$$ 
	to reach a solution $s$ where $| C_1 |=0$.
	Moreover, by multiplicative drift tail bounds, we can conclude that with probability at least $e^{-r}$ the number of required steps until reaching the desired solution is at most
	$$\frac{r+\ln (c)}{\frac{1996}{4000em}}= \frac{4000em(r+\ln (c))}{1996}\text{.}$$
\end{proof}

\begin{lemma}
	\label{lem:NumberOfStepsWhereC1Reduces}
	Starting with a situation where $| C_1 |= 2$, until the algorithm reaches $| C_1 |=0$, the expected number of steps in which $|C_1|$ changes,  is at most $22+ \frac{77}{1996}$.
\end{lemma}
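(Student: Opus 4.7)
The plan is to reuse the filtered-step drift framework from Lemma~\ref{lem:driftOnC1} and to account separately for the additional nodes that dynamic changes can inject into $C_1$. Since $|C_1|$ can only change on a filtered algorithmic step (one in which the algorithm includes at least one node of $C_1$ in the accepted solution) or on a step triggering a dynamic edge deletion, I would bound the expected total number of filtered steps and absorb the modest contribution of dynamic-change steps into the resulting slack.

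First I would invoke Lemma~\ref{lem:setting2C1Empty} with $c=2$ to cap the expected total runtime until $|C_1|=0$ by $\tfrac{4000em(1+\ln 2)}{1996}$. Using the assumption $P_D\leq \tfrac{1}{2000em}$, the expected number of dynamic changes occurring inside that time window is at most $\tfrac{2(1+\ln 2)}{1996}$, and since every dynamic change injects at most two new nodes into $C_1$, the expected total number of nodes added to $C_1$ by dynamic events is at most $\tfrac{4(1+\ln 2)}{1996}$.

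Next I would apply additive drift to $|C_1|$ restricted to filtered steps. Equation~\eqref{eq:driftOnFilteredSteps} in the proof of Lemma~\ref{lem:driftOnC1} gives a per-filtered-step expected decrease of at least $1-\tfrac{2e}{6}=\tfrac{3-e}{3}$. The algorithm must remove the initial two nodes plus, in expectation, the $\tfrac{4(1+\ln 2)}{1996}$ nodes added by dynamic changes, so the expected number of filtered steps is at most
\[
\frac{2+\tfrac{4(1+\ln 2)}{1996}}{1-\tfrac{2e}{6}} \;\leq\; 11\Bigl(2+\tfrac{4(1+\ln 2)}{1996}\Bigr),
\]
where I use the crude bound $\tfrac{3}{3-e}\leq 11$ (equivalent to $e\leq 30/11$). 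Plugging in $1+\ln 2 \leq 7/4$ then yields $22+\tfrac{77}{1996}$, matching the claim.

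The main obstacle I expect is the circular coupling between the two contributions: each dynamic change lengthens the process, which in turn admits still more dynamic changes. I resolve this by feeding the coarse all-steps runtime bound of Lemma~\ref{lem:setting2C1Empty} into the additive-drift accounting for the filtered steps, thereby decoupling the loop. The few steps on which an algorithmic change and a dynamic change happen simultaneously are counted at most once per source and comfortably fit within the gap between $\tfrac{3}{3-e}\approx 10.64$ and the integer bound $11$ used above, so no separate accounting is required.
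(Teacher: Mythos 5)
Your proposal is correct and follows essentially the same route as the paper's proof: both bound the expected number of dynamic changes by combining the runtime bound of Lemma~\ref{lem:setting2C1Empty} with $P_D\leq\frac{1}{2000em}$, and then apply additive drift with the per-filtered-step decrease $1-\frac{2e}{6}$ from Equation~\eqref{eq:driftOnFilteredSteps} to the expected total required decrease $2+\frac{4(1+\ln 2)}{1996}$. The only difference is arithmetic bookkeeping: the paper bounds the filtered steps by $22+\frac{73}{1996}$ and then adds the expected $\frac{2(1+\ln 2)}{1996}$ dynamic-change steps explicitly, whereas you round $\frac{3}{3-e}\approx 10.65$ up to $11$ and absorb those steps into the resulting slack, which is numerically valid since the slack of roughly $0.7$ vastly exceeds the $\frac{2(1+\ln 2)}{1996}$ contribution.
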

\begin{proof}
	By definition of $C_1$, changes on this set can only happen at the steps where either a dynamic change happens or at least one node of this set is included in the solution. From Equation~\ref{eq:driftOnFilteredSteps}, we know that the expected change that happens on $|C_1|$ at the steps where at least one node of this set is included in the solution, is at least $1-2e/6$. Here, we first find the expected increase that dynamic changes can cause on $|C_1|$, which together with the initial value of $|C_1|$ gives us the expected total decrease that should happen on $|C_1|$ to reach 0. Then we use the constant drift of Equation~\ref{eq:driftOnFilteredSteps}, to find the expected number of required steps of that kind.
	
	According to Lemma~\ref{lem:setting2C1Empty} in expectation it takes 
	$\frac{4000em(1+\ln (2))}{1996}$ steps to reach $|C_1|=0$, during which, in expectation $P_D\cdot\left(\frac{4000em(1+\ln (2))}{1996}\right)$ dynamic changes happen. Since $P_D\leq\frac{1}{2000em}$, the expected number of dynamic changes in this phase is $\frac{2(1+\ln (2))}{1996}$. Each dynamic change increases $|C_1|$ by at most 2. Therefore, together with the initial value of $C_1$ the expected total decrease that needs to happen on $|C_1|$ to reach $|C_1|=0$ is $2+\frac{4(1+\ln (2))}{1996}$.
	
	Now we only need to count the number of steps in which at least one node of $|C_1|$ is added to the solution. Since at each of these steps $|C_1|$ is reduced by an expected value of $1-2e/6$, 
	denoting the expected number of these steps until reaching $|C_1|=0$ by $E[T_{C_1}]$, and the total required decrease by $C_{1_D}$, we have
	$$E[T_{C_1}]= E\left[ E[T_{C_1}\mid C_{1_D}]\right] = E\left[  \frac{C_{1_D}}{1-\frac{2e}{6}}\right]\text{.}$$
	By linearity of expectation, we have
	$$E[T_{C_1}]=  \frac{E[C_{1_D}]}{1-\frac{2e}{6}} \leq \frac{2+\frac{4(1+\ln (2))}{1996}}{1-\frac{2e}{6}}<22+\frac{73}{1996}\text{.}$$
	Together with the expected number of steps where a dynamic change happens, the expected number of steps in which $|C_1|$ changes,  is upper bounded by
	$$22+\frac{73}{1996} +\frac{2(1+\ln (2))}{1996}< 22+ \frac{77}{1996}\text{.}$$
\end{proof}

Let ${E_u}^t$ and $C_1^t$ denote $E_u$ and $C_1$ at step $t$ of the run of the algorithm, respectively.
Moreover, let $ E[\Delta_{E_u}^{tt'}]=E\left[|E_u^{t'}|- |E_u^{t}| \mid | C_1^t |= 2 \wedge | C_1^{t'} |= 0 \right] $ denote the expected change that happens on the size of this set from step $t$ to $t'$, when $| C_1^t |= 2$ and $| C_1^{t'} |= 0$. In the following lemma, we find an upper bound on $E(\Delta_{E_u}^{tt'})$.

\begin{lemma}
	\label{lem:IncreaseOnEu}
	Consider steps $t$ and $t'$ where $| C_1^t |= 2$ and $| C_1^{t'} |=0$. We have $E(\Delta_{E_u}^{tt'})\leq 600+ \frac{2130}{1996} $.
\end{lemma}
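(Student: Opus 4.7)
The plan is to split the expected increase of $|E_u|$ across the interval $[t,t')$ into two well-separated sources: (i) the algorithm's mutations occurring in \emph{filtered} steps, meaning steps in which the mutation actually includes a node of $C_1$ in the solution, and (ii) the dynamic graph updates. The starting point, borrowed from the proof of Lemma~\ref{lem:incrementsOnEuDuringPhase1}, is that in any step in which neither of these two events happens the fitness function forbids an accepted move from raising the total number of uncovered edges, so $|E_u|$ cannot grow in such steps. I would then apply a Wald-style linearity-of-expectation argument over the stopping time $t'$ to bound the expected number of each kind of contributing step.

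For the algorithm-induced contribution I would reuse the calculation from Lemma~\ref{lem:incrementsOnEuDuringPhase1}: combining the acceptance-conditional bit-flip bound $P(\text{bit}\mid Acc)\leq e/m$ of Equation~\eqref{eq:Pbit} with Lemma~\ref{lem:selectedSetsUpperBound} shows that a single filtered step contributes at most $e$ new edges to $E_u$ in expectation. The expected number of filtered steps occurring on the way from $|C_1^t|=2$ to $|C_1^{t'}|=0$ is bounded by the quantity $22+\tfrac{73}{1996}$ that appears inside the proof of Lemma~\ref{lem:NumberOfStepsWhereC1Reduces} (obtained by dividing the expected total decrease $2+\tfrac{4(1+\ln 2)}{1996}$ that $|C_1|$ must absorb by the constant filtered-step drift $1-2e/6$). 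Multiplying these two estimates gives the algorithmic share of the increase.

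For the dynamic contribution I would first use Lemma~\ref{lem:setting2C1Empty} to bound $E[t'-t]\leq \tfrac{4000em(1+\ln 2)}{1996}$, and since dynamic changes occur independently with probability $P_D\leq\tfrac{1}{2000em}$ in each step, the expected number of dynamic changes inside $[t,t')$ is at most $P_D\cdot E[t'-t]\leq\tfrac{2(1+\ln 2)}{1996}$. A single dynamic change modifies exactly one edge, can add at most two new nodes to $C_1$, and, in the worst case, pushes at most ten uncovered edges (the newly affected edge together with the uncovered neighbours of its endpoints that are not in $C_1$) into $E_u$, which is precisely the $10$-per-change bound built into the definitions at the start of this section.

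Summing the two contributions and collecting the small terms over the common denominator $1996$ then produces an upper bound of the claimed form $600+\tfrac{2130}{1996}$, with the large absolute constant providing comfortable slack for the rougher worst-case estimates used along the way. The main obstacle I anticipate is verifying that the acceptance-probability inequality $P_{\mathrm{Acc}}\geq 1/e$ used inside Lemma~\ref{lem:driftOnC1} survives in the presence of simultaneous dynamic changes and therefore still underwrites the per-filtered-step bound of $e$. This reduces to observing that a co-occurring dynamic change multiplies $P_{\mathrm{Acc}}$ by at most $(1-P_D)$, and since $P_D\leq\tfrac{1}{2000em}$ the correction is negligible and absorbed into the final constant.
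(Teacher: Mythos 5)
Your proposal is correct and shares the paper's skeleton --- the same two-source decomposition (steps where a node of $C_1$ is included versus steps with a dynamic change, all other accepted steps being unable to increase $|E_u|$), the same bound of at most $10$ new $E_u$-edges per dynamic change with $E[X]\leq \frac{2(1+\ln 2)}{1996}$ expected changes via Lemma~\ref{lem:setting2C1Empty} --- but it bounds the algorithmic share by a genuinely different accounting. The paper bounds the expected number of \emph{deselected} edges per $C_1$-changing step by $e$ (via Equation~\eqref{eq:Pbit}) and then conservatively charges each deselected edge up to $10$ edges added to $E_u$ (endpoints with more than $5$ uncovered edges being absorbed into $C_1$ at such steps), arriving at $E[Z]\leq 220e+\frac{770e}{1996}$; this factor of $10$ is precisely where the constant $600$ originates. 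You instead reuse the per-step calculation of Lemma~\ref{lem:incrementsOnEuDuringPhase1}: conditional on acceptance each edge flips with probability at most $e/m$, and by Lemma~\ref{lem:selectedSetsUpperBound} the expected number of edges \emph{uncovered} in one filtered step is at most $\frac{e}{m}\sum_{i} i\cdot|E_i(s)|\leq e$, so multiplying by your (correct) filtered-step count $22+\frac{73}{1996}$ from the proof of Lemma~\ref{lem:NumberOfStepsWhereC1Reduces} yields roughly $22e$ --- a factor of ten tighter, since you charge the actual number of newly uncovered edges rather than $10$ per deselection, and Lemma~\ref{lem:selectedSetsUpperBound} is a static property of the current solution that transfers unchanged to the probabilistic setting. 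The one presentational inaccuracy is your claim that the sum ``produces'' $600+\frac{2130}{1996}$: your computation actually gives a bound near $60$, which of course still implies the stated inequality, whereas the paper's larger constant is forced by its coarser accounting. Finally, your explicit check that $P_{Acc}\geq 1/e$ degrades only by a factor $(1-P_D)$ when a dynamic change may co-occur in the same step is a subtlety the paper passes over in silence, and your resolution of it is correct.
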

\begin{proof}
	During the process of the algorithm, at the steps where neither a dynamic change happens nor $C_1$  changes, to have an accepted move, the number of edges of $E_u$ cannot increase. Therefore, we only find the expected increase on $|E_u|$ during the dynamic changes and also during the steps where $C_1$ is changed.
	
	Let $X$ be a random variable denoting the number of dynamic changes that happen before reaching $C_1=0$. The expected increase of $|E_u|$ as a result of these changes, denoted by $Y$, is $E[Y|X=x]\leq 10x$, because each dynamic change increases $|E_u|$ by at most 10.  Similar to the proof of Lemma~\ref{lem:NumberOfStepsWhereC1Reduces}, we can show that the expected value of $X$ is at most $\frac{2(1+\ln (2))}{1996}$. By the law of total expectation and by linearity of expectation we find the expected value of $Y$ as $E[Y]= E[E[Y|X]]= E[10X]= 10E[X]\leq  \frac{20(1+\ln (2))}{1996}<\frac{34}{1996}\text{.}$ 
	
	Let $X'$ be the number of steps at which $C_1$  changes before reaching $C_1=0$, and $X''$ be the number of edges that are deselected from the solution at these steps. In Equation~\eqref{eq:Pbit} of the proof of Lemma~\ref{lem:driftOnC1}, we found an upper bound on the probability of flipping an edge, under the condition that the move is accepted. Based on this probability, the expected number of edges that are deselected from the solution at each step is at most $e$. Therefore, the expected number of these edges at $X'=x'$ steps is $E[X''|X'=x']\leq ex'$. Moreover, by Lemma~\ref{lem:NumberOfStepsWhereC1Reduces} we have $E[X']\leq 22+ \frac{77}{1996}$. By the law of total expectation, we find $E[X'']=E[E[X''|X']]\leq E[eX']= eE[X']\leq 22e+ \frac{77e}{1996}$.

	Since each of these  edges, when deselected, uncover at most $10$ edges, at most $10X''$ edges are added to $E_u$ at these steps. Denoting the increase of $|E_u|$ as a result of these changes by $Z$, by the law of total expectation we find the expected value of $Z$ as $E[Z]= E[E[Z|X'']]\leq E[10X'']=10E[X'']= 220e+ \frac{770e}{1996}$.

	Together with the increase that happens at the steps of dynamic changes, the expected increase on $|E_u|$ until reaching $|C_1|=0$ is upper bounded by
	$$220e+ \frac{770e}{1996}+ \frac{34}{1996}<600+\frac{2130}{1996}\text{.}$$
\end{proof}

\begin{lemma}
	Consider a phase of $X\geq em$ steps in which dynamic changes do not happen and $|C_1|=0$ holds. Let $|E_u^{str}|$ and $|E_u^{end}|$ be the value of $|E_u|$ at the start and end of this phase, respectively. We have 
	$$E\left[|E_u^{str}|-|E_u^{end}|\mid |E_u^{str}|\right]\geq  (1-1/e)|E_u^{str}|\text{.}$$
	\label{lem:driftOnEuDuringSmallPhases}
\end{lemma}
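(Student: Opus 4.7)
The plan is to apply the per-step multiplicative drift established in Lemma~\ref{lem:driftOnEu} iteratively across the $X$ steps of the phase, and then exploit the inequality $(1-1/(em))^{em} \le 1/e$ together with the assumption $X\ge em$.

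First I would verify that Lemma~\ref{lem:driftOnEu} remains applicable at every single step $t$ of the phase. Its hypotheses are $|C_1|=0$ (so that $E_u$ coincides with the full set of uncovered edges) and the absence of dynamic changes (so that $|E_u|$ cannot grow, since the fitness function strictly penalises any increase in the number of uncovered edges). Both are assumed to hold throughout the phase, so for every step $t$ in the phase we have
\begin{equation*}
E\bigl[|E_u^{t+1}|\,\big|\,|E_u^t|\bigr] \le \Bigl(1-\tfrac{1}{em}\Bigr)|E_u^t|.
\end{equation*}

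Next I would iterate this inequality by induction on the number of steps, using the tower property of conditional expectation. Denoting the $X$ steps of the phase by $t=0,1,\dots,X$ with $|E_u^0|=|E_u^{str}|$ and $|E_u^X|=|E_u^{end}|$, I get
\begin{equation*}
E\bigl[|E_u^{end}|\,\big|\,|E_u^{str}|\bigr] \le \Bigl(1-\tfrac{1}{em}\Bigr)^{X}|E_u^{str}|.
\end{equation*}
Since $X\ge em$ and the base $1-1/(em)\in(0,1)$, this is upper bounded by $(1-1/(em))^{em}|E_u^{str}|\le (1/e)|E_u^{str}|$, using the standard estimate $(1-1/k)^k\le 1/e$ for $k\ge 1$. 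Subtracting this from $|E_u^{str}|$ and rearranging yields the desired bound
\begin{equation*}
E\bigl[|E_u^{str}|-|E_u^{end}|\,\big|\,|E_u^{str}|\bigr] \ge \bigl(1-1/e\bigr)|E_u^{str}|.
\end{equation*}

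The only subtlety I foresee is the justification that the conditional drift inequality may be composed across steps even though $|E_u^t|$ is itself a random variable: this is handled cleanly by the tower rule, conditioning successively on $|E_u^{X-1}|, |E_u^{X-2}|,\dots,|E_u^0|$. No ingredient beyond Lemma~\ref{lem:driftOnEu}, the monotonicity of $|E_u|$ under the fitness function, and the elementary bound on $(1-1/k)^k$ is required.
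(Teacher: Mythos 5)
Your proposal is correct and follows essentially the same route as the paper's own proof: apply the per-step multiplicative drift of Lemma~\ref{lem:driftOnEu} repeatedly to get $E\left[|E_u^{end}|\mid |E_u^{str}|\right]\leq \left(1-\frac{1}{em}\right)^X|E_u^{str}|$, then use $X\geq em$ and $(1-1/(em))^{em}\leq 1/e$ and rearrange. Your explicit appeal to the tower property merely spells out the step-composition that the paper states informally, so there is no substantive difference.
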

\begin{proof}
	Since we have assumed that $C_1$ is empty in this phase, we can use Lemma~\ref{lem:driftOnEu} and conclude that after one step, the expected value of $|E_u|$ is at most $(1-\frac{1}{em})|E_u^{str}|$. Similarly, after each step, a coefficient of $1-\frac{1}{em}$ is multiplied to this value, which implies that after $X$ steps, the expected value of $|E_u|$ is 
	$$E\left[|E_u|\mid |E_u^{str}|\right]\leq \left(1-\frac{1}{em}\right)^X|E_u^{str}|\text{.}$$
	Moreover, we have assumed that $X\geq  em$ holds. Since $(1-1/em)^{em}<1/e$, the right hand side of the inequality above is simplified as 
	%$$E\left[|E_u^{end}|\mid |E_u^{str}|\right]\leq \frac{1}{2}(1-\frac{1}{em})^{em/2}|E_u^{str}| + \frac{1}{2}|E_u^{str}|.$$
	$$E\left[|E_u|\mid |E_u^{str}|\right]\leq \frac{1}{e}|E_u^{str}|\text{,}$$
	which implies 
	$$E\left[|E_u^{str}|-|E_u^{end}|\mid |E_u^{str}|\right]\geq  (1-1/e)|E_u^{str}|\text{.}$$
\end{proof}

\begin{lemma}
	Consider a phase of $X$ steps, $X$ a random variable, in which dynamic changes do not happen and $|C_1|=0$ holds. Let $|E_u^{str}|$ and $|E_u^{end}|$ be the value of $|E_u|$ at the start and end of this phase, respectively. Assuming that $E[X] \geq cem$, $c\geq 0$, we either reach $|E_u|=0$ before the end of this phase, or we have 
	$$E\left[|E_u^{str}|-|E_u^{end}|\mid |E_u^{str}| \right]\geq  c\text{.}$$
	\label{lem:driftConstantOnEuDuringSmallPhases}
\end{lemma}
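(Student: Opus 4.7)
The plan is to propagate the per\nobreakdash-step drift from Lemma~\ref{lem:driftOnEu} through the random phase length $X$ by telescoping, and then to split the conclusion along the two possible outcomes. First I would observe that during the phase $|E_u|$ is monotonically non\nobreakdash-increasing: since $|C_1|=0$ throughout, $E_u$ coincides with the entire set of uncovered edges, and the fitness function~\eqref{eq:fit_func} penalises every uncovered edge by a factor of $|V|+1$, so any mutation that strictly increases the number of uncovered edges is rejected. Together with the assumption that no dynamic change occurs inside the phase, this yields $|E_u^{str+t+1}|\leq |E_u^{str+t}|$ deterministically.

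Next, let $\tau$ be the first step within the phase at which $|E_u|=0$ (set $\tau:=\infty$ if this never happens) and write $T:=\min(\tau,X)$. If $\tau\leq X$, then $|E_u|=0$ is reached before the phase ends and the first disjunct of the conclusion is satisfied, so I focus on the complementary regime $\tau>X$, in which $T=X$ and $|E_u^{str+t}|\geq 1$ for every $t<X$. In this regime Lemma~\ref{lem:driftOnEu} applies at every step and gives the state\nobreakdash-independent lower bound $E[\,|E_u^{str+t}|-|E_u^{str+t+1}|\mid |E_u^{str+t}|\,]\geq |E_u^{str+t}|/(em)\geq 1/(em)$.

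Summing this pointwise drift over the random horizon via a telescoping identity and the tower rule gives
\begin{equation*}
E[\,|E_u^{str}|-|E_u^{str+T}|\mid |E_u^{str}|\,]=\sum_{t\geq 0}E[\,(|E_u^{str+t}|-|E_u^{str+t+1}|)\cdot\mathds{1}[T>t]\mid |E_u^{str}|\,]\geq \frac{E[\,T\mid |E_u^{str}|\,]}{em}\text{.}
\end{equation*}
In the case $\tau>X$ we have $T=X$, so the right\nobreakdash-hand side is at least $E[X\mid |E_u^{str}|]/(em)\geq cem/(em)=c$, yielding the second disjunct. The main obstacle is handling the random phase length cleanly: the crucial feature that makes the argument go through is that the per\nobreakdash-step drift bound $1/(em)$ is state\nobreakdash-independent once $|E_u|\geq 1$, so the telescoping sum above is valid regardless of how $X$ is distributed or correlated with the algorithm's history, and the two cases of the conclusion correspond exactly to whether $|E_u|$ is absorbed at $0$ inside the phase or not.
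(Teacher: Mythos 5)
Your proof is correct and follows essentially the same route as the paper's: a case split on whether $|E_u|$ is absorbed at $0$ inside the phase, the per-step drift bound $E[\Delta^t_{E_u}]\geq |E_u^t|/(em)\geq 1/(em)$ from Lemma~\ref{lem:driftOnEu}, and a law-of-total-expectation argument over the random phase length to conclude $E[X]/(em)\geq c$, with your telescoping identity over $T=\min(\tau,X)$ being just a more explicit rendering of the paper's conditioning on $X=x$. One caution: your closing claim that the telescoping is valid ``regardless of how $X$ is distributed or correlated with the algorithm's history'' overreaches --- lower-bounding each term $E\left[(|E_u^{str+t}|-|E_u^{str+t+1}|)\cdot\mathds{1}[T>t]\right]$ by $\Pr\left(T>t\right)/(em)$ requires $\{T>t\}$ to be measurable with respect to (or independent of) the history up to step $t$, which does hold in the intended application because phase ends are triggered by the independent per-step dynamic-change events, the same implicit assumption underlying the paper's own proof.
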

\begin{proof}
	Similar to the proof of Lemma~\ref{lem:driftOnEuDuringSmallPhases}, we can use Lemma~\ref{lem:driftOnEu}, which gives us the drift on $|E_u|$ at a step $t$ as $E[\Delta^t_{E_u}]\geq \frac{|E_u^t|}{em}$. If we reach  $|E_u|=0$ before the end of this phase, then the statement is proved. Otherwise, at each step $t$ of the considered phase we have $E[\Delta^t_{E_u}]\geq \frac{1}{em}$. Therefore, by the law of total expectation we have
	$$E\left[|E_u^{str}|-|E_u^{end}| \mid |E_u^{str}| \right]= E\left[ E\left[|E_u^{str}|-|E_u^{end}| \mid |E_u^{str}| \wedge X=x\right]\right] \geq E\left[\frac{X}{em}\right]\text{.}$$
	We apply linearity of expectation, and use $E[X]\geq 2000em$ to conclude that 
	$$E\left[|E_u^{str}|-|E_u^{end}|\mid |E_u^{str}|\right]\geq E\left[ \frac{X}{em}\right] = \frac{E[X]}{em} \geq  c\text{.}$$
\end{proof}

Using these lemmata, we now prove Theorem~\ref{thm:dyn_arbitrary} and Theorem~\ref{thm:dyn_2approx}, in which our main results are stated for situations where the solution before the first dynamic change is arbitrary, or a maximal matching, respectively.

\begin{theorem}
	Consider the dynamic vertex cover problem where an arbitrary edge is dynamically added to or deleted from the graph with probability $P_D\leq\frac{1}{2000em}$ at each step. Starting with an arbitrary solution $s$, \EBEAE finds a $2$\nobreakdash-approximate solution for this problem in expected time $O(m\log{m})$.
	\label{thm:dyn_arbitrary}
\end{theorem}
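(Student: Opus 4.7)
The plan is a two-stage analysis that combines the static-case bound of Jansen, Oliveto and Zarges with the lemmata of this section. Starting from an arbitrary bit-string $s$, I first drive the algorithm into a matching (zero adjacency penalty in~\eqref{eq:fit_func}), and then shrink $|E_u|$ down to zero from that matching state.

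For the first stage, the static analysis of~\cite{DBLP:conf/foga/JansenOZ13} shows that, under the fitness function~\eqref{eq:fit_func}, the number of pairs of adjacent selected edges decreases to $0$ in expected $O(m\log m)$ steps from any starting point. Each dynamic change introduces at most a constant number of new adjacent pairs, and since $P_D \leq 1/(2000em)$, the expected number of dynamic changes in the first $c m \log m$ steps is only $O(\log m)$. A standard additive-perturbation argument on top of the static multiplicative drift of the adjacency count then absorbs the dynamic contribution and still gives $O(m\log m)$ expected time for this stage.

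For the second stage, starting from a matching we have $|C_1|=0$, initially at most a constant in the transitions after a dynamic change, and $|E_u|\leq m$; the argument then follows the template of Theorem~\ref{thm:improvedDelete}, iterated over the potentially many dynamic changes. I would partition the remainder of the run into sub-intervals separated by dynamic changes: inside any such interval, Lemma~\ref{lem:setting2C1Empty} first drives $|C_1|$ back to $0$ in expected $O(m)$ steps, and once $|C_1|=0$ holds Lemma~\ref{lem:driftOnEu} gives a multiplicative shrinkage of $|E_u|$ at rate $1/(em)$ per step, as formalised over long enough windows by Lemma~\ref{lem:driftOnEuDuringSmallPhases}. A dynamic change adds at most $2$ to $|C_1|$ and at most $10$ to $|E_u|$, and in expectation only $O(\log m)$ such changes occur during the candidate $O(m\log m)$ window.

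The main obstacle is to glue these per-interval drifts together into a global bound without conditioning on the random locations of the dynamic changes. The cleanest route is to analyse the combined potential $\Phi_t = |E_u^t| + \alpha\,|C_1^t|$ for a suitable constant $\alpha>0$, and verify that $E[\Phi_{t+1}\mid\mathcal F_t] \leq \bigl(1-\Omega(1/m)\bigr)\Phi_t + O(P_D)$ by combining the bounds of Lemmata~\ref{lem:Setting2DriftOnC1} and~\ref{lem:driftOnEu} with the worst-case jump $10+2\alpha$ contributed per dynamic change (which occurs with probability $P_D$ per step). Because $P_D\leq 1/(2000em)$, the additive noise $O(P_D)$ is dwarfed by the algorithmic shrinkage whenever $\Phi_t=\Omega(1)$, so a multiplicative drift argument on $\Phi_t$ starting from $\Phi_0=O(m)$ yields the desired $O(m\log m)$ bound for the second stage; summing the two stages completes the proof.
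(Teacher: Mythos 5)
Your stage~1 is essentially the paper's phase $Ph_1$ (Jansen et al.\ carried over to the dynamic setting; in fact neither adding an unselected edge nor deleting an edge creates new adjacent selected pairs, so your perturbation argument is unnecessary but harmless). The genuine problem is in stage~2, in the claimed per-step inequality $E[\Phi_{t+1}\mid\mathcal F_t]\leq(1-\Omega(1/m))\Phi_t+O(P_D)$ for $\Phi_t=|E_u^t|+\alpha|C_1^t|$. You assert this follows from Lemmata~\ref{lem:Setting2DriftOnC1} and~\ref{lem:driftOnEu} plus a jump of $10+2\alpha$ per dynamic change, but it does not: Lemma~\ref{lem:driftOnEu} is only valid \emph{after} $|C_1|=0$, and while $|C_1|>0$ the set $E_u$ can grow at ordinary accepted EA steps --- precisely the multiple-bit-flip phenomenon of Figure~\ref{fig:multi-bit-flip}, where a node of $C_1$ is incorporated and, in the same accepted offspring, covering edges are deselected, dumping newly uncovered edges into $E_u$. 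These increases are not dynamic-change jumps occurring with probability $P_D$; they occur with probability up to $\Theta(1)$ per step (a $C_1$ node may be adjacent to $\Omega(m)$ uncovered edges), and neither cited lemma controls them. This is exactly the difficulty the paper isolates and spends Lemmata~\ref{lem:TotalDecreaseThanCanHappenOnC1}, \ref{lem:incrementsOnEuDuringPhase1} and~\ref{lem:IncreaseOnEu} on, via the conditional mutation bound $P(\text{bit}\mid Acc)\leq e/m$ from the proof of Lemma~\ref{lem:driftOnC1}.

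The potential-function route can be repaired, but only by importing that machinery: conditioned on a ``filtered'' step in which a $C_1$ node is included, the expected increase of $|E_u|$ is at most $e$ (by $P(\text{bit}\mid Acc)\leq e/m$ and Lemma~\ref{lem:selectedSetsUpperBound}), while the expected decrease of $|C_1|$ is at least $1-2e/6$ (Equation~\eqref{eq:driftOnFilteredSteps}); choosing $\alpha>e/(1-2e/6)\approx 29$ makes the filtered-step contribution to $\Phi$ strictly positive, and since filtered steps occur with probability at least $6|C_1|/(em)$, this supplies the $\Omega(|C_1|/m)$ share of the drift, with single-flip selections of $E_u$ edges supplying the $|E_u|/(em)$ share on the disjoint event. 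With $P_D\leq\frac{1}{2000em}$ the additive term $P_D(10+2\alpha)$ is then indeed dominated, and multiplicative drift from $\Phi_0=O(m)$ gives $O(m\log m)$. Note also that your claim ``starting from a matching we have $|C_1|=0$'' needs the bookkeeping made explicit: the paper instead starts $Ph_2$ with $|C_1|=O(n)$ and empties it via Lemma~\ref{lem:setting2C1Empty} (alternatively one may declare $C_1=\emptyset$ at the end of stage~1 and take $E_u\leq m$, but one of the two conventions must be fixed). If these repairs are made, your single-potential argument is a genuinely different and arguably cleaner gluing than the paper's decomposition into phases $P_A^i,P_B^i$ with filtration at sub-phase boundaries and a two-regime (multiplicative then additive) drift; as written, however, the central inequality is unsupported.
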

\begin{figure}[t]
	\centering
	\includegraphics[width=1\textwidth]{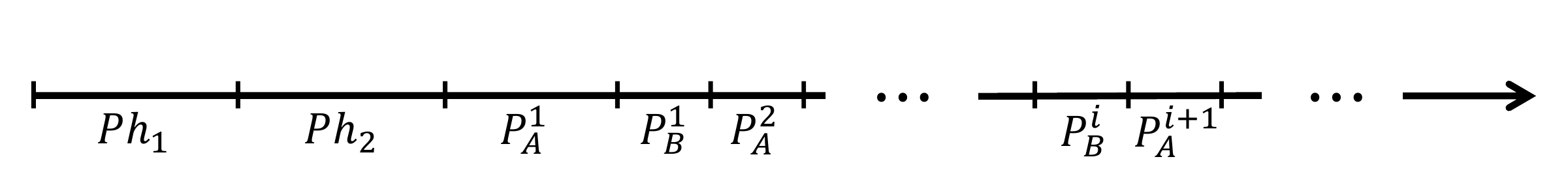}
	\caption{Different phases used in the proofs of Theorems \ref{thm:dyn_arbitrary}, where $Ph_3$ starts with $P_A^1$. Phases $Ph_1$, $Ph_2$ and $P_A^1$ take expected time $O(m\log{m})$, $O(m\log{m})$ and $O(m)$, respectively. Moreover, for all $i\geq 1$ the expected number of steps in sub-phases $P_B^i$ + $P_A^{i+1}$ is in $O(m)$.}
	\label{fig:convex}
\end{figure}
\begin{proof}
	We split the analysis into three phases, $Ph_1$, $Ph_2$ and $Ph_3$, where the last one consists of several sub-phases (Figure~\ref{fig:convex}). The algorithm finds a matching in $Ph_1$, reduces $|C_1|$ to 0 for the first time in $Ph_2$, and finds a maximal matching in $Ph_3$. Jansen et~al.~\cite{DBLP:conf/foga/JansenOZ13} have proved in Theorem~11 of their paper that \EBEAE with their specified fitness function finds a matching in expected time $O(m\log{m})$ for the static version of the problem. In their proof they show that deselecting an edge that shares a node with another selected edge improves the fitness and is always accepted until the selected edges do not share nodes. They show that this problem is easier than OneMax; therefore, the algorithm finds a matching in expected time $O(m\log m)$. The changes that happen on the graph in the dynamic version of the problem (adding an edge or removing an edge) do not increase the number of shared nodes between the selected edges. Therefore, their proof holds for the dynamic vertex cover problem as well, \ie $Ph_1$ needs expected time $O(m\log m)$.
	Moreover, due to the definition of the fitness function, a solution that is not a matching is not accepted after finding a matching; therefore, the algorithm never switches back to $Ph_1$.
	
	Now we analyse $Ph_2$. At the beginning of this phase, the set $C_1$ contains $n'=O(n)$ nodes. Using Lemma~\ref{lem:setting2C1Empty}, we find the expected time to reach $|C_1|=0$ as
	$$\frac{4000em(1+\ln (n'))}{1996} =O(m \log n) \text{.}$$
	
	Now we analyse $Ph_3$. We split this phase into several smaller phases, namely $P^i_A$ and $P^i_B$, for $i \geq 0$. At the beginning of $Ph_3$, the set $C_1$ contains no nodes. We define phases $P^i_A$, such that $|C_1| =0$ holds. When a dynamic change happens, which can increase $|C_1|$, we switch to $P^{i}_B$. Moreover, as soon as the algorithm finds a solution with $|C_1|=0$, we are switched to $P^{i+1}_A$.  Depending on the size of phases $P_A^i$,  $i\geq 0$, we split them to one or more sub-phases, and filter the steps by considering the last step of  these sub-phases. We analyse the expected change that happens on $|E_u|$ in the filtered steps, %In other words, we analyse the changes on $|E_u|$ during $P^i_B$ and $P^{i+1}_A$ and find the expected change on $|E_u|$, from the last step of $P^i_A$ until the last step of $P^{i+1}_A$. By applying a filtration such that only these steps are considered, 
	and use multiplicative drift analysis to find the expected number of required sub-phases until we reach a solution in which $|E_u|$ is a constant. Then we consider the last steps of $P^i_A$ and use additive drift analysis to find the expected time until reaching $|E_u|=0$. Note that, by definition, at filtered steps we also have $|C_1|=0$, because these steps are in $P^i_A$,  $i \geq 0$.
	
	First we find a  lower bound on the expected number of steps of phases $P_A^{i+1}$, $i\geq 0$. We denote the number of steps of a phase $P$ by $|P|$. The expected number of steps of $P_A^i$ and $P_B^i$ are thus denoted by $E[|P_A^i|]$ and $E[|P_B^i|]$, respectively. 
	We find a lower bound on $|P^i_B|+|P^{i+1}_A|$ and an upper bound on $|P_B^i|$ and conclude a lower bound on $|P_A^{i+1}|$. Each phase $P^i_B$, $i\geq 0$, starts with a dynamic change; therefore, from the beginning of phase $P^i_B$, until the beginning of phase $P^{i+1}_B$ we have at least one dynamic change. Note that more than one dynamic change may happen in phase $P^i_B$, but what we need to note here is that the phase $P^{i+1}_B$ cannot start unless we have at least one dynamic change. Pessimistically we assume that the first dynamic change starts $P_B^{i+1}$. Since the probability of a dynamic change is  $P_D\leq \frac{1}{2000em}$, the probability of not starting $P^{i+1}_B$ after $\frac{0.99}{2P_D}$ steps from the start of $P^i_B$ is  
	\begin{eqnarray}
	\label{eq:minProbability}
	(1-{P_D})^\frac{0.99}{2P_D} = \left((1-{P_D})^\frac{0.99}{P_D}\right)^{1/2}\geq  \left(\frac{1}{e}\right)^{1/2}>0.6\text{,}
	\end{eqnarray}
	which implies that with probability at least $0.6$, we have $|P^i_B|+|P^{i+1}_A|\geq \frac{0.99}{2P_D}$. Furthermore, from Lemma~\ref{lem:setting2C1Empty} we know that with probability at least $\frac{5}{6}$, we have $|P^i_B|\leq \frac{4000em(\ln (6/5)+\ln (2))}{1996}$. Altogether, with probability at least $0.6 (5/6)= 0.5$, the number of steps of phase $P^{i+1}_A$ is at least
	\begin{eqnarray*}
		|P^{i+1}_A|\geq \frac{0.99}{2P_D}- \frac{4000em(\ln (6/5)+\ln (2))}{1996}.
	\end{eqnarray*}	
	Since $P_D\leq \frac{1}{2000em}$, we can simplify this inequality to show how many phases of size $em$ fits in the phase $P^{i+1}_A$. Assuming $D=\frac{1}{2000emP_D} $ we find
	\begin{eqnarray*}
		|P^{i+1}_A| &\geq & \frac{0.99(2000emD)}{2}- \frac{4000em(\ln (6/5)+\ln (2))}{1996}\\
		&\geq & em\left(\frac{0.99(2000D)}{2}- \frac{4000(\ln (6/5)+\ln (2))}{1996}\right) \\
		&\geq & em( 990D - 2)\text{.}\\
	\end{eqnarray*}		
	Moreover, since $P_D\leq \frac{1}{2000em}$, we have $D\geq 1$, which gives us
	\begin{eqnarray}
	\label{eq:sizeOfPhaseA}
	|P^{i+1}_A| &\geq & 988Dem
	\end{eqnarray}		
	with probability at least $1/2$. 
	Now we split  phase $P_A^{i+1}$ based on its size:
	\begin{itemize}
		\item If $|P^{i+1}_A| < em$ then we only define one sub-phase, $P^{i+1}_{A_0}$, that is the same size as $P^{i+1}_A$. 
		\item If $|P^{i+1}_A| \geq em$ then for some $l\in \mathbb{N}$ and $0\leq \epsilon<1$ we have $|P^{i+1}_A|= (l+\epsilon) em$. We split $P^{i+1}_A$ to a sub-phase of length $(1+\epsilon)em$, namely $P^{i+1}_{A_0}$, and $l-1$ sub-phases of length $em$, namely $P^{i+1}_{A_k}$, $1\leq k\leq l-1$. 
	\end{itemize}
	Note that from Equation~\ref{eq:sizeOfPhaseA} we can conclude that with probability at least $1/2$ we have $|P_{A_0}^{i+1}|\geq em$ for $i\geq 0$.
	
	We filter the steps and only consider the last step of all of sub-phases $P_{A_k}^{i+1}$, $i\geq 0$ and $k\geq 0$, and denote them by $t^{i+1}_k$. We also denote the value of $|E_u|$ at step $t^{i+1}_k$ by $|E_u^{t^{i+1}_k}|$. Now we find the expected change on $|E_u|$ at the filtered steps. % and use drift analysis to find the expected number of filtered steps that are required until we reach $|E_u|=0$.
	
	We know that the length of sub-phases $P^{i+1}_{A_k}$, $i\geq 1$ and $1\leq k\leq l-1$ is $em$. Therefore, for these sub-phases the condition of Lemma~\ref{lem:driftOnEuDuringSmallPhases} holds and the  drift on $|E_u|$ in filtered steps where $0\leq k\leq l-2$, denoted by $E\left[\Delta^f_{E_u^{t^i_k}}\right]$, is 
	\begin{eqnarray}
	\label{eq:DriftOnSomeOfFilteredSteps}
	E\left[\Delta^f_{E_u^{t^i_k}}\right]\geq E\left[|E_u^{t^{i}_k}|- |E_u^{t^{i}_{k+1}}|\mid |E_u^{t^{i}_k}|\right] \geq (1-1/e)|E_u^{t^i_k}|\text{.}
	\end{eqnarray}
	
	Moreover, we know that with probability at least $1/2$ the condition of Lemma \ref{lem:driftOnEuDuringSmallPhases} holds for sub-phases $P^{i+1}_{A_0}$. Furthermore, we know that the size of $E_u$ is never increased in $P^{i+1}_A$, because we have $|C_1|=0$ and  assumed that no dynamic changes happen  in this phase.  Therefore, we can use the law of total expectation and  find that the expected change on $|E_u|$ during sub-phases of $P^{i+1}_{A_0}$ is at least $ \frac{1}{2}(1-1/e)|E_u^{s}|$,
	where $|E_u^{s}|$ is the value of $|E_u|$ at the last step before the start of $P_{A_0}^{i+1}$.   
	Between the sub-phase $P_{A_{l-1}}^{i}$ and the sub-phase $P_{A_0}^{i+1}$, there exists a phase $P_B^{i}$, in which $|E_u|$ can increase. 
	By Lemma~\ref{lem:IncreaseOnEu}, during phase $P^{i}_B$, the expected increase that happens on $|E_u|$ is upper bounded by $600+\frac{2130}{1996}<602$.
	
	Therefore, together with the expected decrease that happens on $|E_u|$ in $P_{A_0}^{i+1}$, the total drift on $|E_u|$ in the last filtered step of phase $P_A^i$, $i\geq 1$, is 
	$$E\left[\Delta^f_{E_u^{t^i_{l-1}}}\right]=E\left[|E_u^{t^{i}_{l-1}}|- |E_u^{t^{i+1}_{0}}|\mid |E_u^{t^{i}_{l-1}}|\right] \geq  -602 + \frac{1-1/e}{2} \left(|E_u^{t^{i}_{l-1}}|-602\right).$$
	where $l$ is the number of sub-phases of $P_A^i$ and $|E_u^{t^{i}_{l-1}}|-602$ is a lower bound on the value of $|E_u|$ at the beginning of phase $P_{A_0}^{i+1}$.
	
	For $ |E_u^{t^{i}_l}|\geq 8000$, we have $602\leq |E_u^{t^{i}_l}|/13$ and we can have the drift on $|E_u|$ as:
	$$E\left[\Delta^f_{E_u^{t^i_{l-1}}}\right]\geq -\frac{ |E_u^{t^{i}_{l-1}}|}{13} +\frac{ (e-1)}{2e}\left(\frac{12|E_u^{t^{i}_{l-1}}|}{13}\right)$$
	$$\geq \left(-\frac{1}{13} +\frac{ 12(e-1)}{26e}\right)|E_u^{t^{i}_{l-1}}|\geq \frac{ |E_u^{t^{i}_{l-1}}|}{5}\text{.}$$
	
	Together with the drift that we found in Equation~\ref{eq:DriftOnSomeOfFilteredSteps} for other filtered steps, we can conclude that for all filtered steps $t$ we have a drift of at least $\frac{ |E_u^{t}|}{5}$.
	
	Now consider the potential function 
	\begin{equation*}
	g(s)=
	\begin{cases}
	|E_u| & \text{if }  |E_u|\geq 8000\\
	0 & \text{Otherwise}
	\end{cases}\text{.}       
	\end{equation*}
	
	The drift on the value of this function on filtered steps, defined as $E[\Delta_g^f]= E[g(s^{t_k})- g(s^{t_{k+1}})\mid g(s^{t_k})]$ which is  greater than or equal to $ E[\Delta^f_{E_u}]\geq \frac{ |E_u^{t_k}|}{5}$ for $g(s^{t_k})>8000 $ by definition of $g(s)$. This implies that for $g(s^{t_k})>0$
	$$E[\Delta_g^f] \geq  \frac{g(s^{t_k})}{5}\text{.}$$
	Since the minimum value of $g(s)$ before reaching 0 is at least 8000, and the initial value of $g(s)$ is at most $m$, by multiplicative drift analysis we can conclude that in expectation the algorithm needs at most $5(1+\log (m/8000))$ filtered steps to reach $g(s)=0$.
	
	From Lemma~\ref{lem:setting2C1Empty} we know that the expected number of steps of a phase $P_B^i$,  $i\geq 0$, is upper bounded by $\frac{4000em(1+\ln(c))}{1996}$.  Moreover, we know that between each two filtered steps there exists at most one sub-phase of at most $2em$ steps and one phase $P_B^i$, which implies that the expected number of steps between each two filtered steps is upper bounded by $2em+\frac{4000em(1+\ln(c))}{1996}$. Let $X$ and $E(T)$ be the number of filtered steps and the expected total number of steps until the algorithm reaches $g(s)=0$. By the law of total expectation, we have
	$E[T]=E\left[E[T\mid X]\right]\leq E[(2em+\frac{4000em(1+\ln(c))}{1996})\cdot X]$. By linearity of expectation we have $E[T]\leq (2em+\frac{4000em(1+\ln(c))}{1996})\cdot E[X]$. Since $E[X]\leq 5(1+\log (m/8000))$, we have $E[T]\leq (2em+\frac{4000em(1+\ln(c))}{1996})\cdot (5(1+\log (m/8000)))= O(m \log m)$.
	
	% has a length less than $2em$, in an expected number of $O(m \log m)$
	%phase $P_A^i$ consists of $996D$ filtered steps, in an expected number of $O(\frac{\log m}{D})$ phases of $P_A^i$ 
	%we reach a solution with $|E_u|\leq 8000$. %According to Equation~\ref{eq:sizeOfPhaseATogetherWithB}, each of these phases have an expected length of $O(\frac{1}{P_D}+m)= O(mD)$, which means that we reach this solution in expected time $O(m \log m)$. 
	
	So far we have proved that in expected time $O(m \log m)$ we reach $|E_u|\leq 8000$. Now we apply a different filtration and only consider the last steps of phases $P_A^i$, and find the expected number of filtered steps until reaching $|E_u|=0$.  
	Each phase $P^i_B$, $i\geq 0$, starts with a dynamic change; therefore, from the beginning of phase $P^i_B$, until the beginning of phase $P^{i+1}_B$ we have at least one dynamic change, which implies that $E[|P^i_B|+|P^{i+1}_A|]= E[|P^i_B|]+E[|P^{i+1}_A|]\geq \frac{1}{P_D}$.  
	Furthermore, due to Lemma~\ref{lem:setting2C1Empty}, $E[|P^i_B|]\leq \frac{4000em(1+\ln (2))}{1996}$.  Therefore we find the following lower bound for the expected number of steps of $P^{i+1}_A$:
	\begin{eqnarray}
	\label{eq:expectedSizeOfPAi}
	E[|P^{i+1}_A|]  \geq \frac{1}{P_D} -E[|P^i_B|] \geq \frac{1}{P_D} - \frac{4000em(1+\ln (2))}{1996}\geq 1996emD\text{,}
	\end{eqnarray}	
	where $D=\frac{1}{2000emP_D} \geq 1$.
	
	Using Equation~\ref{eq:expectedSizeOfPAi} and Lemma~\ref{lem:driftConstantOnEuDuringSmallPhases} we find that the value of $|E_u|$ is decreased in $P_A^i$  by an expected value of at least $1996D$.
	Moreover, 
	by Lemma~\ref{lem:IncreaseOnEu}, we know that  during phase $P^{i}_B$, the expected increase that happens on $|E_u|$ is upper bounded by $602$.
	Therefore,  the  drift on $|E_u|$ in the last step of phase $P_A^i$, denoted by $E[\Delta^{f_A}_{E_u}]$, is 
	\begin{eqnarray}
	E[\Delta^{f_A}_{E_u}]=E\left[|E_u^{t_k}|- |E_u^{t_{k+1}}|\mid |E_u^{t_k}|\right] \geq 1996D -602\geq 1394D.
	\label{eq:driftOnEu}
	\end{eqnarray}
	
	Starting with $|E_u|\leq 8000$, using additive drift analysis, we find the expected number of phases $P_A^i$ until reaching $|E_u|=0$ to be $\frac{8000}{1394D}$, which implies that from the point we reach $|E_u|\leq 8000$, until $|E_u|=0$ we need $O(1/D)$ phases of $P_B^i+P_A^{i+1}$.
	
	A phase $P_A^{i}$, $i\geq 0$ terminates when a dynamic change happens, which implies that $E[|P_A^{i}|]\leq \frac{1}{P_D}$.
	Furthermore, due to Lemma~\ref{lem:setting2C1Empty}, $E[|P^i_B|]\leq \frac{4000em(1+\ln (2))}{1996}$.  Therefore  the expected number of steps of $P^i_B$ and $P^{i+1}_A$ together is upper bounded by
	\begin{eqnarray*}
		\label{eq:sizeOfPhaseATogetherWithB}
		E[|P^i_B|+|P^{i+1}_A|]=E[|P^i_B|] +E[|P^{i+1}_A|]  \leq \frac{1}{P_D} +\frac{4000em(1+\ln (2))}{1996}= O(mD)\text{.}
	\end{eqnarray*}	
	
	Let $X$ denote the number of phases $P_B^i+P_A^{i+1}$ that the algorithm needs until reaching $|E_u|=0$. Also, let $E[T]$ denote the expected number of steps until reaching this solution. Since the expected length of $P_B^i+P_A^{i+1}$ is $O(mD)$, using the law of total expectation we find 
	$$E[T]=E\left[E[T\mid X]\right]\leq E\left[X\cdot cmD\right]=cmD\cdot E\left[X\right]\text{,}$$ 
	where $c$ is a constant. Moreover, we have found that the expected number of phases until reaching $|E_u|=0$ is $\frac{8000}{1394D}$,  \ie $E[X]\leq \frac{8000}{1394D}$, which implies
	$$E[T]\leq cmD\cdot \frac{8000}{1394D}= O(m)\text{.}$$ 
	
	Since we are only considering the filtered steps, we also know that $|C_1|=0$ holds in that step; therefore, the solution is a maximal matching and therefore a 2-approximation.
	Altogether, the three analysed phases $Ph_1$, $Ph_2$ and $Ph_3$ need expected time $O(m\log m)$.
\end{proof}

\begin{theorem}
	Consider the dynamic vertex cover problem where an arbitrary edge is dynamically added to or deleted from the graph with probability $P_D\leq\frac{1}{2000em}$ at each step. Starting with a $2$\nobreakdash-approximate solution $s$, which is a maximal matching, after a dynamic change, \EBEAE finds a $2$\nobreakdash-approximate solution in expected time $O(m)$.
	\label{thm:dyn_2approx}
\end{theorem}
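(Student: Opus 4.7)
The plan is to specialise the proof of Theorem~\ref{thm:dyn_arbitrary} to the situation where the solution just before the first dynamic change is already a maximal matching. The pay-off is that the bad initial configurations that forced $Ph_1$ and $Ph_2$ of Theorem~\ref{thm:dyn_arbitrary} to cost $O(m\log m)$ simply cannot occur here: immediately after the dynamic change, $|C|\leq 2$ (only the two endpoints of the added or removed edge can become uncovered), and hence $|C_1|\leq 2$ as well. Thus I can go directly to the $Ph_3$-style analysis and drop the expensive initial phases.

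First, I would bound the time to reach $|C_1|=0$ for the first time. By Lemma~\ref{lem:setting2C1Empty} applied with $c\leq 2$, this takes expected time $\frac{4000em(1+\ln 2)}{1996}=O(m)$. During this same short initial phase, Lemma~\ref{lem:IncreaseOnEu} (together with the initial bound $|E_u|\leq 10$ obtained exactly as in Lemma~\ref{lem:sizeOfEuAfterPhase1}: at most five uncovered edges per node of $C\setminus C_1$) shows that $|E_u|$ remains bounded by a constant $c_0<8000$ in expectation when $|C_1|$ first hits $0$. This means we are already past the multiplicative-drift regime used in the proof of Theorem~\ref{thm:dyn_arbitrary}.

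Next, I would reuse the additive-drift part of the proof of Theorem~\ref{thm:dyn_arbitrary} verbatim. Split the remaining run into the alternating sub-phases $P_A^i$ (in which $|C_1|=0$ holds) and $P_B^i$ (begun by a dynamic change, ended when $|C_1|$ returns to $0$). Equation~\eqref{eq:driftOnEu} gives a drift on $|E_u|$ of at least $1394D$ per filtered step (the last step of each $P_A^i$), where $D=1/(2000em P_D)\geq 1$. Starting from $|E_u|\leq c_0 = O(1)$, additive drift analysis shows that only $O(1/D)$ such filtered steps are needed to reach $|E_u|=0$. Since the expected length of each $P_B^i+P_A^{i+1}$ is $O(mD)$, the total expected time until $|E_u|=0$ is $O(mD)\cdot O(1/D)=O(m)$. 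At the termination step we simultaneously have $|C_1|=0$ and $|E_u|=0$, so the solution is a maximal matching and thus a $2$-approximate vertex cover. Adding the $O(m)$ cost of the initial phase yields the claimed $O(m)$ bound.

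The main obstacle is the coupling between the two counters $|C_1|$ and $|E_u|$ in the probabilistic setting: Lemma~\ref{lem:sizeOfEuAfterPhase1} is stated for the one-time setting, so I have to combine it with Lemma~\ref{lem:IncreaseOnEu} to argue that the additional uncovered edges accumulated while $|C_1|$ drops to $0$ in the presence of further probabilistic dynamic changes contribute only a constant in expectation. Once this bookkeeping is done and we can claim $E[|E_u|]=O(1)$ at the start of the $P_A^i/P_B^i$ decomposition, the rest of the argument is a direct quotation of the additive-drift portion of the proof of Theorem~\ref{thm:dyn_arbitrary}, with the law of total expectation invoked to pass from the random initial value of $|E_u|$ to the $O(m)$ runtime bound.
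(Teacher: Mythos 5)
Your proposal is correct and takes essentially the same route as the paper's own proof: both observe that after a single change the state is exactly the start of a $P_B$-phase of Theorem~\ref{thm:dyn_arbitrary} with $|C_1|\leq 2$ and $|E_u|\leq 10$, so the matching-finding and multiplicative-drift phases can be skipped and the additive-drift argument over the $P_A^i/P_B^i$ decomposition yields $O(m)$. Your version is only slightly more explicit in the bookkeeping (invoking Lemma~\ref{lem:IncreaseOnEu} to bound $E[|E_u|]$ by a constant below the $8000$ threshold when $|C_1|$ first hits $0$, and the law of total expectation for the random starting value), which the paper leaves implicit by identifying the post-change state with the last step of phase $P_A^1$.
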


\begin{proof}
	The proof of this theorem is similar to the proof of Theorem~\ref{thm:dyn_arbitrary}, except that the first couple of phases are no longer required, because dynamic changes do not damage the matching property of the given solution, and the initial values of $| C_1|$ and $| E_u|$ are constants.	
	
	Here, we start with $|C_1|=0$ and $|E_u|=0$, then a dynamic change happens, which can increase  these sizes by at most 2 and 10, respectively. Therefore, the starting situation in this theorem is the same as the last step of phase $P^1_A$ of the proof of Theorem~\ref{thm:dyn_arbitrary}. Moreover, here the initial value of $|E_u|$ is at most 10; therefore, the proof is similar to the proof of Theorem~\ref{thm:dyn_arbitrary} from the point that we reach a solution where $|E_u|$ is a constant. As proved in that theorem by additive drift analysis, in expected time $O(m)$ the algorithm reaches a solution with $|C_1|=0$ and $|E_u|=0$, which is also a maximal matching and induces a 2-approximate solution.
\end{proof}

By Theorems \ref{thm:dyn_arbitrary} and \ref{thm:dyn_2approx}, we have proved that \EBEAE finds the first $2$\nobreakdash-approximate solution for the dynamic vertex cover problem with the probabilistic dynamic changes in expected time $O(m\log{m})$, while the 2-approximation quality can be restored in $O(m)$ after a dynamic change happens on the graph.

\section{Analysis of the Weighted Vertex Cover Problem}
\label{sec:weightedVCP}

%\remark{short description that says first we analyse the first dynamic setting, then the second one}
In this section, we analyse the dynamic version of the weighted vertex cover problem and the behaviour of \oneplusone and \RLS (Algorithms \ref{alg:WeightedEA} and \ref{alg:WeightedRLS}, respectively) in optimising this problem. Similar to Section \ref{sec:classicalVCP}, we begin the analysis by considering the one-time dynamic setting in which the algorithm starts with a 2\nobreakdash-approximate solution and one edge is dynamically added to or removed from the graph. We compute the expected time that is required by the algorithms to find a 2\nobreakdash-approximate solution for the new situation. In this setting we assume that the problem is not subject to more changes before the new 2\nobreakdash-approximate solution is found. This assumption is relaxed in Section~\ref{sec:WVCDyn2}, where multiple dynamic changes can be handled. In the probabilistic dynamic setting that is studied there for the weighted vertex cover problem, one edge is added to or deleted from the graph with probability $P_D$ in each step of the algorithms. 

In both dynamic settings, we assume that an edge that is dynamically added to the graph has an initial weight of 0 in the current solution of the algorithms.
%However, we prove that with the defined fitness function, \oneplusone and \RLS find the optimal solution in $O(w_{\mathrm{max}}\cdot m)$ or $O(OPT\cdot m)$ in situations that start from a 2\nobreakdash-approximate solution that is effected by a dynamic change or from a zero solution that all edges have weight zero initially, respectively. 

\subsection{Re-Optimization Times of \RLS and \oneplusone for the Dynamic Weighted Vertex Cover Problem}
\label{sec:WVCDyn1}
%\remark{do the edits that I sent to you}
In the dynamic setting which is considered in this section, the  algorithms start with a 2\nobreakdash-approximate solution for weighted vertex cover problem (a maximal solution for the dual problem) and then a dynamic change happens \ie either an edge is deleted from the graph or an edge is added to the graph. Our aim is to prove that in expectation or with high probability, \RLS and \oneplusone re\nobreakdash-discover the 2\nobreakdash-approximate solution in $O(w_{\mathrm{max}}\cdot m)$ and $O(\mathrm{OPT}\cdot m+ m^2)$ steps, respectively, where $w_{\mathrm{max}}$ is the maximum weight of a node, and $\mathrm{OPT}$ is the weight of the optimal solution for the weighted vertex cover problem.

We start by presenting two lemmata that prove important properties of the defined fitness function for the weighted vertex cover problem.

\begin{lemma}
	\label{clm:fitnessConst1}
	If the number of violated constraints by the current solution $s$ is $A$, then any solution $s'$ with $B$ constraint violations is rejected by \oneplusone and \RLS if $B>A$.
\end{lemma}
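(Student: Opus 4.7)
The plan is to compare $f(s')$ and $f(s)$ directly using the definition in Equation~\eqref{equ:wfitfunc} and to exploit the large coefficient $(m+1)(W+1)$ on the violation term, where I write $W := \sum_{i=1}^{n} w(v_i)$. Since both algorithms accept $s'$ only when $f(s') > f(s)$, it is enough to show $f(s') \leq f(s)$ whenever $s'$ has strictly more violated vertex constraints than $s$.

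First I would split the difference into three pieces, $f(s') - f(s) = \Delta_1 + \Delta_2 + \Delta_3$, corresponding to the three summands of the fitness function: $\Delta_1 = \sum_i (s'_i - s_i)$, $\Delta_2 = (W+1)\bigl(U(s) - U(s')\bigr)$ with $U(\cdot)$ counting uncovered edges, and $\Delta_3 = (m+1)(W+1)(A - B)$. The hypothesis $B > A$ gives $B - A \geq 1$, so $\Delta_3 \leq -(m+1)(W+1)$. I would then bound $\Delta_1$ and $\Delta_2$ using the mutation structure of Algorithms~\ref{alg:WeightedEA} and~\ref{alg:WeightedRLS}: a single step changes each of the $m$ coordinates by at most $1$ in absolute value, so $\Delta_1 \leq 1$ for \RLS (only one coordinate is touched) and $\Delta_1 \leq m$ for \oneplusone. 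Since $U(s) \leq m$, the trivial bound $\Delta_2 \leq (W+1)m$ is immediate.

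For \RLS this already closes the argument:
\[
f(s') - f(s) \;\leq\; 1 + (W+1)m - (m+1)(W+1) \;=\; -W \;<\; 0,
\]
because every vertex weight is a positive integer so $W\geq 1$. For \oneplusone the analogous substitution gives $f(s') - f(s) \leq m + (W+1)m - (m+1)(W+1) = m - W - 1$, which is nonpositive when the coefficient $(m+1)(W+1)$ dominates the combined maximum contribution of the first two terms.

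The main obstacle I expect is tightening this bound for \oneplusone: the crude estimates $\Delta_1 \leq m$ and $U(s) - U(s') \leq m$ cannot be simultaneously saturated while producing only a single new vertex violation, since incrementing many edges inevitably raises several vertex sums past their weights. To formalize this I would use the identity $2\sum_i s_i = \sum_v \sigma(v,s)$ (where $\sigma(v,s)$ is the sum of edge weights incident to $v$): bounding $\sum_v \sigma(v,s')$ by $W$ over non-violated vertices plus the contribution of newly violated vertices ties $\Delta_1$ directly to the change in $V(\cdot)$, forcing $\Delta_1 + \Delta_2 < (m+1)(W+1)$ whenever only one additional vertex becomes violated. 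Once this refinement is in place, the conclusion $f(s') \leq f(s)$ holds and \oneplusone (like \RLS) rejects $s'$, completing the proof.
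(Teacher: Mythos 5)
Your \RLS half is correct and complete: since exactly one coordinate changes by at most $1$, the chain $\Delta_1+\Delta_2+\Delta_3\leq 1+m(W+1)-(m+1)(W+1)=-W<0$ is airtight. Note, though, that this is already a different route from the paper's. The paper never uses how $s'$ arises from $s$: it bounds $f(s')$ and $f(s)$ \emph{separately}, via $W(s')\leq w_{\mathrm{total}}$ and $U_{s'}\geq 0$ on one side and $W(s)\geq 0$ and $U_s\leq m$ on the other, so that the single extra penalty block $(m+1)(w_{\mathrm{total}}+1)$ swallows the largest possible gap $w_{\mathrm{total}}+m(w_{\mathrm{total}}+1)$ between the two bounds. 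That argument treats \RLS and \oneplusone uniformly, with no case distinction and no mutation structure, at the price of the blanket claim $W(s')\leq w_{\mathrm{total}}$; your instinct that the crude per-step bounds $\Delta_1\leq m$, $U(s)-U(s')\leq m$ do not by themselves suffice for the EA is exactly the difficulty that this claim short-circuits.

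The genuine gap in your proposal is the \oneplusone half: you correctly arrive at $f(s')-f(s)\leq m-W-1$, which is positive whenever $m>W+1$ (entirely possible, e.g.\ dense graphs with unit weights), but the promised repair is only asserted, not carried out, and as sketched it aims at the wrong term. The inequality you need, $\Delta_1+\Delta_2<(m+1)(W+1)$, is driven by $\Delta_2$, and controlling it requires \emph{tightness}, not mere non-violation: a covered edge of $s'$ has a tight endpoint, so summing $\sigma(v,s')$ over tight vertices counts every covered edge at least once and gives $\sum_{e\ \mathrm{covered}} s'_e\leq \sum_{v\ \mathrm{tight}} w(v)\leq W$. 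Combined with $s_e\geq 0$ and increments of at most $1$ on the $u':=U(s')$ edges left uncovered, this yields $\Delta_1\leq W+u'$, and then $\Delta_1+\Delta_2\leq W+u'+(W+1)(m-u')=W+m(W+1)-u'W<(m+1)(W+1)$ unconditionally --- which also disposes of the case $B-A\geq 2$ that your ``only one additional vertex becomes violated'' framing leaves open (there the penalty only grows, so an unconditional bound on $\Delta_1+\Delta_2$ is all you need, making the per-violation case split unnecessary). Your identity $2\sum_i s_i=\sum_v\sigma(v,s)$ and the bound $\sigma(v,s')\leq w(v)$ on non-violated vertices do not deliver this, because they say nothing about which edges are covered; until the tightness argument above (or the paper's global bound $W(s')\leq w_{\mathrm{total}}$) is in place, your EA case is a plan rather than a proof.
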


\begin{proof}
	%Recall that 
	%$$f(s)=\vert V_C(s)\vert + (\vert V\vert+1)\cdot \vert \{e\in E\mid e\cap V_C(s)=\emptyset \}\vert$$ 
	Let $U_s$ and $W(s)$ be the number of uncovered edges and the total edge weights for solution $s$, respectively. In addition, let $w_{\mathrm{total}}= \sum_{i=1}^{n}w(v_i)$. 
	Recalling Equation \eqref{equ:wfitfunc}, 
	$$f(s)=W(s)-U_s\cdot (w_{\mathrm{total}}+1)-A\cdot (m+1)\cdot (w_{\mathrm{total}}+1)\text{,}$$ 
	Since $B>A$, we have 
	\begin{eqnarray*}
		f(s')&\leq& W(s^\prime) - U_{s^\prime}\cdot (w_{\mathrm{total}}+1)  - (A+1)\cdot (m+1)\cdot (w_{\mathrm{total}}+1) \\
		&\leq& w_{\mathrm{total}}  -  (A+1)\cdot (m+1)\cdot (w_{\mathrm{total}}+1)\\
		&\leq& w_{\mathrm{total}} -(m+1)\cdot (w_{\mathrm{total}}+1)  -  A\cdot (m+1)\cdot (w_{\mathrm{total}}+1)\\ 
		&\leq&  -m\cdot (w_{\mathrm{total}}+1)  -  A\cdot (m+1)\cdot (w_{\mathrm{total}}+1)\\
		&\leq& f(s)\text{.}
	\end{eqnarray*}
	The second inequality holds because $W(s^\prime)\leq w_{\mathrm{total}}$ and $U_{s^\prime}\geq 0$; whereas the last inequality holds because $W(s)\geq 0$ and $U_s \leq m$.
	Since $f(s)> f(s')$, $s'$ is rejected by \oneplusone and \RLS.
\end{proof}

\begin{lemma}
	\label{clm:fitnessConst2}
	If the current solution $s$ has $U_s$ uncovered edges with no violated constraint, any solution $s'$ where $U_{s^\prime}> U_s$, is rejected by \oneplusone and \RLS.
\end{lemma}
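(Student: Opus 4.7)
The plan is to split the analysis into two cases according to the number of violated node constraints in $s'$. If $s'$ violates at least one constraint, then since $s$ has zero violated constraints by hypothesis, I can directly invoke Lemma~\ref{clm:fitnessConst1} with $A=0$ and $B\geq 1$ to conclude that $s'$ is rejected. So the only substantive case is when $s'$ also satisfies every node constraint.

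In this remaining case, both $s$ and $s'$ contribute $0$ to the huge last term of the fitness function (Equation~\eqref{equ:wfitfunc}), so the comparison reduces to
$$f(s)-f(s') \;=\; \bigl(W(s)-W(s')\bigr) + (U_{s'}-U_s)\,(w_{\mathrm{total}}+1),$$
where $W(\cdot)$ denotes the sum of edge weights and $w_{\mathrm{total}}=\sum_{i=1}^n w(v_i)$. Since $U_{s'},U_s$ are integers with $U_{s'}>U_s$, the second summand is at least $w_{\mathrm{total}}+1$. The key ingredient is then a uniform upper bound on $W(s')$: summing the (assumed satisfied) constraints $\sum_{j:\, e_j\ni v} s'_j\leq w(v)$ over all vertices $v$ counts each edge exactly twice, giving $2W(s')\leq w_{\mathrm{total}}$, hence in particular $W(s')\leq w_{\mathrm{total}}$. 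Combining this with $W(s)\geq 0$ yields
$$f(s)-f(s')\;\geq\;-w_{\mathrm{total}}+(w_{\mathrm{total}}+1)\;=\;1\;>\;0,$$
so $f(s')<f(s)$ and the move is rejected by both \RLS and \oneplusone, which only accept strict improvements.

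There is no real obstacle here; the lemma is essentially a design check on the fitness function. The penalty coefficient $w_{\mathrm{total}}+1$ attached to every uncovered edge was chosen precisely to dominate the maximum possible gain in the first term $\sum_i s_i$, which is itself bounded by $w_{\mathrm{total}}$ whenever no node constraint is violated. The only subtlety worth flagging in writing up the proof is the double-counting step that produces $W(s')\leq w_{\mathrm{total}}/2$, which is what cleanly decouples the two summands.
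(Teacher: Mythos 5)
Your proof is correct, and its central estimate is the same as the paper's: in both, the penalty coefficient $w_{\mathrm{total}}+1$ attached to each uncovered edge dominates the largest possible gain in the first term of the fitness function~\eqref{equ:wfitfunc}, via the bounds $W(s')\leq w_{\mathrm{total}}$ and $W(s)\geq 0$. The structural difference is your explicit two-case decomposition: you dispatch the case where $s'$ violates a node constraint through Lemma~\ref{clm:fitnessConst1} with $A=0$ and $B\geq 1$, and only then compare fitness values directly, whereas the paper handles everything in one chain of inequalities by silently dropping the nonpositive violation term from $f(s')$. Your route buys a genuine gain in rigor: the bound $W(s')\leq w_{\mathrm{total}}$ is only justified when $s'$ satisfies all node constraints --- your double-counting argument, which in fact yields the stronger $2W(s')\leq w_{\mathrm{total}}$, uses exactly those constraints as hypotheses --- and by restricting this bound to the feasible case you supply a justification the paper omits, since the paper asserts $W(s')\leq w_{\mathrm{total}}$ for an arbitrary offspring $s'$, even though a solution with violated constraints can a priori carry total edge weight exceeding $w_{\mathrm{total}}$ (in the paper this gap is inherited from, and hidden inside, the proof of Lemma~\ref{clm:fitnessConst1}, which you are entitled to use as a black box). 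One cosmetic remark: you conclude the strict inequality $f(s')<f(s)$ while the paper settles for $f(s')\leq f(s)$; either suffices, because both algorithms accept only on strict improvement $f(s')>f(s)$.
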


\begin{proof}
	Similar to Lemma \ref{clm:fitnessConst1}, we prove that $f(s)> f(s')$; therefore, $s'$ is rejected by the two algorithms. Since $s$ does not violate the constraints, we have
	$$f(s)=W(s)-U_s\cdot (w_{\mathrm{total}}+1)\text{,}$$ 
	where $W(s)$ and $U_s$ denote the total weight on the edges and the number of uncovered edges of solution $s$, and $w_{\mathrm{total}}= \sum_{i=1}^{n}w(v_i)$. Since $U_{s'}\geq U_s+1$, we have 
	\begin{eqnarray*}
		f(s')&\leq& W(s^\prime) - (U_s +1)\cdot (w_{\mathrm{total}}+1)   \\
		&\leq& w_{\mathrm{total}}  - (w_{\mathrm{total}}+1) - U_s \cdot (w_{\mathrm{total}}+1)\\
		&\leq& - U_s \cdot (w_{\mathrm{total}}+1)\\ 
		&\leq& f(s)\text{.}
	\end{eqnarray*}
	The second inequality holds because $W(s^\prime)\leq w_{\mathrm{total}}$; whereas the last inequality holds because $W(s)\geq 0$.
\end{proof}

In the next lemma, we prove that a dynamic change does not violate any constraints in the current solution.

\begin{lemma}\label{lem:noviolation}
	The dynamic changes of the dynamic weighted vertex cover problem do not cause constraint violations in the dual problem.
\end{lemma}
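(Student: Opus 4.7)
The plan is to split into the two possible kinds of dynamic change and observe that each of them can only weakly decrease the left-hand side of every dual constraint, while the right-hand sides $w(v)$ do not change.

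First, I recall the dual constraints: for every vertex $v\in V$,
\[
\sum_{j\colon e_j \cap \{v\}\neq\emptyset} s_j \;\leq\; w(v).
\]
By hypothesis, the solution $s$ immediately before the dynamic step is a maximal dual solution, so all of these inequalities hold. I need to show that after the dynamic change the (suitably updated) sums on the left still satisfy the constraints.

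Case~1: an edge $e_k$ is deleted. Then the index $k$ is removed from every sum, so for every $v$ the new sum equals the old sum minus $s_k$ if $e_k$ was incident to $v$, and equals the old sum otherwise. In either subcase the left-hand side does not increase, while $w(v)$ is unchanged, so the constraint continues to hold. Case~2: an edge $e_k$ is added. By the convention stated at the beginning of Section~\ref{sec:weightedVCP}, the newly added edge enters the solution with weight $s_k=0$. Hence for every vertex $v$, the only potentially new contribution to its sum is $0$, and the sum is identical to the one before the change; the constraint again remains satisfied.

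Since both types of dynamic change preserve every dual constraint, no constraint violation can be introduced by the dynamic step, which proves the lemma. There is no real obstacle here: the statement is essentially an immediate consequence of the zero-initialisation convention for newly added edges together with the monotonicity of the constraint sums under edge deletion.
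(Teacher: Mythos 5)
Your proof is correct and takes essentially the same route as the paper: a two-case analysis in which a newly added edge enters with weight $0$ (so no constraint sum changes) and a deleted edge only subtracts its weight from the sums at its endpoints (so no sum increases). One minor remark: you invoke maximality of the pre-change solution, but your argument---like the paper's---only uses feasibility, which is all the lemma requires.
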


\begin{proof}
	We first investigate the dynamic addition of an edge. Let $e=\{u,v\}$ be the new edge and $s$ be the current solution. Also, let $w_u$ and $w_v$ denote the total weights of edges adjacent to vertices $u$ and $v$ respectively. Since the initial weight of $e$ is zero, it does not change $w_e$ or $w_v$. Thus, no new violations happens. 
	
	Now assume that $e$ with the weight of $w_e$ is deleted from the graph. In this case, $w_e$ is subtracted from $w_u$ and $w_v$. Since no value is added to them, the weight constraints of the nodes are not violated.
\end{proof}

In this paper we always assume that the algorithms are initialized with either a solution with weight 0 for all edges (which has no constraint violation), or a feasible dual solution which had been maximal before the dynamic change. Moreover, according to Lemmata \ref{clm:fitnessConst1} and \ref{lem:noviolation}, a solution that violates any constraints is not accepted by the algorithms, and the dynamic changes on the graph do not cause violation. Therefore, we assume that from now on, there is no constraint violation on the dual solution that the algorithm is working on.

We first present the analysis for the \RLS. 
Here we give the definitions that help us with our analysis.
Let $W(s)$ denote the sum of edge weights of solution $s$. We define $G$ to be the maximum total amount that should be added to edges of solution $s$ to achieve a maximal solution, \ie:
$$G = \max_{o\in M_s}\left\{W(o)-W(s)\right\}\text{,}$$
where $M_s$ is a set of maximal solutions that can be obtained from $s$ only by increasing the weight of the edges. Moreover, let $m_s\in M_s$ denote the maximal solution that $G$ is obtained from it. Note that it is impossible to add more than $G$ to the edges of solution $s$ without violating a constraint, since it contradicts the definition of $G$. Here the goal is to find a maximal solution for the dual problem to achieve a 2\nobreakdash-approximate weighted vertex cover solution. Therefore, we are seeking a solution with $G=0$.

Let $G_t$ denote the value of $G$ in the step $t$ of the algorithm and $E[\Delta_{t}] = E[G_t-G_{t+1}\mid G_t]$ be the drift on the value of $G$. Also let $k$ denote the number of uncovered edges. 
The following lemma proves a lower bound for the drift on the value of $G$ in a step of the algorithm.

\begin{lemma}\label{lem:wdriftonG}
	At a step $t$ of \RLS, where the current solution has $k$ uncovered edges, the drift on $G$ is $E[\Delta_t]\geq \frac{k}{2m}$.
\end{lemma}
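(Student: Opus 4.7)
The plan is to enumerate the possible single-step mutations of \RLS and show that (a) the only ones ever accepted are those that pick an uncovered edge and choose $b=0$ (the increment branch), and (b) each such accepted move decreases $G$ by at least $1$. Since \RLS picks a specific edge with probability $1/m$ and the increment branch with probability $1/2$, and there are $k$ disjoint ``uncovered edge with $b=0$'' events, this immediately yields $E[\Delta_t]\geq k\cdot\frac{1}{2m}\cdot 1$.

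First I would recall that by the initialization hypothesis and Lemma~\ref{lem:noviolation} the current solution $s$ has no violated constraint, and by Lemma~\ref{clm:fitnessConst1} any mutation that introduces one is rejected. With that in hand I would rule out three of the four mutation types. A decrement ($b=1$) either leaves $s$ unchanged (when $s_i=0$), or strictly decreases $W(s)$ while keeping $V_s=0$; since decrementing an edge can only turn a previously tight endpoint non-tight and thereby weakly increase $U_s$, the fitness cannot strictly improve, so by Lemma~\ref{clm:fitnessConst2} (or direct comparison when $U_s$ is unchanged) the move is rejected. An increment of a covered edge $\{u,v\}$ pushes the tight endpoint's incident-edge sum above $w(\cdot)$, producing a violation rejected by Lemma~\ref{clm:fitnessConst1}. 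This leaves only the increment of an uncovered edge, for which both endpoints have integer slack at least $1$, so no violation arises, $U_s$ cannot grow, and $W$ strictly increases, guaranteeing acceptance.

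For step (b) I would argue about $G$ via the defining set $M_s$. If $s'$ is obtained from $s$ by incrementing a single coordinate, then $s' \geq s$ componentwise, and any maximal dual solution reached from $s'$ by further increments is also reachable from $s$ by increments; hence $M_{s'} \subseteq M_s$. Consequently $\max_{m \in M_{s'}} W(m) \leq \max_{m \in M_s} W(m) = W(s) + G_t$, and since $W(s') = W(s) + 1$ we obtain $G_{t+1} \leq G_t - 1$. Combining (a) and (b), each of the $k$ disjoint accepted events contributes at least $1$ to $\Delta_t$ with probability $\frac{1}{2m}$, while all other mutations leave $s$ and hence $G$ unchanged; summing gives $E[\Delta_t] \geq \frac{k}{2m}$.

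The most delicate point I expect is the clean verification that $M_{s'} \subseteq M_s$ from the definition of the maximal-extension set; but once one observes that componentwise monotonicity preserves the ``reachable by pure increments'' relation, the inclusion is immediate. A secondary subtlety is making sure the decrement case is handled cleanly when $s_i$ drops to a value that leaves $U_s$ unchanged but still strictly decreases $W$, which is where the strict inequality in the acceptance rule of \RLS together with the fitness definition does the work without needing any additional lemma.
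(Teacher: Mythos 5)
Your proposal is correct and takes essentially the same route as the paper's proof: rule out decrements and increments of covered edges via the fitness function (decrements never strictly improve fitness, increments of a covered edge violate the tight endpoint's constraint), so the only accepted moves are increments of uncovered edges, each occurring with probability $\frac{1}{2m}$ and decreasing $G$ by at least one, giving $E[\Delta_t]\geq \frac{k}{2m}$. The one place you go beyond the paper is the explicit verification that $s'\geq s$ componentwise implies $M_{s'}\subseteq M_s$ and hence $G_{t+1}\leq G_t-1$ --- a detail the paper asserts without justification --- but this fills in a step rather than changing the argument.
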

\begin{proof}
	\RLS mutates only one edge at each step and increases it with probability $1/2$. According to Lemmata \ref{clm:fitnessConst1} and \ref{clm:fitnessConst2}, the mutation will not be accepted if it decreases the weight of an edge, since the number of uncovered edges does not decrease by this move and the total weight of the solution decreases. Also, a mutation that increases the weight of a covered edge will be rejected, because a weight constraint on a node becomes violated by this move. Hence, the only accepted mutations are those that increase the value of an uncovered edge, which happen with probability $\frac{k}{2m}$ at each step. Each such move reduces the value of $G$ by one, thus we have $$E[\Delta_t]\geq \frac{k}{2m}\text{.}$$
	
\end{proof}

The following theorem gives the expected time for \RLS to restore the quality of  2\nobreakdash-approximation after a dynamic change has added or deleted an edge to or from the graph of the problem. 
\begin{theorem}
	\label{thm:WVCP-re-opt}
	Starting with a 2\nobreakdash-approximate solution $s$ that is a maximal solution for the dual problem,  \RLS restores the quality of 2-approximation when one edge is dynamically added to or deleted from the graph in expected time $O(w_{\mathrm{max}}\cdot m)$.
\end{theorem}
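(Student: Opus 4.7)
The proof is a one-shot application of additive drift analysis to the potential $G$ introduced just before Lemma~\ref{lem:wdriftonG}. Two ingredients are needed: a bound on $G$ immediately after the dynamic change, and a uniform lower bound on the drift of $G$ as long as $G>0$. The drift bound essentially comes from Lemma~\ref{lem:wdriftonG}, which gives $E[\Delta_t]\geq k/(2m)$, combined with the observation that $G_t>0$ forces $k\geq 1$: if every edge of the current graph has at least one tight endpoint, then raising any $s_e$ violates its tight endpoint's constraint, so no edge weight can be increased and the solution is already maximal, i.e.\ $G=0$. Contrapositively, $G_t>0$ implies $k\geq 1$ and hence $E[\Delta_t]\geq 1/(2m)$ throughout the run.

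For the initial value I plan to show $G_0\leq 2w_{\max}$ in both cases of the dynamic change. The common idea is that the only vertex slacks that change are those of the two endpoints of the affected edge, so any extra weight needed to reach a new maximal solution must be accommodated by those two slacks; for every edge $e'$ not incident to the affected edge, both endpoints of $e'$ have unchanged slacks, so $e'$ could not be increased in the pre-change maximal solution and cannot be increased now either. For an addition of $\{u,v\}$ with initial weight $0$, the slacks at $u$ and $v$ are unchanged and each is bounded by $w(u),w(v)\leq w_{\max}$, and the total extra weight placeable on edges incident to $u$ (respectively $v$) is capped by the slack at $u$ (respectively $v$). For a deletion of $e=\{u,v\}$ with weight $w_e$, the new slacks at $u$ and $v$ equal the old slacks plus $w_e$, but they still satisfy $\leq w(u)$ and $\leq w(v)$ respectively, so the same accounting gives $G_0\leq w(u)+w(v)\leq 2w_{\max}$. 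Lemma~\ref{lem:noviolation} is what lets me treat both cases starting from a feasible dual solution.

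Putting the two ingredients together, additive drift analysis with $G_0\leq 2w_{\max}$ and drift $\geq 1/(2m)$ yields expected hitting time at most $4m\cdot w_{\max}=O(w_{\max}\cdot m)$ until $G=0$. A state with $G=0$ is, by construction of $G$, a maximal dual solution, and therefore (by the duality-based $2$-approximation argument recalled in Section~\ref{sec:prel:WVCP}) induces a $2$-approximate weighted vertex cover, which is exactly the re-optimisation statement of the theorem.

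The main obstacle is the $G_0$-bound for the edge-deletion case, because a single deletion can simultaneously uncover many previously-covered edges (every edge whose only tight endpoint was $u$ or $v$ becomes uncovered), so a naive bound like ``number of uncovered edges times $w_{\max}$'' would be far too weak. The resolution is structural: all the additional ``dual room'' created by the deletion is localised at exactly the two vertices $u$ and $v$ and is capped per vertex by $w_{\max}$, so the total weight that can still be added to reach a maximal solution is $O(w_{\max})$ regardless of how many edges were uncovered by the deletion.
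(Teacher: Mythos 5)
Your proposal is correct and follows essentially the same route as the paper's proof: additive drift on the potential $G$ via Lemma~\ref{lem:wdriftonG} (with $G>0$ forcing $k\geq 1$, hence drift at least $1/(2m)$), together with an $O(w_{\mathrm{max}})$ bound on $G$ immediately after the change. Your per-vertex slack-capping argument for $G_0\leq 2w_{\mathrm{max}}$ is just a slightly more explicit bookkeeping of the paper's own localisation argument (the paper gets $G\leq w_{\mathrm{max}}$ for addition via the $\min$ of the two slacks and $G\leq 2w_e$ for deletion by contradiction with pre-change maximality), and the asymptotic conclusion is identical.
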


\begin{proof}
	First we consider the situation in which a new edge $e = \{v_{i},v_{j}\}$ is added to the graph. If any of the adjacent nodes of $e$ are tight, then $e$ is already covered and the new solution is maximal. Therefore assume that neither $v_i$ is tight nor $v_j$. In this case and according to the fact that the solution was maximal before the dynamic change, the weight of edges of $v_i$ and $v_j$ except $e$ can not increase. Therefore, adding 
	$$G = \min\left\{w(v_{i})-\sum_{e_l\in E  \mid \ e_l \cap \{v_{i}\} \neq \emptyset} s_l \ \ \ , \ \ \ w(v_{j})-\sum_{e_l \in E \mid \ e_l \cap \{v_{j}\} \neq \emptyset} s_l\right\}$$ 
	to the weight of the new  edge guarantees achieving a new maximal solution. Since before reaching a maximal solution, the number of uncovered edges is at least one and $G\leq w_{\mathrm{max}}$, Lemma \ref{lem:wdriftonG} and additive drift analysis give the result of $O(w_{\mathrm{max}}\cdot m)$ expected time for re\nobreakdash-optimising the 2\nobreakdash-approximation.
	
	Now consider the edge $e=\{v_i, v_j\}$ with the weight of $w_e\leq w_{\mathrm{max}}$ being deleted from the graph. In this case, by adding at most $w_e$ to the weight of edges that are adjacent to $v_i$ and also to the weight of edges that are adjacent to $v_j$, a maximal solution will be achieved. Moreover, it is not possible to add more than $2w_e$ to the edges of the graph since it contradicts the assumption that $s$ was maximal before the dynamic change. Thus, we have $G\leq 2w_e\leq 2w_{\mathrm{max}}$ and the theorem is proved by using Lemma \ref{lem:wdriftonG} and the additive drift argument. 
\end{proof}

Now we investigate the behaviour of \oneplusone for this problem. We use the definition of $W(s)$ for a solution $s$ which is given earlier in this section, in addition to the weight difference from the maximum dual solution, denoted by $G^*$. If the maximum dual solution is denoted by $o^*$, then the formal definition of $G^*$ would be
$$G^*= W(o^*)-W(s)\text{.}$$ 

Due to the Weak Duality Theorem, $W(o^*)$ is upper bounded by the weight of the optimal solution of the weighted vertex cover problem, denoted by $\mathrm{OPT}$, which implies that for any initial solution $s$, we have $G^*\leq \mathrm{OPT}$.

It can be observed from the definition of the fitness function that $W(s)$ can only be decreased in the steps in which the number of uncovered edges decreases. Let $A$ denote the set of accepted steps during the process of the algorithm and $B$ be the accepted steps in which the number of uncovered edges are decreased. Observe that $B \subseteq A$.

The weight of an edge, if selected for a mutation, is either increased by one, or decreased by one. Let $\Delta^t_{w+}$ and $\Delta^t_{w-}$
denote the number of edges that gain weight and the number of edges that lose weight at an accepted step, $t$. Note that $\Delta^t_{w+}+\Delta^t_{w-}$
is equal to the change that happens on $G^*$ at step $t$. Lemmata~\ref{lem:MinimumChangeOnG*} and~\ref{lem:TotalDecreaseW} show that for all steps in $A\setminus B$ we have $\Delta^t_{w+}+\Delta^t_{w-} \geq 1$ and with high probability the sum of these changes during all steps of $B$ is lower bounded by $-4em$. Moreover, Lemma~\ref{lem:MinimumAcceptedStepsWholePhase} gives a lower bound on $|A\setminus B|$ during a phase of $O(\mathrm{OPT}\cdot m+ m^2)$ steps.
Then in Theorem~\ref{thm:WVCP-re-opt-EA},  we show that after a dynamic change on the graph, with high probability \oneplusone re\nobreakdash-discovers the 2\nobreakdash-approximate solution in  the studied phase.

% Also let $G^*_t$ be the value of $G^*$ at step $t$. We denote the drift on $G^*$ by $E(\Delta^t)=E(G^*_t- G^*_{t+1}\mid G^*_t)= E(\Delta_{w+}^t)-E(\Delta_{w-}^t)$, where $E(\Delta_{w+}^t)$ and  $E(\Delta_{w-}^t)$ are the  positive drift and the  negative drift on $G^*$. In the following lemmata, we show some properties on the expected positive and negative drift of $G^*$, that help us with the proof of

\begin{lemma}
	\label{lem:MinimumChangeOnG*}
	For all steps in $A\setminus B$ we have $\Delta^t_{w+}+\Delta^t_{w-} \geq 1$. 
\end{lemma}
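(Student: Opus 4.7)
The plan is to derive the claim directly from the acceptance rule combined with Lemma~\ref{clm:fitnessConst2} and the standing assumption, stated just after Lemma~\ref{lem:noviolation}, that the current solution carries no constraint violations throughout the run. The first step I would take is to pin down what $A\setminus B$ forces about the structure of the step. By definition of $B$, the number of uncovered edges does not strictly decrease at step $t\in A\setminus B$; on the other hand, since the current solution is feasible, Lemma~\ref{clm:fitnessConst2} tells us that no accepted offspring can have strictly more uncovered edges than the current solution. Combining these two observations, the number of uncovered edges $U_{s'}$ of the offspring equals $U_s$, and the number of violated constraints is $0$ at both $s$ and $s'$.

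Next I would exploit the explicit form of the fitness function given in Equation~\ref{equ:wfitfunc}. The penalty term for uncovered edges contributes the same value at $s$ and $s'$ because $U_s=U_{s'}$, and the penalty term for violated constraints contributes zero at both. Hence the fitness gap collapses to
\begin{equation*}
f(s')-f(s)\;=\;W(s')-W(s).
\end{equation*}
The step is accepted strictly (\oneplusone and \RLS use $f(s')>f(s)$), so $W(s')>W(s)$, and since every edge weight is a non-negative integer this strict inequality sharpens to $W(s')-W(s)\geq 1$.

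Finally, I would translate this weight increment into the quantity appearing in the statement. A mutation alters the selected edge's weight by exactly one unit up or down, so the net change $W(s')-W(s)$ equals the signed sum $\Delta^t_{w+}+\Delta^t_{w-}$ introduced just before the lemma, and this signed sum is precisely the decrease of $G^*=W(o^*)-W(s)$ at step $t$. The chain $\Delta^t_{w+}+\Delta^t_{w-}=W(s')-W(s)\geq 1$ therefore yields the claim.

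There is no genuine obstacle here; the only subtlety is to invoke Lemma~\ref{clm:fitnessConst2} in the correct direction and to remember that the standing assumption of feasibility (justified by Lemmata~\ref{clm:fitnessConst1} and~\ref{lem:noviolation}) is what eliminates the violation term from the fitness difference, leaving the integer weight gap of size at least one.
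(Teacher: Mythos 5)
Your proof is correct and takes essentially the same route as the paper's: the paper likewise observes that at an accepted step in $A\setminus B$ the number of uncovered edges is unchanged, so the strict acceptance criterion forces the total edge weight (an integer) to increase by at least one. You merely spell out what the paper's two-line proof leaves implicit, namely invoking Lemma~\ref{clm:fitnessConst2} together with the standing feasibility assumption to justify $U_{s'}=U_s$ and the vanishing of the penalty terms in the fitness difference.
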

\begin{proof}
	Since the algorithm accepts a move if there is a strict improvement, and at steps $A\setminus B$ the number of uncovered edges does not change, the total weight on the edges increases. \ie 
	\begin{eqnarray*}
		\label{eq:changesOnG*}
		\Delta^t_{w+} - \Delta^t_{w-}\geq 1 \ \ \ \ \forall t \in A\setminus B   \text{.}
	\end{eqnarray*}
\end{proof}

\begin{lemma}
	\label{lem:MinimumAcceptedStepsWholePhase}
	In a phase of $P=2e\mathrm{OPT}\cdot m +10e^2m^2$ steps, if we do not reach a maximal solution, with probability $1-e^{-\Omega(m)}$ the total number of accepted steps is at least $\mathrm{OPT} +5em$.
\end{lemma}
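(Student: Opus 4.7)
The plan is to give a per-step lower bound on the acceptance probability of the \oneplusone\ and then apply a Chernoff-style concentration argument to count accepted steps across the whole $P$-step phase.

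For the per-step bound, I would argue as follows. Under the hypothesis that the algorithm never reaches a maximal dual solution during the phase, at every step $t\in[1,P]$ the current solution $s$ has at least one uncovered edge $e_i$. Consider the specific mutation that flips only $e_i$ (and no other edge) and chooses $b=0$, so that $s_i$ is incremented by one. This event has probability at least $(1/m)\cdot(1/2)\cdot(1-1/m)^{m-1}\ge 1/(2em)$. Because $e_i$ is uncovered, neither of its endpoints is tight, so incrementing $s_i$ by one introduces no new constraint violation, cannot increase the number of uncovered edges (it may even decrease it, by making an endpoint tight), and strictly increases the total edge weight. By Lemmas~\ref{clm:fitnessConst1} and~\ref{clm:fitnessConst2} together with the explicit form of $f$, this move is strictly improving and hence accepted. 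In particular this lower bound holds for the entire phase, regardless of which of the two dynamic operations has been applied to the graph, since both only alter the edge set.

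Writing $X$ for the number of accepted steps in the phase, the pointwise bound above implies that $X$ stochastically dominates a random variable $Y\sim\mathrm{Bin}(P,1/(2em))$, whose mean equals $P/(2em)=\mathrm{OPT}+5em$. A multiplicative Chernoff lower-tail inequality, applied to $Y$ with its $\Omega(em)$-scale mean, yields $\prob{Y\le (1-\delta)\expect{Y}}\le \exp(-\delta^2\expect{Y}/2)=e^{-\Omega(m)}$ for any fixed $\delta\in(0,1)$, and the same bound transfers to $X$ by stochastic dominance.

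The main obstacle is closing the gap between the desired threshold $\mathrm{OPT}+5em$ and the mean $\expect{Y}=\mathrm{OPT}+5em$, which coincide at the loose level computed above: Chernoff alone only delivers $X\ge(1-\delta)(\mathrm{OPT}+5em)$ with high probability, not the exact target. I expect the proof to obtain the missing constant-factor slack either by sharpening the per-step bound, e.g.\ by exploiting $(1-1/m)^{m-1}>1/e$ or by arguing that during the phase one typically has strictly more than one uncovered edge so that the true per-step acceptance probability is at least $(1+\Omega(1))/(2em)$, or by absorbing the slack into the constants $2e$ and $10e^2$ that define $P$. Either adjustment makes $\expect{X}\ge(1+\Omega(1))(\mathrm{OPT}+5em)$, and the Chernoff estimate above then yields $X\ge \mathrm{OPT}+5em$ with probability $1-e^{-\Omega(m)}$, as required.
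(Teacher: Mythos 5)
Your overall strategy is exactly the paper's: lower-bound the per-step acceptance probability by the event that a single uncovered edge is flipped and incremented, dominate the number of accepted steps by a binomial, and finish with a multiplicative Chernoff bound. The difference is the constant, and it matters. The paper asserts that the per-step acceptance probability is at least $\frac{1}{em}$ --- it does not charge the factor $\frac{1}{2}$ for the choice $b=0$ --- so it gets $\expect{X}\geq \frac{P}{em}=2\,\mathrm{OPT}+10em$ and then applies Chernoff with $\delta=1/2$ to land precisely on the threshold $\mathrm{OPT}+5em$ with failure probability $e^{-\expect{X}/8}=e^{-\Omega(m)}$. Your more careful accounting gives $\frac{1}{2em}$, whence the mean $\mathrm{OPT}+5em$ coincides with the target, and, as you correctly observe, Chernoff then proves nothing (indeed, a binomial falls below its mean with probability about $1/2$). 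So the gap you flag is genuine for your write-up as it stands, and none of your three proposed repairs closes it: $(1-1/m)^{m-1}=\left(1+\Theta(1/m)\right)/e$ buys only a $1+o(1)$ factor, not $1+\Omega(1)$; the claim that one ``typically'' has more than one uncovered edge is unjustified, since nothing prevents the process from spending essentially the whole phase with a single uncovered edge; and the constants $2e$ and $10e^2$ in $P$ are fixed by the lemma statement, so no slack can be absorbed there.

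The constructive upshot is that your analysis is actually tighter than the paper's and exposes where the paper's factor of two comes from: the paper's per-step bound $\frac{1}{em}$ tacitly ignores the probability $\frac{1}{2}$ of choosing an increment rather than a decrement for the flipped edge (a decrement is never accepted, since it either leaves the solution unchanged when $s_i=0$, or strictly decreases the weight without reducing the number of uncovered edges). If you adopt the paper's bound, the remainder of your argument --- stochastic domination by $\bin\left(P,\frac{1}{em}\right)$ with mean $2\,\mathrm{OPT}+10em$, then Chernoff with $\delta=1/2$ --- goes through verbatim and yields the stated lemma; if one insists on the honest bound $\frac{1}{2em}$, the lemma needs its phase length (or threshold) adjusted by a constant factor. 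Either way, the key missing step in your proposal is not an idea but a commitment: the slack must come from the per-step constant, and with the mutation operator as defined in Algorithm~\ref{alg:WeightedEA} it is only available by the route the paper takes.
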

\begin{proof}
	While we have not reached a maximal solution, there exists at least one edge that can be increased in weight without violating a constraint. Therefore, the probability of an accepted move is at least $\frac{1}{em}$ at each step. We define $X_t$ at a step $t$, such that $X_t=1$ if the move of step $t$ is accepted and $X_t=0$ if it is rejected. Using this definition, the total number of accepted steps, denoted by $X$, is 
	$$X=\sum_{t=1 }^{|P|} X_t\text{.}$$ 
	Using the lower bound on the probability of an accepted move at each step, we find the expected value of $X$ as $E[X]\geq 2\mathrm{OPT} +10em$. By Chernoff bounds, with parameter $\delta=1/2$, the probability of $X<\frac{E[X]}{2}$ is
	$$Pr\left(X<\mathrm{OPT} +5em\right) \leq e^{-E[X]\delta^2/2}= e^{-\Omega(m)}\text{,}$$
	which completes the proof.
\end{proof}

\begin{lemma}
	\label{lem:TotalDecreaseW}
	With  probability $1-e^{-\Omega(m)}$ the total decrement on $G^*$ during all steps of $B$ is upper bounded by $ 4em$.
\end{lemma}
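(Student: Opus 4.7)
I read the lemma as the probabilistic ingredient required for the subsequent theorem, namely that the net weight lost across the steps of $B$ is at most $4em$ with high probability, or equivalently $\sum_{t\in B}(\Delta^t_{w-}-\Delta^t_{w+})\le 4em$. This will feed into the theorem as follows: Lemma~\ref{lem:MinimumChangeOnG*} gives a weight increase of at least one per step in $A\setminus B$, and Lemma~\ref{lem:MinimumAcceptedStepsWholePhase} together with the deterministic bound $|B|\le m$ yields $|A\setminus B|\ge\mathrm{OPT}+4em$ with high probability; combining these with the present lemma the total weight change over the phase would have to exceed $G^*_{\text{start}}\le\mathrm{OPT}$ unless a maximal solution has already been reached, producing the desired contradiction.

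The plan is to combine a deterministic envelope on $|B|$ with a concentration bound on the per-step decrementing mutations. The bound $|B|\le m$ is immediate: every step in $B$ strictly decreases the integer-valued uncovered-edge count, which cannot increase at an accepted step by Lemma~\ref{clm:fitnessConst2} and is at most $m$ to begin with. At each step $t$ the mutation operator of \oneplusone independently decrements each edge's weight with probability $1/(2m)$, so $\Delta^t_{w-}\sim\bin(m,1/(2m))$ has mean $1/2$ and these variables are independent across $t$. Since $\Delta^t_{w+}\ge 0$, it suffices to bound $\sum_{t\in B}\Delta^t_{w-}\le 4em$ with failure probability $e^{-\Omega(m)}$.

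The main technical obstacle is the dependence between the indicator $\mathbf 1_{\{t\in B\}}$ and the mutation at step $t$, since both are measurable with respect to the same random bits and cannot be decoupled naively. I would remove this dependence by the observation that for any adaptive choice of $B$ with $|B|\le m$ one has $\sum_{t\in B}\Delta^t_{w-}$ no larger than the sum of the top $m$ values among the i.i.d.\ sequence $\Delta^1_{w-},\ldots,\Delta^P_{w-}$. This top-$m$ sum is then bounded by fixing a threshold $c=O(1)$, writing $\Delta^t_{w-}\le c+(\Delta^t_{w-}-c)^+$ to split the sum into a deterministic piece $cm$ and a remainder controlled by the number of ``large'' mutations, and applying a Chernoff bound to the latter: because $\bin(m,1/(2m))$ has a Poisson-like right tail and $P=\poly(m,\mathrm{OPT})$, the count $|\{t:\Delta^t_{w-}\ge c\}|$ and the corresponding truncated sum concentrate around their means with failure probability $e^{-\Omega(m)}$, and for $c$ chosen so that $cm\le 4em-o(m)$ (e.g.\ $c=2e$) the whole top-$m$ sum stays below $4em$ with the required probability.
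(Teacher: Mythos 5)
Your reduction is set up correctly (it suffices to bound $\sum_{t\in B}\Delta^t_{w-}\le 4em$ since $\Delta^t_{w+}\ge 0$, the bound $|B|\le m$ holds, and the \emph{proposed} decrements per step are indeed i.i.d.\ $\bin(m,1/(2m))$), and you correctly identify the dependence between membership in $B$ and the step's mutation as the crux. But your resolution of that dependence fails. Replacing the adaptively chosen $B$ by the worst $m$ steps of the phase is a valid pointwise inequality, yet it is far too lossy to give $4em$: the phase has $P=\Theta(\mathrm{OPT}\cdot m+m^2)$ steps, so for any constant threshold $c$ the truncated sum $\sum_{t=1}^P(\Delta^t_{w-}-c)^+$ has expectation $P\cdot\expect{(D-c)^+}=\Theta(P)=\omega(m)$, where $D\sim\bin(m,1/(2m))$ has $\expect{(D-c)^+}\ge \Pr(D\ge c+1)=\Theta(1)$. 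Concentration around this mean (which is what your Chernoff step delivers) certifies a bound of order $P$, not of order $m$, so the step ``the whole top-$m$ sum stays below $4em$'' is quantitatively false. In fact the top-$m$ sum itself exceeds $4em$ with probability $1-e^{-\Omega(m)}$ for large $m$: already for $P\ge m^2$, the number of steps with $D_t\ge k$ is whp larger than $m$ for $k=\Theta(\log m/\log\log m)$ (since $\Pr(D\ge k)\approx 2^{-k}/k!\gg 1/m$ there), so the top-$m$ sum is $\Theta(m\log m/\log\log m)$; and the paper's setting is explicitly pseudo-polynomial, allowing $\mathrm{OPT}$ (hence $P$) exponential in $m$, where the top-$m$ sum blows up to $\Theta(m^2/\log m)$. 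So the worst-case-selection route cannot prove the lemma for any choice of constants.

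The missing idea is that $B$ is not adversarial with respect to the decrement count: a step enters $B$ only because the whole move was \emph{accepted}, and acceptance can skew the mutation law of the remaining bits by at most a constant factor. This is exactly how the paper argues: filtering on the steps where an improving mutation occurs, the move is accepted with probability $P_{Acc}\ge(1-1/m)^{m-1}\ge 1/e$ (it is surely accepted if no extra bit flips), hence for every edge $P(\mathrm{bit}\mid Acc)\le P(\mathrm{bit})/P_{Acc}\le e/m$. Consequently the decrement count at each step of $B$ has conditional mean at most $e$, so $\expect{\sum_{t\in B}\Delta^t_{w-}}\le e|B|\le em$, and a Chernoff bound (the paper takes $\mu=e|B|$ and $\delta=3m/|B|\ge 3$) yields $\Pr\left(\sum_{t\in B}\Delta^t_{w-}\ge 4em\right)\le (e/4)^{3em}=e^{-\Omega(m)}$. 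Note that this bound depends only on $|B|\le m$ and is uniform in the phase length $P$ — precisely the property your approach lacks and the reason the lemma survives the pseudo-polynomial regime.
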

\begin{proof}
	Since at least one edge is covered at each step of $B$, at least one node needs to become tight, which implies that  at least one increasing mutation happens on an edge. Therefore, $\forall t\in B, \Delta^t_{w+} \geq 1$, which implies that $\sum_{t\in B} \Delta^t_{w+} \geq |B|$. 
	
	Now we need to prove that with probability $1-e^{-\Omega(m)}$ we have $\sum_{t\in B} \Delta^t_{w-} \leq 4em$, which implies $\sum_{t\in B} \Delta^t_{w+}-\Delta^t_{w-} \geq |B|-4e m \geq -4em$. In other words we prove that the weight of the solution is not decreased more that $4em$ during steps of $B$. Then using the definition of $G^*$, the statement of the theorem is proved. 
	
	In order to prove that with probability $1-e^{-\Omega(m)}$ we have $\sum_{t\in B} \Delta^t_{w-} \leq 4em$, we use an argument similar to the proof of Lemma~\ref{lem:driftOnC1}. We first need to find an upper bound on the probability of mutating each of the other edges, at the same step that an improving move happens.
	
	Let $I$ be the set of moves in which at least one increasing mutation has happened that makes a node tight; hence, decrease the number of uncovered edges. Since the lemma is about the steps in which the number of uncovered edges decrease, we filter the steps based on this, and from this point of the proof, we only consider the steps in which a move from $I$ has happened. 
	
	Together with the increasing mutation, each other edge can also mutate with probability $1/m$. Therefore, at the same step of the increasing mutation(s), some edges may lose weight, but the whole move may be accepted. With some other mutations at the same step, the constraint of some nodes may be violated and the whole move will be rejected. 
	
	Let $Acc$ denote the event that the whole move in a step is accepted.
	Assuming that one move of $I$ has happened, we want to find a lower bound on the probability that this move is accepted. We know that when the improving mutations happen, if no extra mutations happens,  the move will be accepted. Therefore, with probability at least $(1-1/m)^{m-1}\geq 1/e$, a move from $I$ is accepted.  We denote this probability by 
	\begin{eqnarray}
	\label{eq:probAcceptedWeightedCase}
	P_{Acc}\geq \frac{1}{e}\text{.}
	\end{eqnarray}

	Also let $P(\text{bit}\mid Acc)$ denote the probability of mutating an edge $e$, under the condition that event $Acc$ has occurred. By the definition of conditional probability, we know that 
	$$P(\text{bit}\mid Acc)= \frac{P(\text{bit} \cap P_{Acc})}{P_{Acc}}\leq \frac{P(\text{bit})}{P_{Acc}}\text{.}$$
	Using Equation~\ref{eq:probAcceptedWeightedCase}, we get:
	$$P(\text{bit}\mid Acc)\leq eP(\text{bit} )\text{,}$$
	where $P(\text{bit} )$ is the unconditional probability of flipping $e$, which is $\frac{1}{m}$. This implies that
	\begin{eqnarray}
	%\label{eq:Pbit}
	P(\text{bit}\mid Acc)\leq \frac{e}{m}\text{.}
	\end{eqnarray}
	
	Since there are at most $m-1$ edges other than the the edge with improving move, that can mutate at every step of $B$, the expected number of edges that can lose weight during steps of $B$, is at most $\frac{e(m-1)|B|}{m}\leq e|B|$. In other words 
	$$E\left[\sum_{t\in B} \Delta^t_{w-}\right] \leq e|B|\text{.}$$ 
	
	Now we find the probability that $\sum_{t\in B} \Delta^t_{w-}\geq 4em$. Since $|B|$ is upper bounded by $m$, we have 
	$$Pr\left(\sum_{t\in B} \Delta^t_{w-} \geq 4em\right)<Pr\left(\sum_{t\in B} \Delta^t_{w-} \geq \left(\frac{3m}{|B|}+1\right)\cdot e|B|\right)\text{.}$$
	
	Now we use the Chernoff bounds with parameters $\mu = e|B|$ and $\delta=\frac{3m}{|B|} $ to find an upper bound on the right hand side of the inequality above:
	$$Pr\left(\sum_{t\in B} \Delta^t_{w-} \geq (\delta+1)\cdot \mu\right)\leq \left(\frac{e^\delta}{(1+\delta)^{(1+\delta)}} \right)^\mu\leq  \left(\frac{e}{1+\delta} \right)^{\delta \mu}\text{.}$$
	
	Since $|B|$ is bounded by $m$, we have $\delta\geq 3$, which gives us:
	$$Pr\left(\sum_{t\in B} \Delta^t_{w-} \geq (\delta+1)\cdot \mu\right)\leq  \left(\frac{e}{4} \right)^{\frac{3m}{|B|} (e|B|)}= (e/4)^{3em}\text{.}$$
	
	Therefore, with probability at least $1-(e/4)^{3em}= 1-e^{-\Omega(m)}$ we have $\sum_{t\in B} \Delta^t_{w-} \leq 4em$.
\end{proof}

\begin{theorem}
	\label{thm:WVCP-re-opt-EA}
	Starting with a 2-approximate solution that is a maximal solution for the dual problem, after a dynamic addition or deletion of an edge, in a phase of $P=2e\mathrm{OPT}\cdot m +10e^2m^2$ steps, with probability $1-e^{-\Omega(m)}$, the \oneplusone re-rediscovers a 2\nobreakdash-approximation.
\end{theorem}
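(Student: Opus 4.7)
The plan is to argue by contradiction via a weight-accounting argument that combines the three preparatory lemmata. Suppose that the \oneplusone does not reach a maximal dual solution within the analysed phase of $P = 2e\cdot\mathrm{OPT}\cdot m + 10e^2 m^2$ steps. I intend to exhibit incompatible lower and upper bounds on $|A \setminus B|$, each holding with probability $1 - e^{-\Omega(m)}$, and then conclude via a union bound.

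For the lower bound, Lemma~\ref{lem:MinimumAcceptedStepsWholePhase} yields $|A| \geq \mathrm{OPT} + 5em$ under the failure assumption, with probability $1 - e^{-\Omega(m)}$. Next I would argue that $|B| \leq m$: by Lemma~\ref{clm:fitnessConst2} the number of uncovered edges is non-increasing along accepted moves, and in each step of $B$ it strictly decreases by at least one, so $|B|$ is bounded by the initial number of uncovered edges, which is at most $m$ after the single dynamic change (no further dynamic changes occur in the one-time setting studied here). Consequently $|A \setminus B| = |A| - |B| \geq \mathrm{OPT} + 5em - m > \mathrm{OPT} + 4em$, since $(e-1)m > 0$.

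For the upper bound, Lemma~\ref{clm:fitnessConst1} together with Lemma~\ref{lem:noviolation} guarantees that every intermediate solution is dual-feasible, so by Weak Duality $W(s) \leq W(o^*) \leq \mathrm{OPT}$ throughout the run. Hence the cumulative weight change satisfies
\[
\sum_{t \in A}\bigl(\Delta^t_{w+} - \Delta^t_{w-}\bigr) \;=\; W(s_{\mathrm{final}}) - W(s_0) \;\leq\; \mathrm{OPT}.
\]
Splitting this sum over $A \setminus B$ and $B$, Lemma~\ref{lem:MinimumChangeOnG*} contributes at least $|A \setminus B|$ to the part indexed by $A \setminus B$, while Lemma~\ref{lem:TotalDecreaseW} bounds $\sum_{t \in B} \Delta^t_{w-} \leq 4em$ with probability $1 - e^{-\Omega(m)}$ (and the positive increments over $B$ can only help). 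Combining these pieces gives $|A \setminus B| \leq \mathrm{OPT} + 4em$, contradicting the lower bound.

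A union bound over the two failure events then immediately yields the claimed probability $1 - e^{-\Omega(m)}$ of reaching a maximal dual solution, which is a $2$-approximate weighted vertex cover. The main subtleties I foresee are cleanly bounding $|B|$ and ensuring the weight budget is valid: one must check that the monotonicity of uncovered edges across accepted moves holds starting from the post-dynamic-change solution, and that the inequality $W(s_{\mathrm{final}}) - W(s_0) \leq \mathrm{OPT}$ remains correct even after a deletion has lowered $W(s_0)$ (which only enlarges the available budget and hence does not weaken the argument).
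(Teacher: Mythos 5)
Your proposal is correct and takes essentially the same route as the paper's proof: both combine Lemma~\ref{lem:MinimumAcceptedStepsWholePhase} with $|B|\leq m$ to get $|A\setminus B|\geq \mathrm{OPT}+4em$ with probability $1-e^{-\Omega(m)}$, and then use Lemma~\ref{lem:MinimumChangeOnG*} together with Lemma~\ref{lem:TotalDecreaseW} in the same weight-budget accounting bounded by $\mathrm{OPT}$ via weak duality. The only cosmetic difference is that you phrase the conclusion as an explicit contradiction on $|A\setminus B|$, whereas the paper phrases it as $G^*$ being decreased to $0$ (the maximum dual solution) unless a maximal solution is reached earlier.
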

\begin{proof}
	
	When a new edge is added to the graph, if neither of its nodes are tight in the current solution, then it is not covered by the induced cover set.  Also, when an edge is deleted from the graph, both of its nodes are no longer tight, if the weight of the deleted edge was greater than 0. Therefore, deleting an edge may uncover some other edges. In other words, after both kinds of dynamic changes, the solution does not violate a constraint, but it may not be maximal, and the algorithm needs to find a maximal solution.
	
	Recall that $A$ is the set of all accepted steps during the run of the algorithm and $B\subset A$ is the set of accepted steps in which the number of uncovered edges is decreased. We have $|B|\leq m$, because, according to the fitness function the number of uncovered edges cannot increase during the process. According to Lemma~\ref{lem:MinimumAcceptedStepsWholePhase}, we either reach a maximal solution during phase $P$, in which case the statement of the theorem holds, or with probability $1-e^{-\Omega(m)}$ we have at least $\mathrm{OPT}+5em$ accepted steps. Moreover, we know $|B|\leq m$. Therefore, with probability $1-e^{-\Omega(m)}$ we have $|A\setminus B|\geq \mathrm{OPT+4em}$. 
	
	Also, according to Lemma~\ref{lem:MinimumChangeOnG*}, the minimum increase on the weight of the solution at a step $t\in A\setminus B$ is 1. Therefore, unless we reach a maximal solution, with probability $1-e^{-\Omega(m)}$, the value of $G^*$ is decreased by at least $\mathrm{OPT}+ 4em$ in the phase $P$. Knowing that the initial value of $G^*$ is upper bounded by $OPT$, and the total increment on $G^*$ is upper bounded by $4em$ (Lemma~\ref{lem:TotalDecreaseW}), we can conclude that in the phase $P$, we either reach a maximal solution or a situation where $G^*=0$ which happens when we reach the maximum solution.  In both cases, the obtained solution induces a 2\nobreakdash-approximate vertex cover.
\end{proof}

Note that the proof of Theorem~\ref{thm:WVCP-re-opt-EA} holds with any initial solution that is a feasible solution for the dual problem, and during the process of the algorithm no solution with violations is accepted by the algorithm. This implies that the considered phase, $P$, can be restarted at any point of time; therefore, if we reach the desired solution with probability at least $pr$, we can conclude that the expected time until reaching the desired solution is upper bounded by $\frac{P}{pr}$, which is formalised in the following corollary.

\begin{corollary}
	Starting with a 2-approximate solution that is a maximal solution for the dual problem, after a dynamic addition or deletion of an edge, in expected time $O\left(\mathrm{OPT}\cdot m +m^2\right)$, the \oneplusone re-discovers a solution with the quality of 2\nobreakdash-approximation.
\end{corollary}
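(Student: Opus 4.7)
The plan is to derive the corollary directly from Theorem~\ref{thm:WVCP-re-opt-EA} together with the restart observation made in the paragraph immediately preceding the corollary. The theorem provides that within a single phase of length $P = 2e\,\mathrm{OPT}\cdot m + 10e^2 m^2$ steps, the \oneplusone reaches a solution whose dual value induces a 2\nobreakdash-approximate vertex cover with probability at least $1 - e^{-\Omega(m)}$. What remains is to convert this high-probability guarantee over a fixed-length phase into an expected-time bound.

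First I would verify that phases can be chained independently in the sense needed for a restart argument. By Lemma~\ref{clm:fitnessConst1} and Lemma~\ref{lem:noviolation}, no solution with additional constraint violations is ever accepted and the dynamic modification does not create violations; hence at every point during the run the current solution is a feasible dual solution. This is precisely the hypothesis that Theorem~\ref{thm:WVCP-re-opt-EA} requires of an initial configuration, so if the first phase of $P$ steps fails to produce a maximal dual solution, then the configuration at the end of that phase is again a feasible (non\nobreakdash-maximal) dual solution and the theorem can be reapplied to the next block of $P$ steps.

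Next I would apply a geometric tail bound: let $X$ denote the number of consecutive phases of length $P$ that the algorithm needs before it first rediscovers a 2\nobreakdash-approximation. Each phase succeeds independently with probability at least $p_{\mathrm{succ}} \geq 1 - e^{-\Omega(m)}$, which for sufficiently large $m$ exceeds $1/2$, and in any case is bounded below by some constant. Hence $\expect{X} \leq 1/p_{\mathrm{succ}} = O(1)$, and the total expected number of steps is at most
\begin{equation*}
\expect{X}\cdot P \;\leq\; \frac{P}{p_{\mathrm{succ}}} \;=\; O\!\left(\mathrm{OPT}\cdot m + m^2\right)\text{.}
\end{equation*}
For the small values of $m$ where $e^{-\Omega(m)}$ is not yet tiny, the bound $O(\mathrm{OPT}\cdot m + m^2)$ absorbs any constant blow\nobreakdash-up in $1/p_{\mathrm{succ}}$.

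I do not expect any genuine obstacle: the only subtlety is ensuring that the restart logic is justified, i.e.\ that the algorithm's state after a failed phase still satisfies the premise of Theorem~\ref{thm:WVCP-re-opt-EA}. This is guaranteed by the monotonicity properties of the fitness function (Lemmata~\ref{clm:fitnessConst1} and~\ref{clm:fitnessConst2}), which together imply that the sequence of accepted solutions remains in the feasible region of the dual problem throughout the run. With this in hand the corollary is a short consequence of the theorem and the law of total expectation over the geometric number of restarts.
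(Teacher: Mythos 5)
Your proposal is correct and matches the paper's own reasoning essentially verbatim: the paper likewise notes that feasibility of the dual solution is preserved throughout the run (so Theorem~\ref{thm:WVCP-re-opt-EA} applies from any point, allowing the phase $P$ to be restarted) and concludes an expected time of $\frac{P}{pr} = O\left(\mathrm{OPT}\cdot m + m^2\right)$ via the same geometric-restart argument. Your added remarks on conditional reapplication of the theorem after a failed phase and on small $m$ are fine elaborations of what the paper leaves implicit.
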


\subsection{Weighted Vertex Cover in Probabilistic Dynamic Setting}
\label{sec:WVCDyn2}

In this section we investigate the probabilistic dynamic setting for applying dynamism on the weighted vertex cover problem. Here, with probability $P_D$, a dynamic change happens on the graph at each step. We analyse the behaviour of \RLS (Algorithm~\ref{alg:WeightedRLS}) and  \oneplusone (Algorithm~\ref{alg:WeightedEA}) for $P_D\leq \frac{1}{5w_{\mathrm{max}}em}$, and $P_D\leq \frac{1}{(1+\epsilon)(2e\mathrm{OPT}\cdot m +10e^2m^2)}$, respectively, where $w_{\mathrm{max}}$ is the maximum weight of vertices, $\mathrm{OPT}$ is the weight of the optimal solution for the weighted vertex cover problem, and $\epsilon$ is a small positive constant. 

We first analyse \RLS and start by presenting a lemma that gives us a lower bound on the drift of $G$, $E[\Delta_{t}] = E[G_t-G_{t+1}\mid G_t]$, in this setting. Then we present two theorems that find the expected time that the investigated algorithm need to find 2\nobreakdash-approximate solutions, when the given solution before the dynamic change has a weight of 0 for all the edges (Theorem~\ref{the:wgeneral}) or is a 2\nobreakdash-approximate solution (Theorem~\ref{the:wStartWith2approx}).

\begin{lemma}
	\label{lem:driftWVC2ndSetting}
	Assuming that $P_D\leq \frac{1}{5w_{\mathrm{max}}em}$, at a step $t$ of  \RLS before reaching a maximal solution, the drift on $G$ is $E[\Delta_t]\geq \frac{1}{10em}$.
\end{lemma}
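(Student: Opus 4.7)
The plan is to decompose $E[\Delta_t]$ into a positive contribution coming from the RLS mutation and a bounded negative contribution coming from the dynamic change that occurs independently with probability $P_D$ at the same step, and then to check that the positive part dominates with the required margin.

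First I would argue that, as long as the current solution is not maximal for the dual problem, there is at least one edge whose weight can still be increased without violating any node constraint. For such an edge, both endpoints must be strictly slack (non-tight), so the edge is uncovered in the induced cover. Thus the number $k$ of uncovered edges satisfies $k\geq 1$, and Lemma \ref{lem:wdriftonG} yields that the expected decrease of $G$ contributed by the RLS action alone at step $t$ is at least $\frac{k}{2m}\geq \frac{1}{2m}$.

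Next I would bound, state-independently, how much $G$ can grow from a single dynamic event. If a new edge $e=\{u,v\}$ is added (with initial weight $0$), any growth of $G$ comes from the weight still assignable to $e$, which is at most $\min(\text{slack}(u),\text{slack}(v))\leq w_{\mathrm{max}}$. If instead an edge $e$ of weight $w_e\leq w_{\mathrm{max}}$ is deleted, the newly created slack at the two endpoints is exactly $w_e$ each, so the additional total weight the algorithm could ever add to adjacent edges is at most $2w_e\leq 2w_{\mathrm{max}}$ (this is the same local counting used in the proof of Theorem \ref{thm:WVCP-re-opt}). In both cases a dynamic change increases $G$ by at most $2w_{\mathrm{max}}$. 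Since such a change happens with probability at most $P_D\leq\frac{1}{5w_{\mathrm{max}}em}$, the expected increase of $G$ per step due to dynamism is at most $\frac{2w_{\mathrm{max}}}{5w_{\mathrm{max}}em}=\frac{2}{5em}$.

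Combining the two estimates by linearity of expectation gives
$$E[\Delta_t]\;\geq\;\frac{1}{2m}-\frac{2}{5em}\;=\;\frac{5e-4}{10em}\;\geq\;\frac{1}{10em},$$
using $5e-4>1$. The main subtlety I expect is confirming that the $2w_{\mathrm{max}}$ bound on the $G$-increase is valid from an \emph{arbitrary} (not necessarily maximal) state, rather than only from the maximal starting point considered in Theorem \ref{thm:WVCP-re-opt}; this should go through because the local argument only uses the slack freshly created at the two endpoints of the modified edge, which is controlled by the node weights alone and hence independent of the current value of $G$.
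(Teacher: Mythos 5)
Your proposal is correct and follows essentially the same route as the paper's proof: decompose $E[\Delta_t]$ into the \RLS drift from Lemma~\ref{lem:wdriftonG} plus an expected loss of at most $2w_{\mathrm{max}}\cdot P_D\leq \frac{2}{5em}$ per step from dynamic changes (bounded via the argument of Theorem~\ref{thm:WVCP-re-opt}), then use $k\geq 1$ before maximality. The only differences are cosmetic: you use the sharper bound $\frac{k}{2m}$ from Lemma~\ref{lem:wdriftonG} where the paper conservatively writes $\frac{k}{2em}$ (giving you margin $\frac{5e-4}{10em}$ versus the paper's $\frac{5k-4}{10em}$, both at least $\frac{1}{10em}$), and you explicitly justify that the $2w_{\mathrm{max}}$ bound on the $G$-increase holds from arbitrary non-maximal states, a point the paper leaves implicit but needs (e.g.\ in Theorem~\ref{the:wgeneral}).
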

\begin{proof}
	As proven in Lemma \ref{lem:wdriftonG}, the expected improvement that the two investigated algorithms make on $G$ at each step is at least $\frac{k}{2em}$, where $k$ is the number of uncovered edges. In addition to this improvement, here we need to calculate the expected impact of a dynamic change of the graph on the value of $G$, which we denote by $E[\Delta_D]$. In other words, the total drift on $G$ is:
	\begin{align}
	E[\Delta_t]\geq \frac{k}{2em} + E[\Delta_D]\text{.}
	\label{equ:driftWVC2ndSetting}
	\end{align} 
	
	Dynamically adding or deleting an edge increases the value of $G$ by at most $w_{\mathrm{max}}$ and $2w_{\mathrm{max}}$ respectively (See the argument in the proof of Theorem~\ref{thm:WVCP-re-opt}). Hence we have:
	$$E[\Delta_D]\geq -2w_{\mathrm{max}}\cdot P_D=\frac{-2}{5em}\text{.}$$
	Together with Equation~\eqref{equ:driftWVC2ndSetting}, and the fact that $k\geq 1$ when we have not reached a maximal solution, we can conclude that the total drift on $G$ is: $$E[\Delta_t]\geq\frac{5k-4}{10em}\geq \frac{1}{10em}\text{.}$$
\end{proof}

\begin{theorem}\label{the:wgeneral}
	Consider the dynamic weighted vertex cover problem where an arbitrary edge is dynamically added to or deleted from the graph with probability $P_D\leq \frac{1}{5w_{\mathrm{max}}em}$ at each step. Starting with a solution $s= (0, \cdots, 0)$, \RLS finds a 2\nobreakdash-approximate solution for this problem in expected time $O(m\cdot \mathrm{OPT})$.
\end{theorem}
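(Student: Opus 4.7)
The plan is to track the potential function $G$ introduced in Section~\ref{sec:WVCDyn1}, which measures the maximum total weight that can still be added to the current dual solution $s$ without violating any vertex constraint. Reaching $G = 0$ is equivalent to $s$ being a maximal dual solution, which by weak duality induces a $2$\nobreakdash-approximate weighted vertex cover. Since $s = (0,\ldots,0)$ is a feasible dual solution and, by Lemma~\ref{lem:noviolation} together with Lemma~\ref{clm:fitnessConst1}, neither the acceptance rule of \RLS nor a dynamic edge insertion or deletion can introduce a vertex-weight violation, the process $(G_t)_{t\geq 0}$ stays well-defined throughout and we only have to analyse the hitting time of $\{G_t = 0\}$.

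First I would bound the initial value $G_0$. Starting from the all-zero solution, every element $o \in M_s$ is itself a feasible dual solution, so by the weak duality theorem $G_0 = \max_{o \in M_s} W(o) \leq \mathrm{OPT}$. Next, I would invoke Lemma~\ref{lem:driftWVC2ndSetting}, which already establishes $E[G_t - G_{t+1} \mid G_t] \geq 1/(10em)$ for every state with $G_t \geq 1$; crucially, this drift lower bound has already absorbed the negative effect of a probabilistic dynamic change, thanks to the assumption $P_D \leq 1/(5w_{\mathrm{max}}em)$. Applying the additive drift theorem to the non-negative integer-valued process $G_t$ then gives that the expected first hitting time of $\{G_t = 0\}$ is at most $G_0 \cdot 10em \leq 10em \cdot \mathrm{OPT} = O(m \cdot \mathrm{OPT})$, which is the claim.

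The main obstacle I anticipate is the delicate interaction between the dynamically changing graph (which can in principle enlarge $M_s$ and hence temporarily increase $G$) and the additive drift machinery, which is usually phrased for processes with a fixed upper bound. This is resolved cleanly because the drift estimate in Lemma~\ref{lem:driftWVC2ndSetting} holds uniformly over all states with $G_t > 0$ and already nets the probabilistic dynamic contribution against the deterministic single-bit improvement of \RLS; the standard additive drift theorem for non-negative processes with uniformly positive drift then applies, and no global upper bound on $G_t$ beyond $G_0$ is required for the hitting-time estimate. Once $G_t = 0$ the current dual solution is maximal and therefore induces a $2$\nobreakdash-approximate weighted vertex cover of the current graph, completing the argument.
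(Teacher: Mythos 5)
Your proposal is correct and takes essentially the same approach as the paper: bound the initial value of $G$ by $\mathrm{OPT}$ via weak duality, invoke Lemma~\ref{lem:driftWVC2ndSetting} for the uniform drift of at least $\frac{1}{10em}$ (which already nets the expected damage of a probabilistic dynamic change against the single-edge improvement of \RLS), and conclude by additive drift. Your additional remarks on feasibility preservation and on why the additive drift theorem tolerates occasional increases of $G$ merely make explicit what the paper's two-line proof leaves implicit.
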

\begin{proof}
	%\remark{I guess we can do something about the violation as well. But we are late!}
	
	Since the value of $G$ for an initial solution is less than or equal to $ \mathrm{OPT}$, using Lemma~\ref{lem:driftWVC2ndSetting} and the additive drift argument we conclude that the expected time to achieve a maximal solution is $O(m\cdot \mathrm{OPT})$.
\end{proof}

\begin{theorem}
	\label{the:wStartWith2approx}
	Consider the dynamic weighted vertex cover problem where an arbitrary edge is dynamically added to or deleted from the graph with probability $P_D\leq \frac{1}{5w_{\mathrm{max}}em}$ at each step. Starting with a 2\nobreakdash-approximate solution $s$ that is maximal for the dual problem, \RLS restores the quality of 2-approximation in expected time $O(w_{\mathrm{max}}\cdot m)\text{.}$
\end{theorem}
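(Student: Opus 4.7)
The plan is to reduce this theorem directly to the drift bound that has already been established in Lemma~\ref{lem:driftWVC2ndSetting}, combined with the one-time re-optimisation bound on $G$ from the proof of Theorem~\ref{thm:WVCP-re-opt}. The analysis is essentially a straightforward additive-drift argument once the correct initial potential is identified.

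First I would note that the algorithm is initialised with a maximal dual solution, so $G = 0$ before any dynamic change. By Lemma~\ref{lem:noviolation} no dynamic change introduces a constraint violation, and by Lemmata~\ref{clm:fitnessConst1} and~\ref{clm:fitnessConst2} the algorithm never accepts a move that creates violations or increases the number of uncovered edges. Thus at every step the current solution remains feasible for the dual, and $G$ is well-defined. The only way $G$ can become positive is through a dynamic change; by the argument already given in the proof of Theorem~\ref{thm:WVCP-re-opt}, a single addition or deletion of an edge can increase $G$ by at most $2w_{\mathrm{max}}$.

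The key observation is that the drift bound in Lemma~\ref{lem:driftWVC2ndSetting} is a net bound: it already incorporates the possibility that future probabilistic dynamic changes may increase $G$ during the re-optimisation. Concretely, at any step where the current solution is not yet maximal, $E[\Delta_t] \geq 1/(10em)$, regardless of the history of dynamic changes so far. Hence I can take $G$ itself as the potential function and apply additive drift analysis starting from the first step after the dynamic change that destroys maximality, at which point $G \leq 2w_{\mathrm{max}}$.

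The expected number of steps until the potential first reaches $0$ is then at most
\[
\frac{2w_{\mathrm{max}}}{1/(10em)} \;=\; 20\, e\, w_{\mathrm{max}}\, m \;=\; O(w_{\mathrm{max}} \cdot m),
\]
and at that point the dual solution is maximal and therefore induces a $2$-approximate weighted vertex cover. I do not expect any real obstacle here; the only subtlety worth emphasising is that one must verify that the drift bound of Lemma~\ref{lem:driftWVC2ndSetting} applies uniformly in time (which it does, since its proof compares the algorithm's expected improvement with the worst-case expected damage of a single dynamic change, both of which are memoryless bounds), so that additive drift can be invoked without having to restart the potential each time a new dynamic change occurs.
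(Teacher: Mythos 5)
Your proposal is correct and follows essentially the same route as the paper: the paper likewise bounds the post-change potential by $G\leq 2w_{\mathrm{max}}$ (via the argument in the proof of Theorem~\ref{thm:WVCP-re-opt}) and then applies additive drift with the bound $E[\Delta_t]\geq \frac{1}{10em}$ from Lemma~\ref{lem:driftWVC2ndSetting}, which already accounts for further probabilistic dynamic changes. Your explicit remarks on feasibility being preserved and on the drift bound holding uniformly in time are details the paper leaves implicit, but they do not constitute a different argument.
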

\begin{proof}
	From the argument in the proof of Theorem~\ref{thm:WVCP-re-opt}, we know that dynamically adding or deleting an edge increases the value of $G$ by at most $w_{\mathrm{max}}$ and $2w_{\mathrm{max}}$ respectively. Therefore, for the the initial solution we have $G\leq 2w_{\mathrm{max}}$. Similar to Theorem \ref{the:wgeneral}, using Lemma~\ref{lem:driftWVC2ndSetting} and additive drift, the expected time for \RLS to achieve a maximal solution for the dual problem is $O(w_{\mathrm{max}}\cdot m)$\text{.}
\end{proof}

Now we investigate the behaviour of \oneplusone in the probabilistic dynamic setting. In addition to the definition of $W(s)$ for a solution $s$, we use the definitions of $G^*$, $A$ and $B$ that we had in Section~\ref{sec:WVCDyn1} for analysis of \oneplusone. Since the maximal solution that the algorithm moves towards can change in the middle of the process, similar to the proof of  Theorem~\ref{thm:WVCP-re-opt-EA}, we consider the distance to the maximum dual solution. Therefore, the initial distance, after a dynamic change can be in $\Omega(\mathrm{OPT})$, even when we start with a 2-approximate solution. We here analyse the \oneplusone when the initial solution is a solution that is feasible for the dual problem, which means that the result holds for a solution with weight 0, as well as a solution that had been a maximal dual solution before the dynamic change.

\begin{theorem}
	\label{the:setting2wOneplusone}
	Consider the dynamic weighted vertex cover problem where an arbitrary edge is dynamically added to or deleted from the graph with probability $P_D< \frac{1}{(1+\epsilon)(2e\mathrm{OPT}\cdot m +10e^2m^2)}$ at each step, where $\epsilon>0$ is a small constant. Starting with a solution that is feasible for the dual problem, after a dynamic change \oneplusone  finds a 2-approximate solution in a phase of $P=\alpha (1+\epsilon) (2e\mathrm{OPT}\cdot m +10e^2m^2)$ steps, $\alpha>0$ a positive number, with probability $\left(1-(\frac{e-1}{e})^\alpha\right)\left(1-e^{-\Omega(m)}\right)$.
\end{theorem}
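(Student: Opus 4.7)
Writing $Q := 2e\mathrm{OPT}\cdot m + 10e^2 m^2$ for brevity, the plan is to partition the $P=\alpha(1+\epsilon)Q$ steps into $\alpha$ disjoint sub-phases of length $(1+\epsilon)Q$ each, call a sub-phase \emph{clean} if no dynamic change occurs during its first $Q$ steps, and then invoke Theorem~\ref{thm:WVCP-re-opt-EA} inside any clean sub-phase. The solution held by the algorithm remains feasible for the dual problem throughout its run: the initial solution is feasible by hypothesis, Lemma~\ref{lem:noviolation} shows that dynamic changes preserve feasibility, and Lemma~\ref{clm:fitnessConst1} ensures that \oneplusone never accepts a move that would introduce a constraint violation. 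Thus Theorem~\ref{thm:WVCP-re-opt-EA}, or more precisely its generalisation described in the remark right after it (valid for any feasible dual starting point), can be invoked at the start of any sub-phase.

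A single sub-phase is clean with probability $(1-P_D)^Q$. Using $\ln(1-x)\geq -x/(1-x)$ together with the hypothesis $P_D<\tfrac{1}{(1+\epsilon)Q}$, a short calculation (valid once $m$ is large enough to force $P_D\leq \epsilon/(1+\epsilon)$, which is automatic because $Q$ grows polynomially in $m$) yields $(1-P_D)^Q \geq 1/e$. The $\alpha$ sub-phases concern disjoint time windows, so their clean-events are mutually independent, giving
\[
\Pr[\text{at least one sub-phase is clean}] \;\geq\; 1 - (1-1/e)^\alpha \;=\; 1 - \left(\tfrac{e-1}{e}\right)^\alpha .
\]

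Conditional on a specific sub-phase being clean, the first $Q$ steps of that sub-phase contain no dynamic change and \oneplusone begins them with a feasible dual solution. Theorem~\ref{thm:WVCP-re-opt-EA} then guarantees re-discovery of a 2-approximate solution inside that window with probability $1-e^{-\Omega(m)}$. Since the clean-event depends only on the random dynamic-change indicators, which are independent of the algorithm's internal randomness, multiplying (or applying the law of total probability) delivers the claimed bound $\bigl(1-((e-1)/e)^\alpha\bigr)\bigl(1-e^{-\Omega(m)}\bigr)$.

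The main obstacle is matching the hypotheses of Theorem~\ref{thm:WVCP-re-opt-EA} at the boundary of an arbitrary clean sub-phase. That theorem is stated with a 2-approximate starting solution immediately after a single dynamic change, whereas here many dynamic changes may already have occurred before the chosen clean sub-phase begins, so the solution at its start need not be 2-approximate. The key point, already highlighted in the corollary following Theorem~\ref{thm:WVCP-re-opt-EA}, is that the proof there only uses the invariants $G^{*}\leq \mathrm{OPT}$ and that no constraint violations are ever accepted, both of which continue to hold regardless of the intermediate history. Once this observation is made explicit, the remaining work, including the probability calculation $(1-P_D)^Q\geq 1/e$, is routine.
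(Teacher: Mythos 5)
Your proposal is correct and follows essentially the same route as the paper's own proof: divide the phase into $\alpha$ windows, show each is free of dynamic changes with probability at least $1/e$ via $P_D<\frac{1}{(1+\epsilon)(2e\mathrm{OPT}\cdot m+10e^2m^2)}$, invoke the generalisation of Theorem~\ref{thm:WVCP-re-opt-EA} to arbitrary feasible dual starting solutions inside a clean window, and multiply the two probabilities. Your write-up is in fact slightly more careful than the paper's on two points it leaves implicit --- the exact accounting of sub-phase lengths ($(1+\epsilon)Q$ windows with cleanliness required only on the first $Q$ steps) and the verification that $(1-P_D)^{Q}\geq 1/e$, which genuinely needs the $\epsilon$ slack and $m$ large --- but these are refinements of the same argument, not a different one.
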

\begin{proof}
	From Theorem~\ref{thm:WVCP-re-opt-EA}, we know that starting with a 2-approximate solution, if a dynamic change happens that destroys the quality of 2-approximation, with probability $1-e^{-\Omega(m)}$, the algorithm re-discovers a 2-approximate solution in a phase of $P=2e\mathrm{OPT}\cdot m +10e^2m^2$ steps. The proof of this theorem holds for any initial solution that is a feasible solution for the dual problem. 
	
	Moreover, if the algorithm starts with a solution that is feasible for the dual problem, then it will always be feasible, because the fitness function is defined such that an increase in violation is never accepted by the algorithm, and according to Lemma~\ref{lem:noviolation} dynamic changes do not cause violations either. Therefore, if the initial solution is a feasible solution for the dual problem, then from any point of the process of the algorithm, in a phase of $P=2e\mathrm{OPT}\cdot m +10e^2m^2$ steps with no dynamic changes, with probability $1-e^{-\Omega(m)}$ the algorithm finds a 2-approximate solution.
	
	Since the probability of a dynamic change is $P_D< \frac{1}{(1+\epsilon)P}$ at each step, the probability of not facing any dynamic changes in a phase of $P$ steps is at least
	$$\left(1-P_D\right)^{P}\ge \frac{1}{e}\text{.}$$
	If we have $\alpha$ phases of size $P$, then the probability of facing at least one dynamic change in all of them is at most
	$$\left(1-\frac{1}{e}\right)^\alpha\text{,}$$
	which implies that with probability $1-\left(\frac{e-1}{e}\right)^\alpha$, at least one of these phases does not include a dynamic change, and due to Theorem~\ref{thm:WVCP-re-opt-EA} with probability $1-e^{-\Omega(m)}$, a 2-approximate solution is reached in that phase. 
\end{proof}

%The proof of this theorem is similar to the proof of Theorem~\ref{thm:WVCP-re-opt-EA}, except that at each step, with a probability $P_D$ some dynamic change happens, which increases $G^*$ by a value, bounded above by $\mathrm{OPT}$. In the proof of Theorem~\ref{thm:WVCP-re-opt-EA} we showed that while there exists at least one uncovered edge, the probability of an step to be accepted is at least $\frac{1}{em}$. Therefore, in expectation, there are at most $em$ edges between each two consecutive accepted steps, among which, the expected number of dynamic changes are $emP_D$. As a result, the drift on $G^*$ at an accepted step $t\in A$ would be:
%$$E(\Delta^t)=  E(\Delta^t_{w+}) - E(\Delta^t_{w-}) - emP_D\cdot \mathrm{OPT} \geq \mid A\mid (1-emP_D\cdot \mathrm{OPT}) -em$$
%Since $P_D\leq \frac{1}{2em\cdot \mathrm{OPT}}$ we have
%$$E(\Delta^t)\geq \mid A\mid (1-\frac{1}{2}) -em$$
%With similar argument to what we had in the proof of Theorem~\ref{thm:WVCP-re-opt-EA}, we can find that the expected time until finding a 2-approximate solution  is $O(m\cdot \mathrm{OPT} +m^2)$.
%\end{proof}
The following corollary can be obtained from Theorem~\ref{the:setting2wOneplusone}, in which the expected time to find a 2-approximate solution is found.
\begin{corollary}
	Consider the dynamic weighted vertex cover problem where an arbitrary edge is dynamically added to or deleted from the graph with probability $P_D< \frac{1}{(1+\epsilon)(2e\mathrm{OPT}\cdot m +10e^2m^2)}$ at each step, $\epsilon>0$. Starting with a solution that is feasible for the dual problem,  \oneplusone  finds a 2-approximate solution in expected time $O(\mathrm{OPT}\cdot m +m^2)$.
\end{corollary}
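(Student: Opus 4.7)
The plan is to convert the constant-probability, fixed-length success guarantee of Theorem~\ref{the:setting2wOneplusone} into a bound on the expected runtime via a standard geometric restart argument. The key structural fact that makes this work is that Theorem~\ref{the:setting2wOneplusone} was proven with the sole precondition that the current search point is feasible for the dual problem, and, by Lemma~\ref{lem:noviolation} combined with the definition of the fitness function, the current solution of \oneplusone remains feasible for the dual problem at every step of the process. Hence, the hypothesis of Theorem~\ref{the:setting2wOneplusone} is re-established at the end of every unsuccessful phase, so phases can be concatenated.

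First I would fix $\alpha$ to a constant (say $\alpha = 2$) large enough that $1 - \bigl((e-1)/e\bigr)^\alpha \geq c_1$ for some positive constant $c_1$. For $m$ at least some threshold $m_0$, the second factor $1 - e^{-\Omega(m)}$ is likewise bounded below by a constant $c_2 > 0$, so the per-phase success probability of Theorem~\ref{the:setting2wOneplusone} is at least $p := c_1 c_2 > 0$, independent of $m$ and $\mathrm{OPT}$. Each phase has length
\[
P \;=\; \alpha(1+\epsilon)\bigl(2e\,\mathrm{OPT}\cdot m + 10e^2 m^2\bigr) \;=\; O(\mathrm{OPT}\cdot m + m^2).
\]
Instances with $m < m_0$ contribute only a constant factor and can be absorbed into the big-$O$ bound.

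Second, I would partition the run of \oneplusone into consecutive phases of length $P$. By the previous paragraph, each phase succeeds in producing a 2-approximate solution with probability at least $p$, conditional on the algorithm not having already succeeded, and conditional on the feasibility of the starting solution of that phase (which always holds). Therefore the number $N$ of phases until the first success is stochastically dominated by a geometric random variable with parameter $p$, so $\mathbf{E}[N] \leq 1/p = O(1)$. Applying Wald's identity (or simply linearity of expectation over the deterministic phase length), the expected total number of steps until a 2-approximation is found is at most
\[
\mathbf{E}[N]\cdot P \;=\; O(1)\cdot O(\mathrm{OPT}\cdot m + m^2) \;=\; O(\mathrm{OPT}\cdot m + m^2).
\]

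The only real subtlety is the restart step: one must verify that an unsuccessful phase does not leave the algorithm in a state from which the tail bound of Theorem~\ref{the:setting2wOneplusone} becomes inapplicable. This is precisely the role of the feasibility-preservation observation, which ensures the precondition of that theorem carries over to the next phase regardless of whatever intermediate dynamic changes or rejected mutations have occurred. Once this is in place, the argument is a routine geometric-restart calculation, and no further obstacle arises.
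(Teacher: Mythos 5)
Your proposal is correct and is essentially the argument the paper itself intends: the authors note that feasibility is preserved throughout (Lemma~\ref{lem:noviolation} together with the fitness function), so the phase of Theorem~\ref{the:setting2wOneplusone} can be restarted at any point, and with constant per-phase success probability the expected time is bounded by the phase length divided by that probability. Your explicit choice of constant $\alpha$, the geometric bound on the number of phases, and the multiplication by the deterministic phase length are just a slightly more detailed write-up of the same restart calculation.
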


\section{Conclusion}
\label{sec::conclusion}
In this paper, we have investigated the behaviour of two simple randomised search heuristics, namely \RLS and \oneplusone, on the dynamic vertex cover problem. The expected required time for re-optimization of the classical vertex cover problem after a dynamic change had been analysed in~\cite{UsDVCGecco2015} for these two algorithms with an edge-based representation. In this paper we have improved the results of the analysis of \oneplusone. Moreover, we have investigated the probabilistic dynamic setting for applying dynamism to the problem, in which the instance of the problem is subject to a change at each step with probability $P_D$. We have proved that starting from an arbitrary solution, \oneplusone achieves a maximal matching which is a $2$\nobreakdash-approximate solution in expected time $\mathcal{O}(m\log{m})$, when the dynamic changes take place with probability $P_D\leq \frac{1}{2000em}$. Furthermore, with the same setting, we have proved that the  investigated algorithm restores the quality of $2$\nobreakdash-approximation, in expected time of $O(m)$.

After the analysis for the classical vertex cover problem, we have turned our attention to the weighted vertex cover problem, in which weights of at most $w_{\mathrm{max}}$ are assigned to vertices. We have used a similar edge-based representation and investigated both dynamic settings for this problem for \RLS and \oneplusone. In the one-time dynamic setting, assuming that the algorithms are initialized with a 2\nobreakdash-approximate solution before a dynamic change happens on the graph, we have found the expected times of $O(w_{\mathrm{max}}\cdot m)$ and  $O(\mathrm{OPT}\cdot m+ m^2)$ to restore the quality of $2$\nobreakdash-approximation after the change for \RLS and \oneplusone, respectively. In the probabilistic dynamic setting,  for \RLS and \oneplusone, we have assumed that a dynamic change happens at each step with probability $P_D\leq \frac{1}{5w_{\mathrm{max}}em}$ and  $P_D< \frac{1}{(1+\epsilon)(2e\mathrm{OPT}\cdot m +10e^2m^2)}$, respectively. With these assumptions we have proved that starting from a 2\nobreakdash-approximate solution, \RLS re-\nobreakdash optimises the solution in expected times $O(w_{\mathrm{max}}\cdot m)$. We have also proved that in this setting, starting from a solution that assigns a weight of 0 to all edges, \RLS and \oneplusone find a 2\nobreakdash-approximate solution in $O(m\cdot \mathrm{OPT})$ and $O(m\cdot \mathrm{OPT}+ m^2)$, respectively, where $\mathrm{OPT}$ is the weight of the optimal vertex cover.

\section*{Acknowledgment}
This research has been supported by the Australian Research Council (ARC) grants DP140103400 and DP160102401.

%% The Appendices part is started with the command \appendix;
%% appendix sections are then done as normal sections
%% \appendix

%% \section{}
%% \label{}

%% If you have bibdatabase file and want bibtex to generate the
%% bibitems, please use
%%


\begin{thebibliography}{00}
	
	\bibitem{branke2012evolutionary}
	J.~Branke.
	\newblock {\em Evolutionary optimization in dynamic environments}, volume~3.
	\newblock Springer Science \& Business Media, 2012.
	
	\bibitem{DoerrStepSizeRandom2016}
	B.~Doerr, C.~Doerr, and T.~Koetzing.
	\newblock The right mutation strength for multi-valued decision variables.
	\newblock In {\em Proceedings of the Genetic and Evolutionary Computation
		Conference 2016}, GECCO '16,  1115--1122,  ACM, 2016.
	
	\bibitem{WegenerStepSize2010}
	M.~Dietzfelbinger, J.~E. Rowe, I.~Wegener, and P.~Woelfel.
	\newblock Tight bounds for blind search on the integers and the reals.
	\newblock {\em Combinatorics, Probability and Computing}, 19(5-6):711-728,
	2010.
	
	\bibitem{DrosteDynamic}
	S.~Droste.
	\newblock Analysis of the (1+1) {EA} for a dynamically changing onemax-variant.
	\newblock In {\em Evolutionary Computation, 2002. CEC'02. Proceedings of the
		2002 Congress on}, volume~1, pages 55--60. IEEE, 2002.
	
	\bibitem{eiben2003introduction}
	A.~E. Eiben, J.~E. Smith, et~al.
	\newblock {\em Introduction to evolutionary computing}, volume~53.
	\newblock Springer, 2003.
	
	\bibitem{DBLP:journals/ec/FriedrichHHNW10}
	T.~Friedrich, J.~He, N.~Hebbinghaus, F.~Neumann, and C.~Witt.
	\newblock Approximating covering problems by randomized search heuristics using
	multi-objective models.
	\newblock {\em Evolutionary Computation}, 18(4):617--633, 2010.
	
	\bibitem{DBLP:conf/foga/JansenOZ13}
	T.~Jansen, P.~S. Oliveto, and C.~Zarges.
	\newblock Approximating vertex cover using edge-based representations.
	\newblock In {\em Proceedings of the twelfth workshop on Foundations of Genetic
		Algorithms XII}, pages 87--96. ACM, 2013.
	
	\bibitem{WittDynamic}
	T.~K{\"o}tzing, A.~Lissovoi, and C.~Witt.
	\newblock (1+1) {EA} on generalized dynamic onemax.
	\newblock In {\em Proceedings of the 2015 ACM Conference on Foundations of
		Genetic Algorithms XIII}, pages 40--51. ACM, 2015.
	
	\bibitem{KratschFrank2013VertexCover}
	S.~Kratsch and F.~Neumann.
	\newblock Fixed-parameter evolutionary algorithms and the vertex cover problem.
	\newblock {\em Algorithmica}, 65(4):754--771, 2013.
	
	\bibitem{NeumannWitt2015DynamicMakespan}
	F.~Neumann and C.~Witt.
	\newblock On the runtime of randomized local search and simple evolutionary
	algorithms for dynamic makespan scheduling.
	\newblock In {\em IJCAI}, pages 3742--3748, 2015.
	
	\bibitem{nguyen2012evolutionary}
	T.~T. Nguyen, S.~Yang, and J.~Branke.
	\newblock Evolutionary dynamic optimization: A survey of the state of the art.
	\newblock {\em Swarm and Evolutionary Computation}, 6:1--24, 2012.
	
	\bibitem{nguyen2012continuous}
	T.~T. Nguyen and X.~Yao.
	\newblock Continuous dynamic constrained optimization—the challenges.
	\newblock {\em IEEE Transactions on Evolutionary Computation}, 16(6):769--786,
	2012.
	
	\bibitem{DBLP:journals/tec/OlivetoHY09}
	P.~S. Oliveto, J.~He, and X.~Yao.
	\newblock Analysis of the (1+1)-{EA} for finding approximate solutions to
	vertex cover problems.
	\newblock {\em IEEE Transactions on Evolutionary Computation},
	13(5):1006--1029, 2009.
	
	\bibitem{oliveto2015analysis}
	P.~S. Oliveto and C.~Zarges.
	\newblock Analysis of diversity mechanisms for optimisation in dynamic
	environments with low frequencies of change.
	\newblock {\em Theoretical Computer Science}, 561:37--56, 2015.
	
	\bibitem{Us2017FogaDualVC}
	M.~Pourhassan, T.~Friedrich, and F.~Neumann.
	\newblock On the use of the dual formulation for minimum weighted vertex cover
	in evolutionary algorithms.
	\newblock In {\em Proceedings of the 14th ACM/SIGEVO Conference on Foundations
		of Genetic Algorithms}, pages 37--44. ACM, 2017.
	
	\bibitem{UsDVCGecco2015}
	M.~Pourhassan, W.~Gao, and F.~Neumann.
	\newblock Maintaining 2\nobreakdash-approximations for the dynamic vertex cover problem
	using evolutionary algorithms.
	\newblock In {\em Proceedings of the 2015 Annual Conference on Genetic and
		Evolutionary Computation}, pages 903--910. ACM, 2015.
	
	\bibitem{DynamicVertexCoverFOCI2017}
	M.~Pourhassan, V.~Roostapour, and F.~Neumann.
	\newblock Improved runtime analysis of {RLS} and (1+1) {EA} for the dynamic
	vertex cover problem.
	\newblock In {\em Computational Intelligence (SSCI), 2017 IEEE Symposium Series
		on}, pages 1--6. IEEE, 2017.
	
	\bibitem{PPSN2016WeightedVCP}
	M.~Pourhassan, F.~Shi, and F.~Neumann.
	\newblock Parameterized analysis of multi-objective evolutionary algorithms and
	the weighted vertex cover problem.
	\newblock In {\em International Conference on Parallel Problem Solving from
		Nature}, pages 729--739. Springer, 2016.
	
	\bibitem{rohlfshagen2009dynamic}
	P.~Rohlfshagen, P.~K. Lehre, and X.~Yao.
	\newblock Dynamic evolutionary optimisation: an analysis of frequency and
	magnitude of change.
	\newblock In {\em Proceedings of the 11th Annual conference on Genetic and
		evolutionary computation}, pages 1713--1720. ACM, 2009.
	
	\bibitem{FengLinearFunctionDynamicConstraintsGecco2017}
	F.~Shi, M.~Schirneck, T.~Friedrich, T.~K{\"o}tzing, and F.~Neumann.
	\newblock Reoptimization times of evolutionary algorithms on linear functions
	under dynamic uniform constraints.
	\newblock In {\em Proceedings of the Genetic and Evolutionary Computation
		Conference}, pages 1407--1414. ACM, 2017.
	
	\bibitem{StanhopeDynamic}
	S.~A. Stanhope and J.~M. Daida.
	\newblock (1+1) genetic algorithm fitness dynamics in a changing environment.
	\newblock In {\em Evolutionary Computation, 1999. CEC 99. Proceedings of the
		1999 Congress on}, volume~3, pages 1851--1858. IEEE, 1999.
	
	\bibitem{ApproximationAlgorithms}
	V.~V. Vazirani.
	\newblock {\em Approximation algorithms}.
	\newblock Springer Science \& Business Media, 2013.
	
	\bibliographystyle{elsarticle-num}
\end{thebibliography}
\end{document}